\newcommand{\ff}{\mathsf{f}}
\newcommand{\LL}{\mathsf{L}}
\newcommand{\CC}{\mathsf{C}}
\newcommand{\CCeps}{\mathsf{C}^*_{\eword}}
\newcommand{\infix}{\mathsf{in}}
\newcommand{\prefix}{\mathsf{pre}}
\newcommand{\subsequence}{\mathsf{sub}}
\newcommand{\supersequence}{\mathsf{sup}}
\newcommand{\leftextension}{\mathsf{lext}}
\newcommand{\extension}{\mathsf{ext}}
\newcommand{\infixrel}{\preceq_{\infix}}
\newcommand{\prefixrel}{\preceq_{\prefix}}
\newcommand{\subsequencerel}{\preceq_{\subsequence}}
\newcommand{\supersequencerel}{\preceq_{\supersequence}}
\newcommand{\leftextensionrel}{\preceq_{\leftextension}}
\newcommand{\extensionrel}{\preceq_{\extension}}
\newcommand{\eword}{\varepsilon}
\newcommand{\relSet}{\Lambda}
\newcommand{\cond}{\textsf{cond}}
\newcommand{\loops}{\textsf{loops}}
\newcommand{\roots}{\textsf{roots}}
\newcommand{\SCC}{\textsf{SCC}}
\newcommand{\inarray}{\textsf{in}}
\newcommand{\markarray}{\textsf{mark}}
\newcommand{\maxsup}{\textsf{MaxSup}}
\newcommand{\minsub}{\textsf{MinSub}}
\theoremstyle{plain}
\newtheorem{theorem}{Theorem}[section]
\newtheorem{proposition}[theorem]{Proposition}
\newtheorem{lemma}[theorem]{Lemma}
\newtheorem{definition}[theorem]{Definition}
\newtheorem{claim}[theorem]{Claim}
\theoremstyle{remark}
\begin{document}

\title{Linear Time Subsequence and Supersequence Regex Matching}

\author[1]{Antoine Amarilli}
\author[2]{Bart\l{}omiej Dudek}
\author[3]{Florin Manea}
\author[3]{Tina Ringleb}
\author[4]{Markus L. Schmid}

\affil[1]{Univ. Lille, Inria, CNRS, Centrale Lille, UMR 9189 CRIStAL, F-59000 Lille, France, \texttt{a3nm@a3nm.net}}

\affil[2]{University of Wroc\l{}aw, Institute of Computer Science, Joliot-Curie 15, PL-50-383 Wroc\l{}aw, Poland, \texttt{bartlomiej.dudek@cs.uni.wroc.pl}}

\affil[3]{University of G\"ottingen, Institute for Computer Science and CIDAS, D-37077 G\"ottingen, Germany, \texttt{\{florin.manea, tina.ringleb\}@cs.uni-goettingen.de}}

\affil[4]{Humboldt-Universit\"at zu Berlin, Unter den Linden 6, D-10099, Berlin, Germany, \texttt{MLSchmid@MLSchmid.de}}

\maketitle

%\hideLIPIcs

\begin{abstract}
It is well-known that checking whether a given string $w$ matches a given
  regular expression~$r$ can be done in quadratic time $O(|w|\cdot |r|)$ and
  that this cannot be improved to a truly subquadratic running time of
  $O((|w|\cdot |r|)^{1-\epsilon})$
assuming the strong exponential time hypothesis (SETH).
  We study the related problem that asks whether $w$ has a
  \emph{subsequence} that matches~$r$, and we show that surprisingly this task
  admits an algorithm that runs in linear time, i.e., in $O(|w| + |r|)$. We further show that the same holds if we ask for
  a supersequence instead of a subsequence. Moreover, we show that the
  \emph{quantitative} problems of computing a longest
  subsequence or shortest supersequence of $w$ that matches $r$ can be
  solved with the same complexity as the classical longest common
  subsequence or shortest common supersequence problems, i.e.,
  in $O(|w|\cdot |r|)$, 
  and conditionally not in $O((|w|\cdot|r|)^{1 - \epsilon})$.

  By contrast, if instead of subsequences or
  supersequences we consider other 
  string relations like the infix,
  prefix, left-extension, or extension relations, then all the corresponding
  problems (both quantitative and non-quantitative) have the same complexity as classical regex matching, i.e.,
  they can also be solved in $O(|w|\cdot |r|)$, but not in
  $O((|w|\cdot|r|)^{1 - \epsilon})$ assuming SETH.

  We last study the complexity of the \emph{universal} problem that
  asks if \emph{all} subsequences (or supersequences, infixes, prefixes,
  left-extensions or extensions) of an input string satisfy a given regular
  expression. For these problems, we show polynomial upper bounds (along with matching conditional lower bounds) for the infix and prefix relations, but PSPACE-completeness for the extension, left-extension and supersequence relations, and coNP-completeness for the subsequence relation.
\end{abstract}

\section{Introduction}

Regular expressions (also called regex) were first introduced by Kleene in
1956~\cite{Kleene1956} as a theoretical concept and quickly found their way into
practice with the classical construction by Thompson~\cite{Thompson1968}.
Nowadays, they are a standard tool for text processing and data retrieval tasks,
and they constitute computational primitives in virtually all modern programming
languages (see the standard textbook~\cite{Friedl2006}). The most important
problem about regular expressions is their \emph{matching problem}, i.e.,
checking whether a given regular expression $r$ matches a given string $w$. One
of the two main algorithmic approaches to this problem is still Thompson's
original construction: transform $r$ into a nondeterministic finite automaton
with $\eword$-transitions (or $\eword$NFA) $A$ in linear time, and
then simulate $A$ on $w$ in time $O(|w|\cdot |A|)$. (The other approach is to
transform $r$ into a deterministic finite automaton or DFA, whose size is generally exponential in~$r$.) 

In addition to their well-known applications as text analysis tools, regular
expressions are also used in many different areas and are at the core of several
data management principles: graph databases~\cite{AnglesEtAl2017}, information
extraction~\cite{FaginEtAl2015}, complex event processing~\cite{GrezEtAl2021},
network monitoring~\cite{LiuNorige2019}, financial fraud
detection~\cite{SnyderKanich2015}, infrastructure security~\cite{KumarEtAl2007},
etc. The problem of regular expression matching (and the strongly related problem of NFA acceptance) is also one of the base problems investigated in fine-grained complexity~\cite{backurs2016regular, BringmannEtAl2024, BringmannEtAl2017}. A classical result in this area is that the long-standing quadratic upper bound of $O(|w|\cdot |r|)$ for matching is optimal in the sense that it cannot be improved to truly subquadratic complexity of $O((|w|\cdot |r|)^{1 - \epsilon})$ unless the \emph{strong exponential time hypothesis} (SETH) fails: see~\cite{backurs2016regular}.\footnote{Note that a running time $O(m \cdot n)$ for input sizes $m$ and $n$ is also called \emph{rectangular} in the literature to distinguish from a quadratic running time of the form $O(n^2)$. In this paper, we use nevertheless the term \emph{quadratic} for running time $O(m \cdot n)$ and \emph{subquadratic} for running time $O((m \cdot n)^{1-\epsilon})$.}

We contribute to the research on regex matching by exploring the following new angle. Instead of asking whether the whole input string $w$ matches the regex $r$, we ask whether $w$ has a \emph{subsequence} that matches $r$; or a \emph{supersequence} that matches $r$; or, more generally, whether $r$ matches some string $u$ with $u \preceq w$, where $\preceq$ is a fixed string relation (and classical regex matching is then the case when $\preceq$ is the equality relation). While technically any possible string relation $\preceq$ instantiates a \mbox{$\preceq$-version} of the regex matching problem, we focus on relations that are relevant for string matching: the subsequence relation ($x_1 x_2 \ldots x_n \subsequencerel w \Leftrightarrow w = w_0 \, x_1 \, w_1 \, x_2 \ldots x_n \, w_n$), supersequence relation ($u \supersequencerel w \Leftrightarrow w \subsequencerel u$), infix relation ($u \infixrel w \Leftrightarrow w = v u v'$), prefix relation ($u \prefixrel w \Leftrightarrow w = u v$), left-extension relation ($u \leftextensionrel w \Leftrightarrow u = v w$), and extension relation ($u \extensionrel w \Leftrightarrow w \infixrel u$).

\subsection{Our Results}
Our main focus is on the subsequence and supersequence relations, and our first contribution is to show that the complexity of regular expression matching can be substantially improved when performed with these relations. Namely, we show that given a regex $r$ and string $w$, we can check in time $O(|w| + |r|)$ whether some subsequence of $w$ matches $r$, or whether some supersequence of $w$ matches $r$. We believe that this tractability result is quite surprising, and it turns out to be in strong contrast to the same problem for other string relations (infix, prefix, left-extension and extension), or for regex matching in the usual sense. Indeed, in all these cases, we show that the complexity of $O(|w| \cdot |r|)$ is optimal in the sense explained above (assuming SETH).

Our linear time upper bound for subsequence and supersequence matching is achieved by transforming the regex $r$ into an $\eword$NFA $A$ that accepts $w$ if and only if $w$ has a subsequence (or supersequence) that matches $r$, following a known construction in the context of so-called upward and downward closures~\cite{bachmeier2015finite}. We then exploit special properties of $A$ that allow us to decide whether it accepts $w$ in linear time.
It is then not very difficult to improve the bound and make it linear in the input word: however, more effort is needed to also ensure that the complexity is linear in the regex (in particular in the setting where the input alphabet is not constant), and achieve the optimal complexity of $O(|w| + |r|)$.

\begin{table}

\begin{tabularx}{\linewidth}{XXXX}\toprule
   & \textsf{$\preceq$-matching} & \textsf{min\slash max-variant} & \textsf{universal-variant} \\\midrule
   subsequence & $O(|w| + m)$ & $O(|w| m)$ & coNP \\
  supersequence & $O(|w| + m)$ & $O(|w| m)$ & PSPACE  \\
  infix& $O(|w| m)$ & $O(|w| m)$ & $O(|w|^2 m)$ \\
  prefix & $O(|w| m)$ & $O(|w| m)$ & $O(|w| m)$ \\
  extension& $O(|w| m)$ & $O(|w| m)$ & PSPACE  \\
  left-extension& $O(|w| m)$ & $O(|w| m)$ & PSPACE  \\\bottomrule
\end{tabularx}

\caption{Upper bounds for the different problem variants. The problem $\preceq$-matching is to check whether there is a string $u \preceq w$ that matches the regex, where $\preceq$ is one of the string relations displayed in the first column; the min\slash max-variant computes a smallest\slash longest such string, and the universal variant checks if all strings $u \preceq w$ match the regex. Note that $m$ is the size of the $\eword$NFA (or, equivalently, the regular expression).}\label{table:resultTablesUpperBounds}

\bigskip

\begin{tabularx}{\linewidth}{XXXX}\toprule
   & \textsf{$\preceq$-matching} & \textsf{min\slash max-variant} & \textsf{universal-variant} \\\midrule
   subsequence & --- & no $O((|w| m)^{1 - \epsilon})$ & coNP-hard \\
  supersequence & --- & no $O((|w| m)^{1 - \epsilon})$ & PSPACE-hard  \\
  infix & no $O((|w| m)^{1 - \epsilon})$ & no $O((|w| m)^{1 - \epsilon})$ & no $O(|w|^{2-\epsilon} \mathrm{poly}(m))$ \\
  prefix & no $O((|w| m)^{1 - \epsilon})$ & no $O((|w| m)^{1 - \epsilon})$ & no $O((|w| m)^{1 - \epsilon})$ \\
  extension& no $O((|w| m)^{1 - \epsilon})$ & no $O((|w| m)^{1 - \epsilon})$ & PSPACE-hard  \\
  left-extension& no $O((|w| m)^{1 - \epsilon})$ & no $O((|w| m)^{1 - \epsilon})$ & PSPACE-hard  \\\bottomrule
\end{tabularx}
%
%\end{center}
\caption{Conditional lower bounds for the different problem variants. All lower bounds are conditional on SETH. Note that $m$ is the size of the $\eword$NFA (or, equivalently, the regular expression).}\label{table:resultTablesLowerBounds}
\end{table}

Motivated by this positive algorithmic result, we investigate a natural generalisation of the matching problem: compute a maximum-length\slash minimum-length string $u$ with $u \preceq w$ that matches $r$. However, it turns out that our linear time algorithm for subsequence and supersequence regex matching does not generalise to this stronger problem, and we can in fact show that subquadratic algorithms for these min- and max-variants are conditionally not possible. 
More precisely,
for the max-variant of the subsequence relation or the min-variant of the supersequence relation, 
we show that truly subquadratic algorithms would imply 
truly subquadratic algorithms for the longest common subsequence or the shortest common supersequence problems, which is impossible under SETH (see~\cite{BringmannKunnemann2015,AbboudEtAl2015}). 
For the min-variant of the subsequence relation or the max-variant of the supersequence relation, we show that a truly subquadratic algorithm would imply a faster algorithm for the Orthogonal Vectors problem, which is also impossible under SETH \cite{abboud2018more,kOVBase}.
On the positive side, however, we show that all these problems can be solved
within the original matching complexity of 
$O(|w|\cdot |r|)$, in particular generalising the fact that longest common subsequences and shortest common supersequences can be computed within this bound. We also show 
that the same upper bounds apply for the infix, prefix, and (left-)extension relations, and that they are all optimal under SETH.

In this regard, it is worth noting that both the max-variant for the subsequence relation and the min-variant for the supersequence relation properly extend the classical regex matching problem, since a matching maximum-length subsequence $u$ or minimum-length supersequence $v$ equals $w$ if and only if $w \in \LL(r)$. In fact, we could interpret $u$ and $v$ as a fuzzy measure for regex matching: The tighter $w$ is sandwiched in between $u$ and $v$, the more it can be interpreted as matching the regex $r$, where the optimum $u = v = w$ is reached when $w \in \LL(r)$.

Finally, we investigate the complexity of checking whether \emph{all} strings $u$ with $u \preceq w$ match~$r$. While it is easy to see that this can be solved efficiently for the infix and prefix relations, it becomes intractable for the (left-)extension, subsequence and supersequence relations. %
We pinpoint the complexity of all these variants (conditionally on SETH).

See Tables~\ref{table:resultTablesUpperBounds}~and~\ref{table:resultTablesLowerBounds} for an overview of all our upper and lower complexity bounds.

\subsection{Motivations and Related Work}
Subsequences and supersequences play an important role in many different areas of theoretical computer science: in formal languages and logics (e.g., piecewise testable languages~\cite{simonPhD,Simon72,journals/lmcs/KarandikarS19,PraveenEtAl2024}, or subword order and downward closures~\cite{HalfonSZ17,KuskeZ19,Kuske20,Zetzsche16}), in combinatorics on words~\cite{RigoS15,FreydenbergerGK15,Seki12,Mat04,Salomaa05,SchnoebelenVeron2023} or combinatorial pattern matching \cite{Hirschberg77,HuntS77,Maier:1978,SimonWords,DayFKKMS25,GawrychowskiKKM21}, for modelling concurrency~\cite{Riddle1979a, Shaw1978, BussSoltys2014}, in fine-grained complexity~\cite{DBLP:conf/fsttcs/BringmannC18,BringmannK18,AbboudEtAl2015,AbboudEtAl2014,PurtzelWeidlich2025}, etc.  See also the surveys \cite{BergrothHR00,CrochemoreMT03,abs-2208-14722} and the references therein. Moreover, algorithmic problems related to the analysis of the set of subsequences of the strings of a formal language, given as a grammar or automaton, are studied in \cite{AdamsonFHKMN25} and \cite{FazekasKMMS24}. Closer to our topic, matching regular expressions to the subsequences of a string is an important paradigm in event stream processing, where we receive a stream of events that result from different processes, and where consecutive events from the same process are represented as a subsequence in this string (see, e.g., \cite{ArtikisEtAl2017,GiatrakosEtAl2020,ZhangEtAl2014,KleestMeissnerEtAl2021,Kleest-MeissnerEtAl23,FrochauxKleestMeissner2023, GrezEtAl2021}). %

Our results can also be interpreted as providing a new angle to study the fine-grained complexity of the regex matching problem. In particular, the classical paper of Backurs and Indyk~\cite{backurs2016regular}, which initiated the study of the fine-grained complexity of regex matching, investigates the following question: for which regex classes can the quadratic upper bound for matching be improved, ideally to linear time $O(|w| + |r|)$, and for which classes is this impossible (assuming SETH). This question is further studied in~\cite{BringmannEtAl2017}. 
In these works, the restricted classes are defined as \emph{homogeneous} regex, i.e., where the operators used at each depth of the formula are all equal. 
In the present paper, we carry out a similar study, but instead of restricted regex classes, we consider different matching paradigms: we identify string relations for which regex matching becomes linear time solvable and others for which this is conditionally not the case. 
However, we can also understand our results as giving a linear time bound on the class of regex defining so-called \emph{upward-closed} and \emph{downward-closed} languages, as we will explain later.

In addition to investigating how the regex matching problem could be restricted in order to obtain subquadratic variants (either classically by restricting the class of input regex or, as we do, by restricting the matching paradigm), it is also interesting to ask what kind of other problems for a given regular expression and a string can be solved within the original matching complexity of $O(|w| \cdot |r|)$. In this regard, our results demonstrate that in addition to checking $w \in \LL(r)$ there are many other relevant tasks that can be solved without an increase in complexity: checking $w$ for an infix, prefix, left-extension or extension that matches $r$, and computing a longest or shortest subsequence, supersequence, infix, prefix, left-extension or extension of $w$ that matches $r$. Furthermore, checking if all prefixes of $w$ match $r$ can also be done in quadratic time, but for infixes, this becomes cubic, while for subsequences, supersequences and extensions we even obtain intractability. The relevance of these upper bounds is also substantiated by the tight (conditional) matching lower bounds that we show to complement them.

To prove our linear upper bound for the subsequence regex matching problem, we first transform the regular expression into an $\eword$NFA $A_{\subsequence}$ that accepts the \emph{upward closure} of $\LL(r)$, i.e., the set $\{u \mid \exists v \in \LL(r): v \subsequencerel u\}$ (note that $w$ has a subsequence that matches $r$ if and only if $w \in \LL(A_{\subsequence})$). Similarly, the upper bound for the supersequence regex matching problem is based on constructing an $\eword$NFA $A_{\supersequence}$ for the \emph{downward closure} $\{u \mid \exists v \in \LL(r): v \supersequencerel u\}$. The downward and upward closures are well-investigated concepts in formal language theory (see citations above). In particular, it has been noted in~\cite[Lemma 9]{bachmeier2015finite} that the powerset-DFA of $A_{\subsequence}$ always transitions from a state set to a superset of this state set (analogously, the powerset-DFA of $A_{\supersequence}$ always transitions from a state set to a subset of this state set). However, this property does not imply that the powerset-DFA of $A_{\subsequence}$ and $A_{\supersequence}$ is necessarily small, since~\cite{okhotin2010state} proves an exponential lower bound on their number of states. 
Our notion of subsequence or supersequence matching is related to so-called \emph{upward-closed} and \emph{downward-closed} languages (which are languages equal to their upward closure or downward closure, respectively), because for such languages the usual notion of regex matching coincides with subsequence and supersequence matching, respectively. These languages have been investigated, e.g., for downward-closed languages in \cite{ganardi2024directed} or in~\cite{abdulla1998fly} (under the name ``simple regular expressions'') and for upward-closed languages in~\cite{goubault2020ideal}. However, to our knowledge, none of the works focusing on the upward or downward closure, or on upward-closed or downward-closed languages, have investigated the complexity of the matching problem like we do. Thus it was not previously known that these problems could be solved in time $O(|w|+|r|)$, or indeed that their complexity was lower than that of standard regex matching.

\subsection{Paper Structure} 

In Section~\ref{sec:prelim}, we give preliminaries and formally define the matching problems that we study. In
Section~\ref{sec:naive}, we first recall some basics about the state-set simulation of $\eword$NFAs and then explain how our regex matching problems for the subsequence and supersequence relation reduce to the state-set simulation of certain $\eword$NFAs. Then, in Section~\ref{sec:subAndsupersequence}, we give the corresponding linear time algorithms. Sections~\ref{sec:weighted}~and~\ref{sec:universal} are respectively concerned with the quantitative and universal problem variants for all considered string relations. The conditional lower bounds that complement our upper bounds are discussed in Section~\ref{sec:fineGrained}. %

The present paper is the full version of the conference
paper~\cite{AmarilliEtAl2025}. It includes all details and the complete proofs.
Furthermore, it also contains improvements of the lower bounds for the
min-variant of the subsequence matching problem, for the max-variant of the
supersequence matching problem, and for the universal variant of the infix
matching problem.
These improvements ensure that all complexity upper bounds given in the present version of the paper are now complemented by matching lower bounds (assuming SETH).

\section{Preliminaries and Problem Statement}
\label{sec:prelim}

\subsection{Strings, Regular Expressions and Automata}
We let $\Sigma$ be a finite alphabet of symbols (sometimes called
letters)
and write $\Sigma^*$ for the set of strings (sometimes called words) over $\Sigma$. We write $\eword$ for the empty string. For a string $w \in \Sigma^*$, we denote by $|w|$ its length and, for every $i \in \{1, 2, \ldots, |w|\}$, we denote by $w[i]$ the $i^{\text{th}}$ symbol of $w$. For $i, j \in \{1, 2, \ldots, |w|\}$ with $i \leq j$, we denote by $w[i:j]$ the factor (also called infix) $w[i]w[i+1]\cdots w[j]$; in particular, $w[i:i] = w[i]$. Further, for every letter $a\in\Sigma$ we define $\lVert w\rVert_a$ as the number of occurrences of letter $a$ in the string $w\in\Sigma^\ast$. Lastly, we denote by $w^R=w[|w|]w[|w|-1]\cdots w[1]$ the reverse of $w$.

\emph{Regular expressions} over $\Sigma$ are defined as follows. The empty set $\emptyset$ is a regular expression with $\LL(\emptyset) =
\emptyset$, and
every $x \in \Sigma \cup \{\eword\}$ is a regular expression with $\LL(x) = \{x\}$. If $s$ and $t$ are regular expressions, then $s \cdot t$, $s \vee t$, and $s^*$ are regular expressions with $\LL(s \cdot t) = \LL(s) \cdot \LL(t)$, 
 $\LL(s \vee t) = \LL(s) \cup \LL(t)$, and 
 $\LL(s^*) = (\LL(s))^*$. Here, 
 we define as usual: $L_1 \cdot L_2 = \{uv \mid u \in L_1, v \in L_2\}$, $L^0 = \{\eword\}$, $L^k = L^{k-1} \cdot L$ for every $k \geq 1$, and $L^* = \bigcup_{k \geq 0} L^k$. The length $|s|$ of a regular expression $s$ is the total number of symbols of $s$ considered as a string.

We work with \emph{nondeterministic finite automata with
$\eword$-transitions}, called $\eword$NFA for brevity. An $\eword$NFA is a
tuple $A = (Q, \Sigma, q_0, q_\ff, \delta)$ where $Q$ is a finite set of
\emph{states}, $q_0$ is the \emph{initial state}, $q_\ff$ is the \emph{final state}, and $\delta \subseteq Q \times (\Sigma \cup \{\eword\}) \times Q$ is the set of \emph{transitions}.
A transition of the form $(p, a, q)$ is called an \emph{$a$-transition}.
We will also interpret an
$\eword$NFA $A$ as a graph which has vertex set $Q$ and which has directed edges
labelled by symbols from $\Sigma \cup \{\eword\}$ given by the transitions of
$\delta$, i.e., any transition $(p, a, q) \in \delta$ is interpreted as a
directed edge from $p$ to $q$ labelled by $a$. A transition $(p,a,p)$ with $p\in
Q$ and $a\in \Sigma\cup\{\eword\}$ is called a \emph{self-loop}. A \emph{run} of $A$
on an input string $w \in \Sigma^*$ is a path (not necessarily simple)
from
$q_0$ to some state~$p$ which is labelled with $w$, where the label of a path is
just the concatenation of all $\Sigma$-labels (ignoring $\eword$-labels).
A run is \emph{accepting} if
$p = q_\ff$.
We write $\LL(A)$ for the \emph{language accepted by~$A$}, i.e., the set of all
strings for which $A$ has an accepting run. The \emph{size} $|A|$ of~$A$ is its number of transitions: we usually denote it by $m$,
while $n$ denotes the number $|Q|$ of states. 

It is well-known that a given regular expression $r$ can be converted in time $O(|r|)$ into an
$\eword$NFA $A$ such that $\LL(A) = \LL(r)$ and $|A| = O(|r|)$. 
This can be
achieved, e.g., using Thompson's construction~\cite[Section
3.2.3]{hopcroft}. Thus, in the sequel, we assume that all input regular expressions are given as $\eword$NFAs, 
and we state our results directly for arbitrary $\eword$NFAs. \looseness=-1

Moreover, we assume that our $\eword$NFAs are \emph{trimmed}, which means that
every state is reachable from $q_0$ and every state can reach $q_\ff$. This can
always be ensured in time $O(m)$. If an $\eword$NFA is trimmed, then we also
have that the number of states of the input automaton is at most twice the
number of transitions plus one. 
We also assume that for each $x \in \Sigma$ we have at least one transition labelled with $x$, which means that $|\Sigma|$ is in~$O(m)$.

\subsection{String Relations}
A \emph{string relation} (\emph{over $\Sigma$}) is a subset of $\Sigma^* \times \Sigma^*$. For any string relation~$\preceq$ and $w \in \Sigma^*$ we define $\Lambda_{\preceq}(w) = \{u \in \Sigma^* \mid u \preceq w\}$, i.e., the set of all strings that are in the $\preceq$ relation to $w$; we also lift this notation to languages $L \subseteq \Sigma^*$, i.e., $\relSet_{\preceq}(L) = \bigcup_{w \in L} \relSet_{\preceq}(w)$. Next, we define several well-known string relations.\footnote{Our generalised regex matching setting works for any string relation, but %
those that we study are all reflexive, transitive and antisymmetric; hence, we use the symbol $\preceq$.}

The \emph{prefix}
and \emph{infix relations} are denoted by
$\prefixrel$
and $\infixrel$:
formally, we write $u \prefixrel w$ when $u v = w$ for
some $v \in \Sigma^*$
and we write $u \infixrel w$ when $v u v' = w$ for some $v, v' \in \Sigma^*$.
The \emph{left-extension}
and \emph{extension relations} are denoted by $\leftextensionrel$
and $\extensionrel$: formally, we write $u \leftextensionrel w$ when
$u = v w$ for some $v \in \Sigma^*$
and write $u \extensionrel w$ when $w \infixrel u$.
The \emph{subsequence} and \emph{supersequence relations} are denoted by $\subsequencerel$ and $\supersequencerel$: we write $u \subsequencerel w$ when $w[j_1] w[j_2] \ldots w[j_{|u|}] = u$ for some $1 \leq j_1 < j_2 < \ldots < j_{|u|} \leq |w|$, and write $u \supersequencerel w$ when $w \subsequencerel u$.
Note that we do not study the suffix and right-extension relations because they amount to prefix and left-extension
up to mirroring the strings.

\subsection{Problem Statement: The $\preceq$-Matching Problem} 
The well-known \emph{regex matching} problem is to decide whether $w \in \LL(r)$ for a given regular expression $r$ and input string $w$. As explained above, this generalises to the $\eword$NFA acceptance problem, where we want to decide whether $w \in \LL(A)$ for a given $\eword$NFA $A$ and input string $w$. In this paper, we study the \emph{$\preceq$-matching problem}, where $\preceq$ is one of the string relations presented above: For a given string $w \in \Sigma^*$ and an $\eword$NFA $A$, decide whether $A$ accepts a string $u$ with $u \preceq w$, i.e., decide whether $\relSet_{\preceq}(w) \cap \LL(A) \neq \emptyset$.

We also define \emph{quantitative} variants of this matching problem. For a string relation $\preceq$, the \emph{min-variant of the $\preceq$-matching problem} is as follows: For a given string $w \in \Sigma^*$ and 
an $\eword$NFA $A$, compute a shortest string $u$ with $u \preceq w$ and $u \in \LL(A)$, or report that no such string $u$ exists. The \emph{max-variant of the $\preceq$-matching problem} is as follows: For a given string $w \in \Sigma^*$ and an $\eword$NFA~$A$, either report that there are arbitrarily long strings $u$ with $u \preceq w$ and $u \in \LL(A)$, or compute a longest string $u$ with $u \preceq w$ and $u \in \LL(A)$, or report that no such string $u$ exists. Further, all lower bounds on the quantitative variants of the matching problem will already apply in the setting where we are only required to compute the length of the witnessing string.

Finally, we define a \emph{universal} variant of the matching problem. For a string relation $\preceq$, the \emph{universal-variant of the $\preceq$-matching problem} is as follows: For a given string $w \in \Sigma^*$ and an $\eword$NFA $A$, decide whether $A$ matches all strings $u$ with $u \preceq w$, i.e., decide the inclusion problem $\relSet_{\preceq}(w) \subseteq \LL(A)$.

Finally, note that if we choose $\preceq$ to be the string equality relation, then all the problem variants from above are just the same as the classical regex matching problem.

\subsection{Computational Model and Basic Input-Preprocessing}

The computational model we use to state our algorithms is the standard unit-cost word RAM with logarithmic word-size $\omega$ (meaning that each memory word can hold $\omega$ bits). It is assumed that this model allows processing inputs of size $n$, where $\omega \geq \log n$; in other words, the size $n$ of the input never exceeds (but, in the worst case, is equal to) $2^\omega$. Intuitively, the size of the memory word is determined by the processor, and larger inputs require a stronger processor (which can, of course, deal with much smaller inputs as well). Indirect addressing and basic arithmetical or bitwise operations on such memory words are assumed to work in constant time. This is a standard computational model for the analysis of algorithms, defined in \cite{FredmanW93}. 
To 
make some of our algorithms faster, it 
may be necessary to allocate
large arrays in constant time: for this, we use the standard technique of lazy initialisation~\cite{grandjean2023which} to avoid spending linear time in the array size to
initialise its cells. The time needed, after the lazy initialisation, to store a
value in a cell of the array, or to check if a cell of the array was initialised and if so
return the value it stores, is $O(1)$.

In each of our problems, the input always consists of an
$\eword$NFA and a string over $\Sigma$.  As we are aiming for low complexities, it is important to define formally how the input is written: we will always assume that it obeys the following assumptions.
For an $\eword$NFA $A = (Q, \Sigma, q_0, q_\ff, \delta)$ with $|Q|=n$, $|\Sigma|=\sigma$, and $m=|\delta|$, we assume that $Q=\{1,\ldots,n\}$ and $\Sigma=\{1,\ldots,\sigma\}$; we can thus assume that both $Q$ and $\Sigma$ are ordered sets, w.r.t.\ the canonical order on natural numbers. It follows that the processed strings are sequences of integers (representing the symbols), each of these integers fitting in one memory word. This is a common assumption in string algorithms: the input alphabet is said to be {\em an integer alphabet} (see, e.g.,~\cite{crochemore}). Due to the assumptions about $\eword$NFAs made above, we also know that $n,\sigma$ are in~$O(m)$. 
Moreover, such an automaton $A$ is given by its number of states, size of the input alphabet, initial and final state, and a list of $m$ transitions of the form $(q,a,q')$, with $q,q'\in Q$ and $a\in \Sigma\cup\{\eword\}$. 

Further, using radix sort, we can prepare for each state $q$ the list $L[q]$ of outgoing transitions, sorted first according to the symbol labelling them, and then by the target state; for states without outgoing transitions we set $L[q]\gets \emptyset$.
This allows us to simultaneously construct, for each $q\in Q$ and $a\in \Sigma\cup\{\eword\}$ for which $q$ has an outgoing $a$-transition,
the list $T[q,a]$ of transitions of the form $(q,a,q')$; $T[q,a]$ is undefined for $q\in Q$ and $a\in \Sigma\cup \{\eword\}$ in the case where $q$ has no outgoing $a$-transition.
We will make the convention that, in the case that a self-loop $(q,a,q)$ exists, then $(q,a,q)$ is the first transition stored in $T[q,a]$; this allows us to test in $O(1)$ whether there exists a self-loop $(q,a,q)$ for any $q\in Q$ and $a\in \Sigma\cup\{\eword\}$. Both $L[\cdot]$ and $T[\cdot,\cdot]$ are implemented as arrays (of lists), which are lazily initialised (as mentioned above). So, computing the defined elements of $L[\cdot]$ and $T[\cdot,\cdot]$ can be done in linear time, i.e., in $O(m)$.

\section{State-Set Simulation and Simple Upper Bounds}
\label{sec:naive}

Given a string $w \in \Sigma^*$ and a regular expression represented as an
$\eword$NFA $A$, we can solve the regex matching problem of deciding whether $w \in \LL(A)$ 
using the classical approach of \emph{state-set simulation}. It is well-known~\cite{backurs2016regular}
that, assuming the strong exponential time hypothesis (SETH), this yields an
optimal $O(|w|\cdot |A|)$ algorithm.
For our variants of regex matching, it is often possible to build an $\eword$NFA
$A'$ from~$A$ that accepts suitable strings (e.g., the subsequences or
supersequences of strings of~$\LL(A)$), so that we can solve the matching problem by checking whether $w \in \LL(A')$ by state-set simulation
(though this is generally not optimal).
Since
many of our algorithms will 
nevertheless 
be specialised
variants of state-set simulation, we review the technique in more detail below.

\subsection{State-Set Simulation for $\eword$NFA} 

For an NFA $A$ \emph{without} $\eword$-transitions and an input string~$w$, the state-set
simulation starts with $S_0 = \{q_0\}$ and then, for every $i = 1, 2, \ldots,
|w|$, computes the set $S_i = \CC_{w[i]}(S_{i-1})$, where, for any state set $S$
and symbol $b \in \Sigma$, we call $\CC_{b}(S)$ the set of all states that can
be reached from a state in $S$ by a $b$-transition, i.e., $\CC_b(S) = \{q \mid
\exists p \in S \wedge (p, b, q) \in \delta\}$. Clearly, each $S_{i}$ contains exactly
the states that are reachable from $q_0$ by a $w[1:i]$-labelled path. Now if we
are dealing with an $\eword$NFA, then we can use the same idea, but we also have
to compute the \emph{$\eword$-closures $\CCeps(S)$} in between, i.e.,
the set of all states that can be reached from a state in $S$ by a path of
$\eword$-transitions. More precisely, we start with $S_0 =
\CCeps(\{q_0\})$ and then, for every $i = 1, 2, \ldots, |w|$, we compute
the set $S_i = \CCeps(\CC_{w[i]}(S_{i-1}))$. Now each $S_{i}$ contains exactly the states that are reachable from $q_0$ by a path that is labelled with $w[1:i]$ (with $\eword$-transitions being ignored in the path label). Computing $S_0$ is called the \emph{initialisation} and computing $S_i$ from $S_{i-1}$ is called an \emph{update step} of the state-set simulation. For every $i = 0, 1, 2, \ldots, |w|$, the set $S_i$ is the set of \emph{active states at step $i$}. We also say that a state $p$ is \emph{active at step $i$} if $p \in S_i$. 

Given a state set $S$ and a symbol $b \in \Sigma$, computing the set $\CC_{b}(S)$ can be easily done in time $O(m)$, since we only have to compute for $p \in S$ the states $q$ with a transition $(p, b, q)$, which are given by $T[p, b]$. Thus, we have to inspect every transition at most once. Computing the $\eword$-closure $\CCeps(S)$ can also be done in time $O(m)$ by a breadth-first search (BFS) that starts in all states of $S$ and only considers $\eword$-transitions (again, the entries $T[p, \eword]$ can be used for that). This means that the initialisation and each update step can be done in time $O(m)$, which yields a total running time of $O(|w|m)$ for the whole state-set simulation. What is more, this running time is conditionally optimal. 

\begin{lemma}[\cite{backurs2016regular}]\label{BackursLemma}
Given an $\eword$NFA $A$ and string $w$, we can check in time $O(|A|\cdot |w|)$
  whether $w \in \LL(A)$, but not in time $O((|A|\cdot |w|)^{1-\epsilon})$ for any $\epsilon > 0$ unless SETH fails.
\end{lemma}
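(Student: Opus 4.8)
The plan is to treat the two directions separately, and to note at the outset that this is the known result of~\cite{backurs2016regular}, so my aim is to reconstruct its argument. For the upper bound, nothing beyond the state-set simulation recalled just above is needed: the initialisation computes $S_0 = \CCeps(\{q_0\})$ in time $O(m)$, and each of the $|w|$ update steps computes $S_i = \CCeps(\CC_{w[i]}(S_{i-1}))$ in time $O(m)$ (one $\CC_{w[i]}$ pass followed by one BFS over the $\eword$-transitions). We then accept iff $q_\ff \in S_{|w|}$. Since $|A| = m$, this already gives the claimed $O(|A| \cdot |w|)$ running time, and no further argument is required here.

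For the lower bound, I would derive the conditional hardness by a fine-grained reduction from the Orthogonal Vectors problem (OV), which under SETH admits no $O(N^{2-\epsilon})$ algorithm even when the two input sets consist of $N$ Boolean vectors of dimension $d = \Theta(\log N)$. Given sets $\mathcal{A} = \{a_1, \dots, a_N\}$ and $\mathcal{B} = \{b_1, \dots, b_N\}$ of vectors in $\{0,1\}^d$, the idea is to encode one set as the input string $w$ and the other as the $\eword$NFA $A$, so that $w \in \LL(A)$ holds precisely when some pair $(a_i, b_j)$ is orthogonal. The crux is a coordinate gadget turning orthogonality of a single pair into a matching condition: encode each $a_i$ as the length-$d$ string $\mathrm{enc}(a_i) = a_i[1]\cdots a_i[d]$ over $\{0,1\}$, and each $b_j$ as the concatenation $\mathrm{reg}(b_j) = g_1 \cdots g_d$, where $g_k = (0 \vee 1)$ if $b_j[k] = 0$ and $g_k = 0$ if $b_j[k] = 1$. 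Then $\mathrm{reg}(b_j)$ matches $\mathrm{enc}(a_i)$ exactly when, in every coordinate $k$ with $b_j[k] = 1$, we have $a_i[k] = 0$, i.e. exactly when $a_i$ and $b_j$ are orthogonal.

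To lift this from a single pair to an existential statement over all pairs, I would set $w = \# \, \mathrm{enc}(a_1) \, \# \, \mathrm{enc}(a_2) \, \# \cdots \# \, \mathrm{enc}(a_N) \, \#$ for a fresh separator symbol $\#$, and build the regular expression realising $\Sigma^* \cdot \# \cdot (\mathrm{reg}(b_1) \vee \cdots \vee \mathrm{reg}(b_N)) \cdot \# \cdot \Sigma^*$, then convert it to an $\eword$NFA $A$ at constant-factor cost. An accepting run must align the central alternation with one complete separator-delimited block $\mathrm{enc}(a_i)$ and accepts iff some $\mathrm{reg}(b_j)$ matches that block; hence $w \in \LL(A)$ iff there is an orthogonal pair. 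Since $|w| = O(Nd)$ and $|A| = O(Nd)$ with $d = \Theta(\log N)$, any algorithm running in time $O((|A| \cdot |w|)^{1-\epsilon})$ would solve OV in time $O((Nd)^{2-2\epsilon}) = O(N^{2-\epsilon'})$ up to polylogarithmic factors, contradicting SETH.

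The main obstacle I anticipate is making the block alignment airtight: because the flanking $\Sigma^*$ range over all symbols including $\#$, I must argue that every accepting run matches the central factor against exactly one separator-delimited block rather than straddling block boundaries. This is precisely where the separator discipline is used — one checks that the two $\#$ symbols bracketing the alternation, together with the absence of $\#$ inside any $\mathrm{enc}(a_i)$, pin the matched factor to a single block. Once this alignment claim is secured, correctness of the coordinate gadget and the size bookkeeping are routine, and the reduction delivers the stated conditional lower bound.
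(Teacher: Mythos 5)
Your proposal is correct and takes essentially the same route as the paper: the upper bound is exactly the state-set simulation the paper describes immediately before the lemma, and the lower bound is the standard Orthogonal-Vectors reduction (text encodes one vector set, regex encodes the other behind $\#$-separators) that underlies the paper's citation to~\cite{backurs2016regular}, including the block-alignment argument via the fresh separator symbol. The only cosmetic imprecision is the phrasing ``$d = \Theta(\log N)$'' for OV-hardness -- the precise form is that for every $\epsilon>0$ there is a constant $c$ such that dimension $d = c\log N$ suffices -- but this does not affect the calculation, since $d$ remains polylogarithmic and the $O((Nd)^{2-2\epsilon})$ bound still contradicts SETH.
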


\subsection{Solving Sub- and Supersequence Matching via State-Set Simulation} 

We now explain how we can use state-set simulation to solve the $\preceq$-matching problem for the subsequence and supersequence relations (which are our main focus in this paper). It is easy to see that we can transform an $\eword$NFA $A$ into two $\eword$NFAs $A_{\subsequence}$ and $A_{\supersequence}$ such that $w \in \LL(A_{\subsequence}) \iff \relSet_{\subsequencerel}(w) \cap \LL(A) \neq \emptyset$ and $w \in \LL(A_{\supersequence}) \iff \relSet_{\supersequencerel}(w) \cap \LL(A) \neq \emptyset$. For example, this is done in~\cite[Lemma 8]{bachmeier2015finite}. We review this construction as we adapt it later.

The $\eword$NFA $A_{\subsequence}$ is obtained from $A$ by simply adding a transition $(p, b, p)$ for every state $p$ and $b \in \Sigma$. Intuitively speaking, these loops equip $A$ with the general possibility to consume symbols from $w$ without transitioning in a new state, which corresponds to ignoring symbols from $w$. Hence, the ``non-ignoring'' transitions of an accepting run of~$A_{\subsequence}$ on $w$ spell out a subsequence of $w$ accepted by $A$. On the other hand, the $\eword$NFA $A_{\supersequence}$ is obtained from $A$ by simply adding a transition $(p, \eword, q)$ for every existing transition $(p, b, q)$ with $b \in \Sigma$. This means that while reading $w$ the automaton can always virtually process some symbols that do not belong to $w$. Hence, in an accepting run of~$A_{\supersequence}$ on $w$, the actual transitions that read the symbols from $w$ along with the added $\eword$-transitions that virtually read some symbols spell out a supersequence of $w$ accepted by~$A$.

The correctness of these constructions is shown in \cite[Lemma 8]{bachmeier2015finite} in slightly different terms.\footnote{The work~\cite{bachmeier2015finite} is about the downward-closure and upward-closure of a language $L$, which are exactly the sets $\relSet_{\subsequencerel}(L)$ and $\relSet_{\supersequencerel}(L)$.} For the sake of self-containment, we give full proofs of their correctness below:

\begin{lemma}[\cite{bachmeier2015finite}, Lemma 8]\label{BachmeierLemma1}
For every $w \in \Sigma^*$, we have that $w \in \LL(A_{\subsequence}) \iff \relSet_{\subsequencerel}(w) \cap \LL(A) \neq \emptyset$.
\end{lemma}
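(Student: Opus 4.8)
The plan is to prove both directions of the equivalence by relating accepting runs of $A_{\subsequence}$ on $w$ to accepting runs of $A$ on subsequences of $w$. Recall that $A_{\subsequence}$ is obtained from $A$ by adding a self-loop $(p,b,p)$ for every state $p$ and every $b \in \Sigma$, so the transitions of $A_{\subsequence}$ are exactly those of $A$ together with these newly added self-loops. The key observation to formalise is that any transition used in a run of $A_{\subsequence}$ is either an original transition of $A$ or one of the added self-loops, and the latter only read a symbol without changing the state.

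For the direction $\relSet_{\subsequencerel}(w) \cap \LL(A) \neq \emptyset \Rightarrow w \in \LL(A_{\subsequence})$, I would start from a string $u \subsequencerel w$ with $u \in \LL(A)$. By definition of the subsequence relation, we can write $w = w_0\, u[1]\, w_1\, u[2] \cdots u[|u|]\, w_{|u|}$ where the $w_j \in \Sigma^*$ are the ``gaps''. Take an accepting run of $A$ on $u$; this is a path from $q_0$ to $q_\ff$ whose $\Sigma$-labels spell $u$. I would then splice this run into a run of $A_{\subsequence}$ on $w$: between the transitions reading $u[j]$ and $u[j+1]$, insert a sequence of self-loops at the current state that reads the gap string $w_j$ symbol by symbol (using that $A_{\subsequence}$ has a self-loop for every symbol at every state), and similarly read $w_0$ at $q_0$ at the start and $w_{|u|}$ at $q_\ff$ at the end. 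The resulting path is a valid run of $A_{\subsequence}$ from $q_0$ to $q_\ff$ whose label is exactly $w$, so $w \in \LL(A_{\subsequence})$.

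For the converse direction $w \in \LL(A_{\subsequence}) \Rightarrow \relSet_{\subsequencerel}(w) \cap \LL(A) \neq \emptyset$, I would take an accepting run $\rho$ of $A_{\subsequence}$ on $w$ and delete exactly the transitions that are added self-loops, keeping only the original transitions of $A$. Since each deleted self-loop starts and ends at the same state, removing it still leaves a valid path in $A$; doing this for all added self-loops yields a path from $q_0$ to $q_\ff$ in $A$. Its label $u$ is the concatenation of the $\Sigma$-labels of the surviving transitions, which is obtained from $w$ by deleting the symbols read by the self-loops, hence $u \subsequencerel w$. This surviving path is an accepting run of $A$ on $u$, so $u \in \LL(A)$ and $u \in \relSet_{\subsequencerel}(w)$, giving $\relSet_{\subsequencerel}(w) \cap \LL(A) \neq \emptyset$.

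The main obstacle is mostly a matter of careful bookkeeping rather than a deep difficulty: one must be precise about the fact that a ``run'' is a path that need not be simple and may reuse states, so inserting or deleting self-loops is a purely local operation on the edge sequence that does not disturb the rest of the path. A small subtlety worth stating explicitly is that an original transition $(p,b,q)$ of $A$ with $p=q$ could itself be a self-loop; to make the deletion argument unambiguous I would track \emph{which} transitions of $\rho$ are the added ones rather than classifying by shape, e.g.\ by regarding $\rho$ as a sequence of elements of the transition set of $A_{\subsequence}$ and partitioning that transition set into the original transitions and the added self-loops. With that convention fixed, both constructions are inverse to each other at the level of paths, and the equivalence follows.
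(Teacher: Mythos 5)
Your proof is correct and follows essentially the same approach as the paper's: the forward direction inserts the added self-loops to consume the gap symbols of $w$ around an accepting run of $A$ on the subsequence, and the converse direction deletes the added self-loops from an accepting run of $A_{\subsequence}$ to recover an accepting run of $A$ on a subsequence of $w$. Your extra remark about disambiguating original self-loops of $A$ from the added ones is a reasonable (and harmless) refinement that the paper glosses over; either classification of such a shared transition yields a valid witness.
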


\begin{proof}
Clearly, we have that $\relSet_{\subsequencerel}(w) \cap \LL(A) \neq \emptyset \iff w \in \relSet_{\supersequencerel}(\LL(A))$, so it suffices to show that $\LL(A_{\subsequence}) = \relSet_{\supersequencerel}(\LL(A))$ holds. 

Let $w \in \relSet_{\supersequencerel}(\LL(A))$. This means that $w[j_1]w[j_2] \ldots w[j_{\ell}] \in \LL(A)$ for some $1 \leq j_1 < j_2< \ldots < j_{\ell} \leq |w|$. 
We can accept $w$ by $A_{\subsequence}$ by using the accepting run of $A$ on $w[j_1]w[j_2] \ldots w[j_{\ell}]$, but whenever we reach a position $i$ of $w$ with $i \notin \{j_1, j_2, \ldots, j_{\ell}\}$, we simply process $w[i]$ with a loop $(p, w[i], p)$ of $A_{\subsequence}$. Thus, $w \in \LL(A_{\subsequence})$.

Let $w \in \LL(A_{\subsequence})$, i.e., $A_{\subsequence}$ has an accepting run on $w$. Let $w[j_1]w[j_2] \ldots w[j_{\ell}]$ be the subsequence of $w$ where the $j_i$ are exactly the positions of $w$ that $A_{\subsequence}$ processes in its accepting run with original transitions of $A$ (i.e., not with the self-loops added to $A$ in order to obtain $A_{\subsequence}$). By construction, there must be an accepting run of $A$ on $w[j_1]w[j_2] \ldots w[j_{\ell}]$, which means that $w[j_1]w[j_2] \ldots w[j_{\ell}] \in \LL(A)$ and therefore $w \in \relSet_{\supersequencerel}(\LL(A))$.
\end{proof}

\begin{lemma}[\cite{bachmeier2015finite}, Lemma 8]\label{BachmeierLemma2}
For every $w \in \Sigma^*$, we have that $w \in \LL(A_{\supersequence}) \iff \relSet_{\supersequencerel}(w) \cap \LL(A) \neq \emptyset$.
\end{lemma}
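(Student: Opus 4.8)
The plan is to mirror the structure of the proof of \cref{BachmeierLemma1}. First I would observe that $\relSet_{\supersequencerel}(w) \cap \LL(A) \neq \emptyset$ holds exactly when $w$ is a subsequence of some string accepted by $A$, i.e., when $w \in \relSet_{\subsequencerel}(\LL(A))$ (the downward closure of $\LL(A)$). Hence it suffices to prove the language identity $\LL(A_{\supersequence}) = \relSet_{\subsequencerel}(\LL(A))$, and I would establish the two inclusions by translating accepting runs back and forth between $A$ and $A_{\supersequence}$.

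For the inclusion $\relSet_{\subsequencerel}(\LL(A)) \subseteq \LL(A_{\supersequence})$, I would take $w \in \relSet_{\subsequencerel}(\LL(A))$, so that $w \subsequencerel u$ for some $u \in \LL(A)$, and fix an accepting run of $A$ on $u$ together with witnessing positions $1 \le j_1 < \cdots < j_{|w|} \le |u|$ with $u[j_k] = w[k]$. Each symbol of $u$ is read by some $\Sigma$-transition of $A$; the transitions reading the positions $j_1, \ldots, j_{|w|}$ are kept, while every transition reading a symbol of $u$ \emph{outside} this embedding is replaced by the corresponding added $\eword$-transition $(p, \eword, q)$ of $A_{\supersequence}$, which exists by construction. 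The resulting run is a run of $A_{\supersequence}$, ends in $q_\ff$, and has label exactly $w$, so $w \in \LL(A_{\supersequence})$.

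For the converse inclusion, I would start from an accepting run of $A_{\supersequence}$ on $w$ and reverse the replacement: each time the run uses an added $\eword$-transition $(p, \eword, q)$, I pick some original transition $(p, b, q) \in \delta$ witnessing its addition (at least one such $b$ exists) and substitute it, while keeping all original transitions of $A$ (including any genuine $\eword$-transitions) untouched. This yields an accepting run of $A$ whose label $u$ is obtained from $w$ by inserting the symbols $b$ contributed by the substituted transitions; hence $w \subsequencerel u$ and $u \in \LL(A)$, giving $w \in \relSet_{\subsequencerel}(\LL(A))$.

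The main delicate point---rather than a genuine obstacle---is the bookkeeping in the run translation: I must carefully distinguish the \emph{added} $\eword$-transitions from any $\eword$-transitions already present in $A$, and note that a single added $\eword$-transition $(p, \eword, q)$ may witness several original $\Sigma$-transitions between $p$ and $q$; picking any one of them suffices, since the choice only affects which supersequence $u$ we reconstruct, not the existence of an accepting run. Verifying that the path labels match up (the kept $\Sigma$-transitions preserve the occurrences of $w$ in order, and the toggled transitions correspond precisely to the inserted or deleted symbols) is then routine.
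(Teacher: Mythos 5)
Your proposal is correct and follows essentially the same route as the paper's own proof: reduce the claim to the language identity $\LL(A_{\supersequence}) = \relSet_{\subsequencerel}(\LL(A))$, then translate accepting runs in both directions, replacing transitions that read non-embedded symbols by the added $\eword$-transitions and, conversely, replacing added $\eword$-transitions by an arbitrarily chosen original $\Sigma$-transition witnessing them. The bookkeeping points you raise (distinguishing added from original $\eword$-transitions, and the harmless non-uniqueness of the witnessing symbol) are exactly the details the paper's proof also notes.
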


\begin{proof}
We have that $\relSet_{\supersequencerel}(w) \cap \LL(A) \neq \emptyset \iff w \in \relSet_{\subsequencerel}(\LL(A))$, so it suffices to show that $\LL(A_{\supersequence}) = \relSet_{\subsequencerel}(\LL(A))$ holds. 

Let $w \in \relSet_{\subsequencerel}(\LL(A))$. This means that $u = u_0 w[1] u_1 w[2] \ldots u_{|w|-1} w[|w|] u_{|w|} \in \LL(A)$. We can accept $w$ by $A_{\supersequence}$ by using the accepting run of $A$ on $u$, but instead of the transitions that process positions of $u_j$, we simply use the corresponding $\eword$-transitions added to $A$ in order to obtain $A_{\supersequence}$. Thus, $w \in \LL(A_{\supersequence})$.

Let $w \in \LL(A_{\supersequence})$, i.e., let $w$ be a string such that
  $A_{\supersequence}$ has an accepting run on $w$. Let $u = u_0 w[1] u_1 w[2] \ldots u_{|w|-1} w[|w|] u_{|w|}$ be the string that we obtain by listing the transitions of this accepting run in order, substituting each $\Sigma$-transition $(p, b, q)$ by $b$, each original $\eword$-transition of $A$ by $\eword$, and every $\eword$-transition that has been added due to some $\Sigma$-transition $(p, b, q)$ by $b$ (since there might be several choices for this, we take just any of those). By construction, there must be an accepting run of $A$ on $u$, which means that $u \in \LL(A)$ and therefore $w \in \relSet_{\subsequencerel}(\LL(A))$.
\end{proof}

Consequently, we can solve the subsequence and supersequence matching problem by checking whether $w \in \LL(A_{\subsequence})$ and $w \in \LL(A_{\supersequence})$, respectively, via state-set simulation. Since $|A_{\subsequence}| = O(n |\Sigma| + m)$ and $|A_{\supersequence}| = O(m)$, we get the following:

\begin{theorem}\label{simpleUpperBoundTheorem}
  The subsequence matching problem can be solved in time $O(|w| (n |\Sigma| + m))$ and the supersequence matching problem can be solved in time $O(|w|m)$.
\end{theorem}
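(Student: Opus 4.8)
The plan is to combine the two ingredients already developed. On the one hand, \Cref{BachmeierLemma1} and \Cref{BachmeierLemma2} reduce the subsequence and supersequence matching problems to ordinary $\eword$NFA acceptance for the automata $A_{\subsequence}$ and $A_{\supersequence}$, respectively. On the other hand, the state-set simulation reviewed above decides whether $w \in \LL(A')$ for an $\eword$NFA $A'$ in time $O(|w|\cdot|A'|)$. Hence the entire argument boils down to two routine points: building each automaton within the allotted time, and bounding its size.

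First I would treat the subsequence case. Starting from the preprocessed transition structure of \Cref{sec:prelim}, I construct $A_{\subsequence}$ from $A$ by inserting, for each state $p$ and each $b \in \Sigma$, the self-loop $(p,b,p)$. This adds exactly $n\cdot|\Sigma|$ transitions, so the construction runs in time $O(n|\Sigma|+m)$ and yields $|A_{\subsequence}| = O(n|\Sigma|+m)$. By \Cref{BachmeierLemma1}, we have $\relSet_{\subsequencerel}(w)\cap\LL(A)\neq\emptyset$ if and only if $w\in\LL(A_{\subsequence})$, and the latter is decided by state-set simulation in time $O(|w|\cdot|A_{\subsequence}|) = O(|w|(n|\Sigma|+m))$, which subsumes the construction time. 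This gives the claimed bound.

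Next I would treat the supersequence case analogously. I construct $A_{\supersequence}$ from $A$ by adding, for every existing $\Sigma$-transition $(p,b,q)$ of $A$, the $\eword$-transition $(p,\eword,q)$. Since at most one $\eword$-transition is added per original $\Sigma$-transition, this adds at most $m$ transitions, runs in time $O(m)$, and gives $|A_{\supersequence}| = O(m)$. By \Cref{BachmeierLemma2}, $\relSet_{\supersequencerel}(w)\cap\LL(A)\neq\emptyset$ if and only if $w\in\LL(A_{\supersequence})$, and state-set simulation decides the latter in time $O(|w|\cdot|A_{\supersequence}|) = O(|w|m)$, again absorbing the construction cost.

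There is no genuine obstacle here: once the reductions and the state-set simulation are in place, the statement is immediate. The only points requiring a moment of care are the size bounds, namely observing that the self-loops contribute the additive $n|\Sigma|$ term in the subsequence case, whereas in the supersequence case the added $\eword$-transitions are in bijection with a subset of the original transitions and hence cost only $O(m)$. One should also check that building the automata (and the preprocessing of \Cref{sec:prelim}) is dominated by the simulation time, which holds as soon as $w$ is non-empty, the empty string being handled by the initialisation alone. This very asymmetry between the two bounds is what motivates the more refined, genuinely linear-time algorithms developed later for both relations.
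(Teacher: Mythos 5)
Your proposal is correct and follows exactly the paper's argument: reduce to acceptance by $A_{\subsequence}$ and $A_{\supersequence}$ via \cref{BachmeierLemma1} and \cref{BachmeierLemma2}, run the standard $O(|w|\cdot|A'|)$ state-set simulation, and observe the size bounds $|A_{\subsequence}| = O(n|\Sigma|+m)$ and $|A_{\supersequence}| = O(m)$. The paper treats this as an immediate consequence of the preceding discussion, just as you do.
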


\subsection{Improved Algorithms for Sub- and Supersequence Matching}

It has been observed in~\cite[Lemma 9]{bachmeier2015finite} that the $\eword$NFAs $A_{\subsequence}$ and $A_{\supersequence}$ have special properties that can be exploited in the state-set simulations to improve the bounds of Theorem~\ref{simpleUpperBoundTheorem}. 

Firstly, for $A_{\subsequence}$, whenever a state $p$ of $A_{\subsequence}$ is added to the set of active states in the state-set simulation, it will stay active until the end of the state-set simulation. This is due to the existence of the transitions $(p, b, p)$ for every $b \in \Sigma$. Consequently, for the state-set simulation of $A_{\subsequence}$, we have $S_0 \subseteq S_1 \subseteq \ldots \subseteq S_{|w|}$. 
Secondly, for $A_{\supersequence}$, we can observe that the state-set simulation starts with $S_0 = \CCeps(\{q_0\}) = Q$, i.e., the set of all states, which is due to the fact that every state in $A_{\supersequence}$ is reachable from $q_0$ by $\eword$-transitions and therefore is in the $\eword$-closure of~$q_0$ with respect to $A_{\supersequence}$. Moreover, 
when a state $q$ is removed, it is never added back. Indeed, assume by contradiction that $q \notin S_{i-1}$ but $q$ gets added to~$S_i$ by an update step. Then $q$ 
is not reachable from $S_{i-1}$ by $\eword$-transitions, but it is reachable from $\CC_{w[i]}(S_{i-1})$ by $\eword$-transitions: this is impossible because in $A_{\supersequence}$ there is a transition $(p, \eword, q)$ for every existing transition $(p, b, q)$ with $b \in \Sigma$.
Thus, $S_0 \supseteq S_1 \supseteq \ldots \supseteq S_{|w|}$.

This means that in the state-set simulation for $A_{\subsequence}$ and $A_{\supersequence}$ on an input string~$w$, we can encounter at most $n + 1$ different sets of active states. Further, for each set of active states, there are at most $|\Sigma|$ possible update steps necessary (since if $\CCeps(\CC_{w[i + 1]}(S_{i})) = S_{i}$ and $w[i + 1] = w[j + 1]$ and $S_{i} = S_j$ for some $i < j$ hold, then we do not need to update $S_j$). Every actual update can again be done in time $O(m)$, which yields the following:

\begin{theorem}\label{simpleUpperBoundImprovedTheorem}
  There are algorithms that solve the subsequence matching problem in time $O(|w| + n|\Sigma| \cdot (n|\Sigma| + m))$ and the supersequence matching problem in time $O(|w| + n|\Sigma|m)$.
\end{theorem}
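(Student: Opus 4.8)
The plan is to run the state-set simulation on $A_{\subsequence}$ (respectively $A_{\supersequence}$), but to reorganise the computation so that the length $|w|$ only enters additively. First I would build the augmented automaton: for the subsequence case this is $A_{\subsequence}$, obtained by adding the $n|\Sigma|$ self-loops, which takes time $O(n|\Sigma| + m)$; for the supersequence case this is $A_{\supersequence}$, obtained by adding $\eword$-copies of the letter transitions, in time $O(m)$. I would then exploit the two monotonicity facts established just above the statement: the sequence of active-state sets is a chain, increasing ($S_0 \subseteq \cdots \subseteq S_{|w|}$) for $A_{\subsequence}$ and decreasing ($S_0 \supseteq \cdots \supseteq S_{|w|}$) for $A_{\supersequence}$. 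Since every $S_i \subseteq Q$ and $|Q| = n$, a monotone chain of subsets can change at most $n$ times, so at most $n+1$ distinct active-state sets occur during the whole simulation.

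The central idea is that, while the active-state set $S$ stays fixed, processing a letter $b$ depends only on the pair $(S,b)$: the next set is $\CCeps(\CC_b(S))$, and in particular whether this equals $S$ (a no-op) or differs from $S$ (a genuine update) is a function of $(S,b)$ alone. Hence, for a fixed set $S$, I only ever need to perform the update computation once per letter, i.e.\ at most $|\Sigma|$ times. To turn this into an algorithm I would maintain $S$ as a characteristic bit-vector over $Q$ (lazily initialised), together with a versioned ``already tried'' array indexed by $\Sigma$: a global version counter is incremented every time $S$ genuinely changes, and the entry for a letter $b$ stores the version at which $b$ was last found to be a no-op for the current $S$. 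Testing whether $b$ is a known no-op for the current $S$, and recording a new no-op, are then both $O(1)$, and incrementing the version counter effectively resets all ``tried'' marks for free.

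With this data structure in place I would scan $w$ left to right. At position $i$ with $b = w[i]$, if $b$ is already marked as a no-op for the current version I skip it in $O(1)$; otherwise I compute $S' = \CCeps(\CC_b(S))$ and either mark $b$ (if $S' = S$) or perform a genuine update, setting $S \gets S'$ and incrementing the version. Within a single version each letter is thus recomputed at most once, so each version triggers at most $|\Sigma|$ computations; combined with the bound of $n+1$ distinct sets, the total number of genuine update computations is $O(n|\Sigma|)$. Each such computation costs $O(|A_{\subsequence}|) = O(n|\Sigma| + m)$ in the subsequence case and $O(|A_{\supersequence}|) = O(m)$ in the supersequence case, while the remaining work over all positions of $w$ is $O(|w|)$. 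Adding the preprocessing yields the claimed bounds $O(|w| + n|\Sigma|(n|\Sigma|+m))$ and $O(|w| + n|\Sigma|m)$, and acceptance is read off in $O(1)$ by testing $q_\ff \in S_{|w|}$. The step needing the most care is the amortisation argument: one must argue that no-op update steps, which could naively cost $O(|A_{\subsequence}|)$ or $O(|A_{\supersequence}|)$ each and thus reintroduce a factor of $|w|$, are instead detected in $O(1)$ via the version counter, so that the expensive computations are provably limited to $O(n|\Sigma|)$ occurrences.
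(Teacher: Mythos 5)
Your proposal is correct and follows essentially the same route as the paper's proof: both reduce to state-set simulation on $A_{\subsequence}$ and $A_{\supersequence}$, use the monotone chain $S_0 \subseteq \cdots \subseteq S_{|w|}$ (resp.\ $\supseteq$) to bound the number of distinct active-state sets by $n+1$, memoise which symbols are no-ops for the current set so repeated letters cost $O(1)$, and thereby limit the expensive $O(|A_{\subsequence}|)$ (resp.\ $O(|A_{\supersequence}|)$) recomputations to $O(n|\Sigma|)$ occurrences. Your versioned ``already tried'' array is just a concrete implementation of what the paper states as ``we can then store this information,'' and cleanly handles resetting the memo when the set changes.
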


\begin{proof}
According to Lemmas~\ref{BachmeierLemma1}~and~\ref{BachmeierLemma2}, we can solve the subsequence and supersequence matching problem by checking $w \in \LL(A_{\subsequence})$ and $w \in \LL(A_{\supersequence})$, respectively. The proof of the theorem therefore follows from the fact that for both $A_{\subsequence}$ and $A_{\supersequence}$ a state-set simulation can be performed in time $O(|w| + n|\Sigma|\cdot|A_{\subsequence}|)$ and $O(|w| + n|\Sigma|\cdot|A_{\supersequence}|)$ instead of $O(|w|\cdot|A_{\subsequence}|)$ and $O(|w|\cdot|A_{\supersequence}|)$, respectively (note that $n$ is the number of states of $A$, $A_{\subsequence}$ and $A_{\supersequence}$). Let us discuss this first for the case of $A_{\subsequence}$.

Initially, we have to compute $\CCeps(\{q_0\})$, which can be done in time $O(|A_{\subsequence}|)$. Now let $S_0, S_1, \allowbreak S_2, \ldots, \allowbreak S_{|w|}$ be the state sets of the single steps of the state-set simulation. As observed above, $S_0 \subseteq S_1 \subseteq \ldots \subseteq S_{|w|}$. This means that the sequence $S_0, S_1, S_2, \ldots, S_{|w|}$
contains at most $n + 1$ distinct sets (recall that $n$ is the number of states).
We say that a symbol $b \in \Sigma$ \emph{does not change} a state set $S$ if $S = \CCeps(\CC_{b}(S))$. For every state set $S_i$ in $S_0, S_1, S_2, \ldots, S_{|w|}$, this set $S_i$ is not changed by the symbols of some following update steps (possibly $0$), until the state-set simulation terminates or the state set is changed to another state set. 

Finding out that a symbol does not change the current state set requires time $O(|A_{\subsequence}|)$, and we can then store this information. Hence, for every update step $i$, we either have to check whether $w[i]$ changes $S_{i-1}$ in time $O(|A_{\subsequence}|)$, or we already know that $w[i]$ does not change $S_{i-1}$ and the update step can therefore be performed in constant time. Consequently, there can be at most $O(n |\Sigma|)$ update steps that require time $O(|A_{\subsequence}|)$, while all other update steps can be done in constant time. This yields a total running time of $O(|w| + n|\Sigma|\cdot |A_{\subsequence}|)$.

The argument for $A_{\supersequence}$ is similar. We now have that $S_0 \supseteq S_1 \supseteq \ldots \supseteq S_{|w|}$, which again means that there are at most $n + 1$ distinct sets of active states, and again the set of active states is not changed by the symbols of the following update steps, until the state-set simulation terminates or the state set is changed to another state set. So by the same argument as before, we conclude that there are at most $(n+1) |\Sigma|$ update steps that require time $O(|A_{\supersequence}|)$, while all other update steps can be done in constant time.

We conclude the proof by observing that $A_{\subsequence}$ has $O(n|\Sigma| + |A|)$ transitions and $A_{\supersequence}$ has $O(|A|)$ transitions.
\end{proof}

This is already a significant improvement over Theorem~\ref{simpleUpperBoundTheorem}, since the running time is linear in $|w|$. In the next two sections, we look deeper into the problem of subsequence and supersequence matching and manage to lower its complexity to the optimum of $O(|w| + m)$.

\section{Subsequence and Supersequence Matching in Linear Time}\label{sec:subAndsupersequence}

In this section, we prove that the $\subsequencerel$- and $\supersequencerel$-matching problem can be solved in linear time $O(|w| + m)$.
Note that
this also holds for the usual regex matching problem when the input regex or $\eword$NFA is assumed to accept an upward-closed or downward-closed language (as matching is then equivalent to $\subsequencerel$- or $\supersequencerel$-matching).
We first prove the result for the subsequence relation, which is slightly 
easier, and then 
cover the 
supersequence relation.

\subsection{Subsequence Matching in Linear Time}\label{sec:subsequence}

\begin{theorem}
  \label{thm:subsequence}
  Given a string $w$ and $\eword$NFA $A$ with $n$ states and $m$ transitions,
  the subsequence matching problem can be solved in time $O(|w|+m)$.
\end{theorem}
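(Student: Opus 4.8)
## Proof Proposal

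The plan is to solve subsequence matching by running the state-set simulation of $A_{\subsequence}$, but exploiting the monotonicity property $S_0 \subseteq S_1 \subseteq \ldots \subseteq S_{|w|}$ far more aggressively than in Theorem~\ref{simpleUpperBoundImprovedTheorem}, where we still paid $O(|A_{\subsequence}|)$ per "changing" update. The key insight is that, because active states are never removed, each of the $n$ states of $A$ becomes active \emph{at most once} over the entire simulation, so the total work of \emph{adding} states should be chargeable to the states themselves rather than redone at each step. First, I would set up the preprocessed data structures $T[q,a]$ and $L[q]$ described in the preliminaries, together with a lazily initialised array tracking, for each state, whether it is currently active. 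The initialisation computes $S_0 = \CCeps(\{q_0\})$ via a BFS over $\eword$-transitions in time $O(m)$.

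The main idea for the update loop is as follows. When reading $w[i]$, the newly activated states are exactly those reachable from the \emph{currently active} set by a $w[i]$-transition followed by $\eword$-transitions, that are not already active. Rather than scanning all active states (which could be $\Theta(n)$ each step), I would maintain, for each symbol $b \in \Sigma$, a "frontier" or worklist of active states that have an outgoing $b$-transition whose $b$-successors have not yet been explored. Since $A_{\subsequence}$ has a self-loop $(p,b,p)$ for every $p$ and every $b$, once $p$ is active it can "consume" any symbol without leaving $p$; the only way the active set grows on reading $w[i]$ is by following the \emph{original} $b$-transitions of $A$ out of already-active states. Thus the real cost is: for each symbol occurrence $w[i]$, explore the original outgoing $b$-transitions of states that are active but whose $b$-neighbourhoods are newly reachable, then $\eword$-close. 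The crucial accounting observation is that each original transition $(p,b,q)$ of $A$ is traversed only the \emph{first} time its source $p$ is active and $b$ is being read with $q$ not yet active — after $q$ is added it stays active forever, so we never need to re-add it. This caps the total transition-exploration work across the whole run at $O(m)$, and the per-position overhead is $O(1)$ (just checking whether reading $w[i]$ can introduce anything new), giving $O(|w|+m)$ overall. I would also short-circuit and accept as soon as $q_\ff$ enters the active set.

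The main obstacle is organising the bookkeeping so that the $O(1)$-per-position claim is actually honoured, i.e. so that reading a symbol $w[i]$ that activates nothing costs genuinely constant time, while symbols that \emph{do} activate new states charge their cost to the newly activated states or newly traversed transitions. The subtlety is that a state $p$ active at step $i$ may have a $b$-transition into a state $q$ that only becomes useful later; I must ensure that when $q$ is finally reached, the subsequent $\eword$-closure and the exploration of $q$'s own outgoing transitions are triggered exactly once. I would handle this with a worklist keyed by symbol: when a state $p$ becomes newly active, I register, for each symbol $b$ on which $p$ has an original outgoing transition, that $p$ is now a pending source for $b$; when the input next presents symbol $b$, I process and clear the pending sources for $b$, following their transitions, adding and $\eword$-closing new states, and enrolling those new states as pending sources for the relevant symbols. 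One must verify that processing a pending source list for symbol $b$ correctly captures all newly reachable states — this relies on the monotone growth of $S_i$ and the fact that a source already processed for $b$ cannot yield new targets on a later occurrence of $b$ unless its own neighbourhood changed, which cannot happen once its targets are active. Establishing this invariant rigorously, and confirming the linear charging argument against double-counting, is the heart of the proof; the correctness then follows directly from Lemma~\ref{BachmeierLemma1}, since we are faithfully simulating $A_{\subsequence}$ and accepting iff $q_\ff$ ever becomes active.
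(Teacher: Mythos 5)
Your proposal is correct and follows essentially the same route as the paper's proof: your per-symbol worklist of ``pending sources'' is exactly the paper's lazily initialised array $H[\cdot]$ of unexplored transitions keyed by symbol (populated from $L[\cdot]$ when a state first becomes active, processed and cleared when that symbol is next read), and your once-per-transition charging argument, the $\eword$-closure restricted to newly activated states, and the appeal to Lemma~\ref{BachmeierLemma1} for correctness all mirror the paper's argument.
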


\begin{proof}
We present an algorithm to decide whether there exists a subsequence $u$ of $w$ such that $u \in \LL(A)$.

Let $A = (Q, \Sigma, q_0, q_\ff, \delta)$ be the input $\eword$NFA. For this automaton, we construct the arrays $L[\cdot]$ and $T[\cdot,\cdot]$ defined in Section~\ref{sec:prelim}. Also, we assume that every transition is initially unmarked and can be marked during the algorithm (but marked transitions stay marked and cannot be set unmarked again; intuitively, these transitions were explored in our algorithm, and do not need to be considered again). One purpose of these markings is to simplify our complexity analysis: The total running time of the algorithm will be proportional to $|w|$ plus the total number of times we mark a transition. Moreover, we will repeatedly compute the $\eword$-closure $\CCeps(S)$ of a state set $S$, but only with respect to unmarked $\eword$-transitions, i.e., we compute $\CCeps(S)$ as discussed in Section~\ref{sec:naive}, but we simply ignore marked $\eword$-transitions, and, furthermore, while computing the $\eword$-closure, we will mark all unmarked $\eword$-transitions that we traverse.

We can now describe our algorithm, which is based on a state-set simulation of $A_{\subsequence}$, but without explicitly constructing $A_{\subsequence}$, i.e., we work directly on $A$. More precisely, for every $i = 0, 1, 2, \ldots, |w|$, we compute a state-set $S_i$ of all states reachable from $q_0$ by a path labelled with a subsequence of $w[1:i]$. 
Additionally, we will compute (using the arrays $L[\cdot]$ and $T[\cdot,\cdot]$) and maintain a lazily initialised array $H[\cdot]$ of lists, indexed by the symbols of $\Sigma$, such that, after $S_i$ is computed, $H[a] = \{(q,a,q')\in \delta \mid q\in S_i, q'\in Q, (q,a,q')\text{ is unmarked}\}$, i.e., $H[a]$ contains all unmarked transitions labelled with $a$ leaving the states contained in $S_i$. 
Moreover, as explained in Section~\ref{sec:naive}, we will have $S_0 \subseteq S_1 \subseteq \ldots \subseteq S_{|w|}$. 

We implement all the sets $S_i$ with a single Boolean characteristic array $S$, which is maintained in our algorithm in such a way that $S_{i-1}$ is the set indicated by $S$ after $w[i-1]$ was processed and before the processing of $w[i]$ is started, and during the processing of $w[i]$ some positions of $S$ (namely,  $|S_i \setminus S_{i - 1}|$ many) are changed from $0$ to $1$. This allows us to implement in constant time the membership-testing to and insertion in the manipulated sets of states. For simplicity of the exposition, and to mirror the computation of  \cref{simpleUpperBoundImprovedTheorem}, we use the notation $S_i$ to refer to the set stored in the array $S$ after the processing of $w[i]$ was concluded, for all $i$ from $1$ to $|w|$, and we use $S_0$ to refer to the content of the respective set right before we start processing $w[1]$.

Initially, we set $S_0 = \CCeps(\{q_0\})$, where this $\eword$-closure is computed as described above, i.e., ignoring marked $\eword$-transitions (of which, at this step of the algorithm, none exist anyway) and marking the traversed $\eword$-transitions. We note that this initialisation takes time proportional to the number of transitions that we mark.

Further, by traversing the lists $L[q]$, for $q\in S_0$, we can also compute, for each $a\in \Sigma$, the list $H[a] = \{(q,a,q')\in \delta \mid q\in S_0, q'\in Q\}$; 
note that these transitions are not marked yet, as we have only marked $\eword$-transitions until now,  and also that there are no unmarked $\eword$-transitions leaving the states of $S_0$.
  This takes time proportional to the sum of $|S_0|$ and the number of transitions inserted into the lists stored in $H$. 

Let us now explain the update step, i.e., how $S_i$ is computed and $H$ is updated for $1 \leq i \leq |w|$. %

We initially set $S_i = S_{i-1}$ and let $R$ be an auxiliary empty queue, which will be used to collect all the new states from $S_i \setminus S_{i-1}$. Now, for each $(q, w[i], q') \in H[w[i]]$,
we mark the transition $(q, w[i], q')$, remove it from $H[w[i]]$, and, if $q' \notin S_i$,
then we add $q'$ to $S_i$ and $R$. It is worth noting that, after this loop concludes, the set $H[w[i]]$ will be empty.
Moreover, for all states $q\in S_{i-1}$, there is no unmarked transition $(q,w[i],q')$ for any $q'\in Q$. 
Then we compute $\CCeps(R)$, but again we ignore marked $\eword$-transitions and mark all traversed $\eword$-transitions.
All states $q \in \CCeps(R)$ with $q \notin S_i$ are added to both $R$ and $S_i$. 
At this point, for all states $q\in S_{i}$, there is no unmarked transition $(q,\eword,q')$ for any $q'\in Q$. 
Finally, for every $r \in R$ and for every transition $(r,a,r') \in L[r]$ with $a \in \Sigma$ and $r' \in Q$, we add $(r,a,r')$ to $H[a]$ (note that this transition could not have been marked before, as $r$ was not contained in any set $S_j$ for $j<i$); this step takes time proportional to the sum of $|R|$ and the number of transitions inserted into the lists stored in $H$. 

Once the state-set simulation is completed, we terminate and answer positively if and only if $q_\ff \in S_{|w|}$.

As far as the complexity of the above algorithm is concerned, we note that we have to perform $O(|w|)$ individual updates. The total number of steps over all these updates is upper-bounded by the number of transition-insertions in the lists of $H[\cdot]$ and transition-markings. Each transition is either never marked, or it is marked in the initialisation, where we compute $\CCeps(\{q_0\})$, or it is inserted in one of the lists of the array $H[\cdot]$ for some $i$, and then it may be marked at most once and never considered again (also, it will never be reinserted in $H[\cdot]$ because its source state became active when the transition was inserted so the state will never be added to the queue $R$ in further update steps). Thus, our algorithm runs in $O(|w|+m)$ time.

The correctness follows from the observation that our algorithm computes exactly the sets $S_0, S_1, \ldots, S_{|w|}$ of the state-set simulation with respect to the automaton $A_{\subsequence}$ on input $w$ (see Section~\ref{sec:naive}). Indeed, the set $S_0$ is correctly computed in the initialisation, where we also compute $H[a] = \{(q,a,q')\in \delta \mid q\in S_0, q'\in Q, (q,a,q')\text{ unmarked}\}$ for every $a \in \Sigma$. Then, assuming that $S_{i-1}$ is correctly computed and the contents of $H$ satisfy $H[a] = \{(q,a,q')\in \delta \mid q\in S_{i-1}, q'\in Q, (q,a,q')\text{ unmarked}\}$ for every $a \in \Sigma$, we compute $S_i$ by first computing $\CC_{w[i]}(S_{i-1})$ by adding all $q'$ for every $(q, w[i], q') \in H[w[i]]$. Then we compute $\CCeps(\CC_{w[i]}(S_{i-1}))$ by adding the $\eword$-closure of the states from $R$, but only with respect to unmarked $\eword$-transitions, which is correct, since all marked transitions lead to states already in $S_{i-1}$ and can therefore be ignored. Finally, $H$ is correctly updated by adding $(r,a,r')$ to $H[a]$ for every $r \in R$ and every transition $(r,a,r') \in L[r]$ with $a \in \Sigma$; as noted before, no transition starting in $r$ was marked, so the transitions added in this step are all unmarked, and there are no other transitions originating in $S_i$ that do not already belong to the lists of~$H$.

Therefore, we have shown a correct algorithm, solving the subsequence matching problem in $O(|w|+m)$ time.
\end{proof}

\subsection{Supersequence Matching in Linear Time}\label{sec:supersequence}

The general idea is again to check $w \in \LL(A_{\supersequence})$, and unlike in the previous subsection, we can afford to build $A_{\supersequence}$ in time $O(m)$ explicitly. However, performing a state-set simulation in linear time with $A_{\supersequence}$ is more difficult. Intuitively speaking, in order to obtain $S_i$, we have to remove from $S_{i-1}$ all states that cannot be reached by any $w[i]$-labelled path from some other state from $S_{i-1}$. It is not directly clear how this can be done by touching each transition only a constant number of times over the whole course of the state-set simulation. One ingredient to achieve this is to first decompose $A_{\supersequence}$ into its \emph{strongly connected components} (SCCs).

Recall that the \emph{SCCs} of a directed graph $G$ are the maximal subsets of vertices $R$
such that, for any two distinct vertices $u$ and $v$ in~$R$, there is a directed
path from $u$ to~$v$ and from $v$ to~$u$ in~$G$. The SCCs of an $\eword$NFA
are simply its SCCs when seeing the automaton as a directed graph (ignoring the
edge labels in $\Sigma \cup \{\eword\}$).

The \emph{condensation} of an $\eword$NFA $A$, denoted by $\cond(A)$, 
is an $\eword$NFA whose states are the SCCs of~$A$. For convenience, for every state $p$ of $A$, we denote by $\SCC_A[p]$ (or simply $\SCC[p]$ if $A$ is clear from the context) the SCC of $A$ that contains $p$, which, by definition, is a state of $\cond(A)$. The transitions of~$\cond(A)$ are as follows: for every transition $(p, a, q)$ of $A$, we have a transition $(\SCC_A[p], a, \SCC_A[q])$ in $\cond(A)$.
Note that $\SCC_A[p] = \SCC_A[q]$ is possible, and note that several transitions may be merged in~$\cond(A)$.
The initial state of $\cond(A)$ is $\SCC_A[q_0]$ and the final state 
of $\cond(A)$ 
is $\SCC_A[q_\ff]$.

\begin{proposition}\label{claim:condensation}
  Given an $\eword$NFA $A$, we can construct $\cond(A)$ in time $O(m)$.
\end{proposition}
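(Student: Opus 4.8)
The plan is to reduce the construction of $\cond(A)$ to a single computation of the SCCs of $A$ viewed as a directed graph, and then to assemble the transitions by one pass over $\delta$. The key observation making everything linear is that, because $A$ is trimmed, we have $n \le 2m$, so that $O(n+m) = O(m)$; thus any algorithm linear in the number of states and transitions is already linear in $m$.

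First I would compute the SCCs of the directed graph underlying $A$, i.e., the graph with vertex set $Q$ in which there is an edge from $p$ to $q$ whenever $(p,a,q) \in \delta$ for some $a \in \Sigma \cup \{\eword\}$ (ignoring the labels). Since the preprocessing of Section~\ref{sec:prelim} already gives us the adjacency lists $L[\cdot]$, I can run a standard linear-time SCC algorithm (Tarjan's or Kosaraju's) directly, in time $O(n+m) = O(m)$. This produces, for every state $p$, an integer identifier $\SCC_A[p] \in \{1, \dots, k\}$ of the SCC containing $p$, where $k \le n$ is the number of SCCs; I store these identifiers in an array so that $\SCC_A[p]$ is retrievable in $O(1)$. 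The initial and final states of $\cond(A)$ are then simply $\SCC_A[q_0]$ and $\SCC_A[q_\ff]$, obtained in constant time.

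Next I would build the transition set. Iterating once over the $m$ transitions of $A$, each $(p,a,q)$ yields a candidate transition $(\SCC_A[p], a, \SCC_A[q])$ of $\cond(A)$, using the precomputed array for the SCC identifiers; this produces at most $m$ triples in total time $O(m)$. To respect the fact that distinct transitions of $A$ may collapse to the same transition of $\cond(A)$ (i.e., to avoid listing any transition more than once), I would radix-sort these triples lexicographically and discard consecutive duplicates. Here the integer-alphabet assumption is what keeps this step linear: each coordinate of a triple is an integer bounded by $O(m)$, since the SCC identifiers are at most $k \le n = O(m)$ and the labels lie in $\Sigma \cup \{\eword\}$ with $|\Sigma| = O(m)$; hence radix sort runs in $O(m)$ on our word-RAM model, producing the transition set of $\cond(A)$ in $O(m)$ total time. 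The only mildly delicate point is precisely this de-duplication: it is what guarantees both a linear running time and an output of size $O(m)$, and everything else is a direct application of linear-time SCC decomposition combined with the trimming assumption $n = O(m)$.
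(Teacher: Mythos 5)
Your proposal is correct and follows essentially the same route as the paper's proof: compute the SCCs of~$A$ in linear time, then make one pass over $\delta$ to emit the transitions $(\SCC_A[p], a, \SCC_A[q])$, with $n = O(m)$ ensured by trimming. Your explicit radix-sort de-duplication is a detail the paper leaves implicit (it merely notes in the definition that ``several transitions may be merged''), but it is a sound and welcome refinement rather than a different approach.
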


\begin{proof}
  We can compute the SCCs of~$A$ in time $O(m)$. Now, we go over the transitions
  of~$A$ and add the corresponding transitions to~$\cond(A)$, again in time $O(m)$.
\end{proof}

Leaving aside the self-loops, the condensation of an $\eword$NFA has the convenient property of being a directed acyclic graph (DAG). However, constructing the condensation of an automaton changes in general the language that it accepts (e.g., if $A$ can read some string $u$ by going from state $p$ to state $q$ and we have $\SCC_A[p] = \SCC_A[q]$, then $\cond(A)$ could read any permutation of $u$, which in general changes the accepted language). However, we shall next see that this is not the case for the $\eword$NFA $A_{\supersequence}$, i.e., we have $\LL(A_{\supersequence}) = \LL(\cond(A_{\supersequence}))$.

\begin{lemma}\label{lem:SCC_basic}
Let $A$ be an $\eword$NFA and let $w \in \Sigma^*$. Then $\LL(A_{\supersequence}) = \LL(\cond(A_{\supersequence}))$.
\end{lemma}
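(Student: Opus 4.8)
The plan is to show the two inclusions $\LL(A_{\supersequence}) \subseteq \LL(\cond(A_{\supersequence}))$ and $\LL(\cond(A_{\supersequence})) \subseteq \LL(A_{\supersequence})$ separately. The first inclusion is the easy direction and holds for \emph{any} $\eword$NFA, not just $A_{\supersequence}$: any run of $A_{\supersequence}$ on a string $w$ induces, by mapping each state $p$ to $\SCC_{A_{\supersequence}}[p]$, a run of $\cond(A_{\supersequence})$ on $w$ with the same label, because every transition $(p,a,q)$ of $A_{\supersequence}$ yields by definition the transition $(\SCC[p],a,\SCC[q])$ of the condensation. Accepting runs map to accepting runs since the initial and final states of $\cond(A_{\supersequence})$ are exactly $\SCC[q_0]$ and $\SCC[q_\ff]$. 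Hence $\LL(A_{\supersequence}) \subseteq \LL(\cond(A_{\supersequence}))$.

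The nontrivial direction is $\LL(\cond(A_{\supersequence})) \subseteq \LL(A_{\supersequence})$, and this is where the special structure of $A_{\supersequence}$ is needed (the inclusion fails for general automata, since contracting an SCC can create new paths). First I would record the key structural property: in $A_{\supersequence}$, by construction every $\Sigma$-transition $(p,b,q)$ comes with a parallel $\eword$-transition $(p,\eword,q)$. Consequently, within any single SCC $C$ of $A_{\supersequence}$, \emph{all} of its internal edges can be traversed using only $\eword$-transitions: for any two states $p,q$ in the same SCC there is a directed path from $p$ to $q$, and each edge of that path can be replaced by its parallel $\eword$-copy (a $\Sigma$-edge has one by construction, an $\eword$-edge is already one). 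Thus any two states in the same SCC of $A_{\supersequence}$ are connected by a path of $\eword$-transitions in $A_{\supersequence}$, in both directions.

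Given this, I would take an accepting run $\rho$ of $\cond(A_{\supersequence})$ on $w$ and lift it to an accepting run of $A_{\supersequence}$ on the same string. The run $\rho$ visits a sequence of SCCs $C_0 = \SCC[q_0], C_1, \ldots, C_k = \SCC[q_\ff]$, where each transition $(C_{j}, a_j, C_{j+1})$ of $\cond(A_{\supersequence})$ arises from some concrete transition $(p_j, a_j, q_j)$ of $A_{\supersequence}$ with $p_j \in C_j$ and $q_j \in C_{j+1}$. I would build a run of $A_{\supersequence}$ inductively, maintaining a concrete state in the current SCC: starting at $q_0$, whenever $\rho$ uses the condensation-transition $(C_j,a_j,C_{j+1})$, I first move inside $C_j$ from my current state to the witness source $p_j$ using only $\eword$-transitions (possible by the structural property, and these consume no input), then take the concrete transition $(p_j,a_j,q_j)$ landing in $q_j \in C_{j+1}$. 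At the end I am in some state of $C_k = \SCC[q_\ff]$, from which I reach $q_\ff$ by $\eword$-transitions. The resulting path is labelled by the same string as $\rho$, namely $w$, and ends in $q_\ff$, so $w \in \LL(A_{\supersequence})$.

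The main obstacle is establishing the intra-SCC $\eword$-reachability property cleanly and verifying that the lifted path is genuinely a run of $A_{\supersequence}$ — in particular that the intra-SCC detours add only $\eword$-labels and therefore do not alter the spelled string. Once the ``every edge in an SCC has an $\eword$-parallel'' observation is in place, both points are routine; the care lies in phrasing the induction so that the concrete state I hold always lies in the SCC that $\rho$ is currently in.
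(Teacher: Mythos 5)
Your proof is correct and takes essentially the same route as the paper: the easy direction is identical, and the hard direction rests on the same key fact that every $\Sigma$-transition of $A_{\supersequence}$ has a parallel $\eword$-transition, so any two states in one SCC are linked by $\eword$-paths, which lets you lift an accepting run of $\cond(A_{\supersequence})$ back to $A_{\supersequence}$. The only difference is organizational: the paper first proves an auxiliary proposition about factors read entirely by stay-transitions (self-loops) on one SCC and then stitches maximal blocks of such transitions together via the move-transitions, whereas you lift the run transition-by-transition while maintaining a concrete state in the current SCC --- the underlying argument is the same.
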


Before proving this lemma, let us first prove an auxiliary result:

\begin{proposition}\label{condensationProp}
Let $A$ be an $\eword$NFA and let $C$ be some fixed state of $\cond(A_{\supersequence})$. Assume that a string $u=u[1]\cdots u[|u|]$ can be read by $\cond(A_{\supersequence})$, starting in state $C$ and following the sequence of transitions $(C, u[1], C), (C, u[2], C),\ldots, (C, u[|u|], C)$. Then there is a path in $A_{\supersequence}$, labelled with $u$, between any two states of $C$.
\end{proposition}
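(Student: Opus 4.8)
The plan is to exploit the defining property of $A_{\supersequence}$: for every $\Sigma$-transition $(p,b,q)$ of $A$, the automaton $A_{\supersequence}$ contains a parallel $\eword$-transition $(p,\eword,q)$. This means that within a single SCC $C$ of $A_{\supersequence}$, any directed edge of the underlying graph can be traversed by reading the empty string, regardless of its original label. First I would fix two arbitrary states $x,y$ of $C$. Since $C$ is a strongly connected component, there is a directed path from $x$ to $y$ in $A_{\supersequence}$ using only vertices of $C$; replacing each edge of this path by its associated $\eword$-transition yields a path from $x$ to $y$ in $A_{\supersequence}$ labelled by $\eword$. So the key preliminary observation is that any two states of the same SCC of $A_{\supersequence}$ are connected by an $\eword$-labelled path inside that SCC.

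Next I would unpack what the hypothesis on $C$ actually gives us. The assumption is that $\cond(A_{\supersequence})$ has a self-loop $(C,u[i],C)$ for each symbol $u[i]$ of $u$. By the definition of the condensation, a self-loop $(C,a,C)$ in $\cond(A_{\supersequence})$ arises from some transition $(p,a,q)$ of $A_{\supersequence}$ with $\SCC[p]=\SCC[q]=C$, i.e.\ from a genuine $a$-labelled transition whose endpoints both lie in $C$. Thus for every $i$ there exist states $p_i,q_i\in C$ with $(p_i,u[i],q_i)\in\delta(A_{\supersequence})$ and $u[i]\in\Sigma$ (the self-loop reading $u[i]$ forces $u[i]\neq\eword$, so this is a real $\Sigma$-transition carrying the letter $u[i]$).

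The main construction then assembles these pieces into a single path in $A_{\supersequence}$. Given two arbitrary states $s,t$ of $C$, I would build the $u$-labelled path from $s$ to $t$ as follows: first take an $\eword$-path from $s$ to $p_1$ inside $C$ (exists by the preliminary observation), then take the transition $(p_1,u[1],q_1)$, then an $\eword$-path from $q_1$ to $p_2$ inside $C$, then $(p_2,u[2],q_2)$, and so on, finishing with an $\eword$-path from $q_{|u|}$ to $t$. All intermediate vertices stay in $C$, all connecting segments are $\eword$-labelled, and the only $\Sigma$-labels read are $u[1],u[2],\ldots,u[|u|]$ in order. Hence this path is labelled exactly by $u$ and runs from $s$ to $t$, as required.

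I do not expect a serious obstacle here; the statement is essentially a structural translation lemma. The one point requiring care is the justification that intra-SCC connectivity can be realised purely by $\eword$-transitions, which is exactly where the special structure of $A_{\supersequence}$ (the added parallel $\eword$-transitions) is used and where the argument would fail for a general $\eword$NFA. I would make sure to state that observation explicitly, since it is the crux that makes the condensation language-preserving and underpins \cref{lem:SCC_basic}.
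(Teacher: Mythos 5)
Your proof is correct and follows essentially the same route as the paper's: both use the fact that every $\Sigma$-transition of $A_{\supersequence}$ has a parallel $\eword$-transition to get $\eword$-labelled intra-SCC connectivity, pull each self-loop $(C,u[i],C)$ back to a concrete transition $(p_i,u[i],q_i)$ with $p_i,q_i\in C$, and stitch these together with $\eword$-paths. The only cosmetic difference is that the paper states the single-symbol case (one non-$\eword$-transition surrounded by $\eword$-paths) and then chains it implicitly, whereas you make the chaining explicit.
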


\begin{proof}
If there is a transition $(C, b, C)$ in $\cond(A_{\supersequence})$, then there must be a transition $(p, b, q)$ in $A_{\supersequence}$ with $p, q \in C$. Now let $p', q'$ be arbitrary states from $C$. Since $p, p', q, q' \in C$, there is a path from $p'$ to $p$ and a path from $q$ to $q'$ in $A_{\supersequence}$ that only use states from $C$. By construction of $A_{\supersequence}$, we can further assume that these paths have only $\eword$-transitions. Hence, we can join these two paths with the transition $(p, b, q)$, which yields a path from $p'$ to $q'$ with exactly one non-$\eword$-transition, which is a $b$-transition. Consequently, if $\cond(A_{\supersequence})$ has a transition $(C, b, C)$, then, for every $p', q' \in C$, $A_{\supersequence}$ has a path from $p'$ to $q'$ with exactly one non-$\eword$-transition, which is a $b$-transition. This directly implies that if a string $u=u[1]\cdots u[|u|]$ is read by $\cond(A_{\supersequence})$, starting in state $C$ and following the sequence of transitions $(C, u[1], C), (C, u[2], C),\ldots, (C, u[|u|], C)$,  then there is a path in $A_{\supersequence}$, labelled with $u$ (each non-$\eword$-transition being potentially surrounded by $\eword$-transitions), between any two states of $C$.
\end{proof}

Now we can give the proof of Lemma~\ref{lem:SCC_basic}.

\begin{proof}
For every transition $(p, b, q)$ of $A_{\supersequence}$, the transition $(\SCC[p], b, \SCC[q])$ of $\cond(A_{\supersequence})$ is called a \emph{move-transition} if $\SCC[p] \neq \SCC[q]$ and a \emph{stay-transition} if $\SCC[p] = \SCC[q]$. 

Let $w \in \Sigma^*$. Every accepting run of $A_{\supersequence}$ on $w$ translates into an accepting run of $\cond(A_{\supersequence})$ on $w$ in the following way: for every transition $(p, b, q)$ of the run of $A_{\supersequence}$, we take its corresponding transition $(\SCC[p], b, \SCC[q])$ in $\cond(A_{\supersequence})$. Thus, $\LL(A_{\supersequence}) \subseteq \LL(\cond(A_{\supersequence}))$.

Now let us consider an accepting run of $\cond(A_{\supersequence})$ on $w$, and let us assume that in this run, we enter some state $C$ with a move-transition (or $C = \SCC_{A_{\supersequence}}[q_0]$ is the initial state), then we read a factor $w[i:j]$ by only stay-transitions, and then we leave $C$ with a move-transition (or $C = \SCC_{A_{\supersequence}}[q_\ff]$ and the run is finished). Since $w[i:j]$ is consumed by only stay-transitions, Proposition~\ref{condensationProp} directly implies that $w[i:j]$ can be read by $A_{\supersequence}$ between any two states of~$C$. 
Combining these paths with the transitions of $A_{\supersequence}$ that gave rise to the move-transitions of the accepting run of~$\cond(A_{\supersequence})$, we deduce that there is an accepting run of~$A_{\supersequence}$ on $w$. Thus, $\LL(\cond(A_{\supersequence})) \subseteq \LL(A_{\supersequence})$.
\end{proof}

Hence,
$\text{$A$ accepts a supersequence of $w$} \xLongleftrightarrow{\text{Sec.~\ref{sec:naive}}} w \in \LL(A_{\supersequence}) \xLongleftrightarrow{\text{Lem.~\ref{lem:SCC_basic}}} w \in \LL(\cond(A_{\supersequence}))\,.$
Moreover, $\cond(A_{\supersequence})$ inherits the crucial properties of $A_{\supersequence}$, namely:

\begin{proposition}\label{inheritPropertiesProposition}
If $\cond(A_{\supersequence})$ has a transition $(C, b, C')$ for some $b \in \Sigma$, then it also has a transition $(C, \eword, C')$. Further, if $S_0, S_1, \ldots, S_{|w|}$ are the state sets of the state-set simulation of $\cond(A_{\supersequence})$ on $w$, then $S_0 \supseteq S_1 \supseteq \ldots \supseteq S_{|w|}$ and $S_0$ contains all states of $\cond(A_{\supersequence})$.
\end{proposition}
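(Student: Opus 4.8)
The plan is to prove the two claimed properties of $\cond(A_{\supersequence})$ by transferring the corresponding known properties of $A_{\supersequence}$ (established in Section~\ref{sec:naive}) across the condensation map $\SCC[\cdot]$.

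For the first property, I would argue directly from the construction of the condensation. Suppose $\cond(A_{\supersequence})$ has a transition $(C, b, C')$ with $b \in \Sigma$. By definition of $\cond(A_{\supersequence})$, this transition arises from some transition $(p, b, q)$ of $A_{\supersequence}$ with $\SCC[p] = C$ and $\SCC[q] = C'$. But $A_{\supersequence}$ has the defining property that every $\Sigma$-transition $(p, b, q)$ is accompanied by an added $\eword$-transition $(p, \eword, q)$. This $\eword$-transition gives rise, by construction of the condensation, to a transition $(\SCC[p], \eword, \SCC[q]) = (C, \eword, C')$ in $\cond(A_{\supersequence})$, as desired.

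For the second property, the key observation is that Lemma~\ref{lem:SCC_basic} already gives $\LL(A_{\supersequence}) = \LL(\cond(A_{\supersequence}))$, but here I need a statement about the state sets of the state-set simulation of $\cond(A_{\supersequence})$ itself, not just about the accepted language. The cleanest route is to reuse the very argument from Section~\ref{sec:naive} that established $S_0 \supseteq S_1 \supseteq \ldots$ for $A_{\supersequence}$, observing that it only used the property that every $\Sigma$-transition is shadowed by a parallel $\eword$-transition. Since the first part of this proposition shows that $\cond(A_{\supersequence})$ enjoys exactly that property, the identical reasoning applies: assuming for contradiction that some state $C \notin S_{i-1}$ is added to $S_i$, then $C$ is reachable by $\eword$-transitions from $\CC_{w[i]}(S_{i-1})$ but not from $S_{i-1}$ directly; yet any $\Sigma$-transition into the predecessor of $C$ used to reach $\CC_{w[i]}(S_{i-1})$ is paired with an $\eword$-transition, making $C$ already reachable by $\eword$-transitions from $S_{i-1}$, a contradiction. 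Finally, $S_0$ contains all states because $S_0 = \CCeps(\{\SCC[q_0]\})$ in $\cond(A_{\supersequence})$, and since in $A_{\supersequence}$ every state is reachable from $q_0$ by $\eword$-transitions, every SCC is reachable from $\SCC[q_0]$ by $\eword$-transitions in the condensation.

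The main obstacle, and the one point requiring care, is ensuring that the monotonicity argument is genuinely carried out for $\cond(A_{\supersequence})$ and not merely inherited as a black box from $A_{\supersequence}$ via the language equality. The language-level equivalence of Lemma~\ref{lem:SCC_basic} does not by itself guarantee that the \emph{state-set simulation} of the condensation is monotone; that conclusion follows only because the structural property (every $\Sigma$-transition is paired with a parallel $\eword$-transition) is itself preserved under condensation, which is precisely what the first part of the proposition establishes. So the logical order matters: I would prove the $\eword$-transition property first, and then invoke it to rerun the monotonicity argument verbatim in the condensed automaton.
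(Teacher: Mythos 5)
Your proposal is correct and follows essentially the same route as the paper: the parallel $\eword$-transition property of $A_{\supersequence}$ is preserved under condensation (since each condensation transition is the image of an underlying transition), and this structural property is then what yields the monotonicity of the state sets and $S_0 = Q'$, by rerunning the argument from Section~\ref{sec:naive}. The paper states these steps tersely; you merely spell them out in more detail, including the correct observation that the language equality of Lemma~\ref{lem:SCC_basic} alone would not suffice.
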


\begin{proof}
By construction, if $A_{\supersequence}$ has a transition $(q, b, q')$ for some $b \in \Sigma$, then it also has a transition $(q, \eword, q')$. This property is maintained by the condensation operation. Moreover, this property directly implies that $S_0 \supseteq S_1 \supseteq \ldots \supseteq S_{|w|}$ and that $S_0$ contains all states of $\cond(A_{\supersequence})$.
\end{proof}
Our next goal is to show that we can implement the state-set simulation of $\cond(A_{\supersequence})$ on $w$ in linear time. For this, we will need to efficiently identify states $C \in S_{i-1}$ that \emph{do not} have a self-loop for the next input symbol $w[i]$ (since if such a self-loop exists, then $C \in S_i$ holds). Identifying such states is challenging: while there are at most $m$ self-loops, there might be up to $n |\Sigma|$ pairs $(C,a)$ overall where $C$ does not have a self-loop labelled~$a$, so we cannot explicitly materialize these pairs. Instead, we use a specific data structure:

\begin{lemma}  \label{lem:todonot}
For an $\eword$NFA $A=(Q,\Sigma,q_0,q_\ff,\delta)$,
we can build in time $O(1)$ a data structure ${\mathcal R}$ storing a set of states, initially empty, and
  supporting the following operations:
  \begin{itemize}
    \item Push: Insert a state in ${\mathcal R}$.
    \item Pop: Given $a \in \Sigma$, retrieve and remove all states $q \in {\mathcal R}$ without a self-loop labelled with $a$
(or indicate that no such state exists).
  \end{itemize}
 Over a sequence of $\ell$ push and pop operations where each state of
  $A$ is pushed at most once, the total running time is $O(\ell+ m)$
  where $m$ is the number of transitions of~$A$.
\end{lemma}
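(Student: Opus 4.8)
The plan is to maintain $\mathcal{R}$ as a single doubly-linked list in which the states appear in the order they were pushed, together with a lazily-initialised array $\mathrm{last}[\cdot]$ indexed by $\Sigma$; the empty structure is thus built in $O(1)$. I assign to each pushed state an increasing integer timestamp (a global counter incremented on every \textsf{Push}) and always append at the tail, so the list is sorted by timestamp and a \textsf{Push} costs $O(1)$. The value $\mathrm{last}[a]$ stores the largest timestamp already \emph{examined for $a$}, and the central invariant is: every state currently in $\mathcal{R}$ whose timestamp is at most $\mathrm{last}[a]$ has a self-loop labelled $a$ (equivalently, every state that was examined for $a$ and lacked such a self-loop has already been removed). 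Initially $\mathrm{last}[a]=0$ for all $a$.

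To implement \textsf{Pop}$(a)$ I would walk the list from the tail towards the head for as long as the current node has timestamp strictly greater than $\mathrm{last}[a]$. By sortedness and the invariant, these nodes are exactly the suffix of states not yet examined for $a$, and they are the only candidates for removal, since any present state of smaller timestamp already survived an earlier \textsf{Pop}$(a)$ and hence carries an $a$-self-loop. For each visited node I test in constant time whether it has a self-loop labelled $a$ (using the self-loop-first convention on $T[\cdot,\cdot]$ from \cref{sec:prelim}): if it does, the state survives and is left in place; if not, I unlink it and add it to the output. The walk stops at the first node of timestamp $\le \mathrm{last}[a]$ (or at the head); I then set $\mathrm{last}[a]$ to the current maximum timestamp, which re-establishes the invariant. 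If nothing was unlinked, I report that no such state exists.

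For correctness I would check that the invariant is preserved: after the update, a present state of timestamp at most the new $\mathrm{last}[a]$ either already had timestamp $\le \mathrm{last}[a]$ (and thus survived some previous \textsf{Pop}$(a)$) or was just examined and kept, so in both cases it has an $a$-self-loop; and the suffix characterisation shows the walk removes precisely the present states lacking an $a$-self-loop. For the complexity, I charge the work of each \textsf{Pop}$(a)$ node by node. The single boundary node costs $O(1)$, charged to the operation, for $O(\ell)$ overall. Every other visited node is either removed — a state is removed at most once and only if previously pushed, so this contributes $O(\ell)$ — or kept, in which case it is examined for $a$ for the \emph{first and only} time, because $\mathrm{last}[a]$ then passes its timestamp and never decreases, and each state is pushed at most once. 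A first kept-examination of $q$ for $a$ is charged to the distinct self-loop transition $(q,a,q)$, and each such transition is charged once, so the kept-examinations total $\sum_q |\{a : (q,a,q)\in\delta\}|\le m$. Altogether the running time is $O(\ell+m)$.

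The crux, and the main obstacle, is making each \textsf{Pop} \emph{output-sensitive}, i.e.\ avoiding the repeated rescanning of states that survive many pops — a naive per-pop scan of $\mathcal{R}$ is genuinely expensive (e.g.\ if all states carry the same self-loop and that symbol is popped $\Theta(\ell)$ times, it costs $\Theta(\ell n)$). The timestamp/suffix mechanism is exactly what defeats this, ensuring a surviving state is touched at most once per symbol on which it actually loops. The whole amortisation therefore rests on two facts that I would prove with care under arbitrarily interleaved pushes, same-symbol pops, and removals triggered by \emph{other} symbols: (i) for a fixed symbol, the not-yet-examined states always occupy a contiguous tail of the list, and (ii) a state examined and kept for $a$ is never revisited by a later \textsf{Pop}$(a)$.
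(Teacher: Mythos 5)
Your proposal is correct and follows essentially the same approach as the paper's proof: a doubly-linked list kept in insertion order, a per-symbol record of which states have already been examined (the paper uses a lazily initialised Boolean array $B[q,a]$ whose marks form a prefix of the list; your timestamp watermark $\mathrm{last}[a]$ is just a compressed encoding of that prefix), pops that scan only the unexamined suffix from the newest end, and the same amortisation charging removed states to pushes and kept examinations to the corresponding self-loop transitions, for $O(\ell+m)$ total. The two facts you flag as needing care --- contiguity of the unexamined suffix and that a kept state is never revisited for the same symbol --- are precisely the paper's two invariants, and they hold by the argument you sketch.
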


\begin{proof}
  We store the states in a doubly linked list ${\mathcal R}$ where states are inserted at
  the right. Initially, this list is empty.  
  
  We see ${\mathcal R}$ as implying a total order $<$ on the states that it contains: $q'<q$ means that $q'$ occurs to the left of $q$ in ${\mathcal R}$.
  For each state $q \in Q$  of the automaton and symbol $a\in \Sigma$,
  we keep a Boolean array $B[q,a]$, lazily initialised,
  intuitively indicating whether state $q$ has been examined for symbol $a$ (i.e., $B[q,a]=1$ if state $q$ has been examined for symbol $a$). Note that, initially, no element of $B[\cdot,\cdot]$ is initialised (meaning that $B[q,a]\neq 1$ for all $q\in Q$ and $a\in \Sigma$).

  While performing push and pop operations on ${\mathcal R}$, we will maintain two invariants:
  \begin{enumerate}
    \item\label{inv1} for every $a \in \Sigma$ and states $q, q' \in {\mathcal R}$ with $q' < q$, if $B[q,a]=1$, then $B[q',a]=1$.
	\item\label{inv2} if $B[q,a]=1$ then $q$ has a self-loop labelled by~$a$.
  \end{enumerate}
  Invariant~\ref{inv1} means that if $\mathcal{R} = (q_1, q_2, \ldots, q_k)$, then, for every $a \in \Sigma$, there is a $j_a \in \{1, 2, \ldots, k + 1\}$, such that $B[q_i, a] = 1$ for $1 \leq i < j_a$ and $B[q_i, a] \neq  1$ for $j_a \leq i \leq k$. Moreover, Invariant~\ref{inv2} means that, for every $i$ with $1 \leq i < j_a$, the state $q_i$ has a self loop labelled with $a$; note that this implication is only in one direction, as there can be states $q_i$ stored in ${\mathcal R}$, with $i>j_a$ such that $q_i$ has a self-loop labelled with $a$. (Intuitively, such states have not yet been examined.) %
  
Both invariants trivially hold after the initialisation (as no element of $B[\cdot,\cdot]$ was set to $1$). 

To perform a push operation with a state $q$, we simply insert $q$ at the (right) end of ${\mathcal R}$, which can be done in $O(1)$ time if we maintain a pointer to the respective end of~${\mathcal R}$. Invariant~\ref{inv1} is maintained (as no change is made to the elements of $B[\cdot,\cdot]$
  and the new state $q$ is now maximal for~$<$); Invariant~\ref{inv2} is also clearly maintained.

For a pop operation with symbol $a$, we do the following: We traverse ${\mathcal R}$ from right to left until we encounter a state with $B[q,a]=1$
  or we finish processing $\mathcal{R}$.
  Now, for every state $q \in \mathcal{R}$ with $B[q, a] \neq 1$ which is traversed, we do the following: 
if $q$ does not have a self-loop labelled with~$a$ then we retrieve $q$ and remove it from $\mathcal{R}$, and if
  $q$ has such a self-loop then $q$ stays in $\mathcal{R}$ but we set $B[q, a] \gets 1$. This maintains Invariant~\ref{inv2}, since we set $B[q, a]\gets 1$ only in the case that $q$ has a self-loop labelled by~$a$. That Invariant~\ref{inv1} is maintained follows from the fact that after the pop operation, we have $B[q, a] = 1$ for every $q \in \mathcal{R}$.
  Indeed, after the pop, the only states of $\mathcal{R}$ that are still in $\mathcal{R}$ and have not been traversed are those states $q'$ to the left of the rightmost state $q$ for which we had $B[q,a]=1$ before the pop (if such a $q$ does not exist then all states of~$\mathcal{R}$ were traversed); and these states $q'$ also had $B[q',a]=1$ as Invariant~\ref{inv1} held before the current pop operation.

The total running time is proportional to the total number of 
  operations~$\ell$,
  plus the total number of states traversed in pop operations.
  Now, every state $q$ traversed in a pop operation is either returned and removed from~$\mathcal{R}$, or we set $B[q,a] \gets 1$ for the symbol $a$ for which we are popping.
  The total number of returned states, say $r$, is upper-bounded by the total number of push operations (a state had to be pushed first for it to be removed); therefore, $r$ is in~$O(\ell)$. Now, every time we set $B[q, a] \gets 1$ for a state $q \in \mathcal{R}$ with $B[q, a] \neq 1$ in a pop operation, this accounts for a self-loop of $q$ labelled with $a$, and we do it only once for the pair $(q,a)$. This means that the total number of these assignments is bounded by $O(m)$.
  This concludes the complexity analysis and concludes the proof.
\end{proof}

We will use the above structure for $A = \cond(A_\supersequence)$. We can finally show our main result:

\begin{theorem}\label{thm:supersequence}
  Given a string $w$ and $\eword$NFA $A$ with $n$ states and $m$ transitions,
  the supersequence matching problem can be solved in time $O(|w|+m)$.
\end{theorem}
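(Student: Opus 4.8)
The plan is to solve the supersequence matching problem by checking whether $w \in \LL(\cond(A_{\supersequence}))$, which is equivalent to the original problem by \cref{lem:SCC_basic} and the remarks following it. First I would construct $A_{\supersequence}$ explicitly in time $O(m)$ (affordable here, unlike in the subsequence case), and then build its condensation $\cond(A_{\supersequence})$ in time $O(m)$ using \cref{claim:condensation}. The key structural facts I will exploit are those guaranteed by \cref{inheritPropertiesProposition}: the state sets of the simulation are monotone decreasing, $S_0 \supseteq S_1 \supseteq \ldots \supseteq S_{|w|}$, with $S_0$ containing all states, and every $\Sigma$-labelled transition $(C,b,C')$ is accompanied by an $\eword$-transition $(C,\eword,C')$. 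The monotonicity means each state leaves the active set at most once over the entire run, so the total amount of ``work'' removing states should be chargeable to at most $n$ removals plus the transition count.

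The core of the update step at position $i$ is to compute $S_i$ from $S_{i-1}$ by deleting exactly those states that cannot be reached by a $w[i]$-labelled path from within $S_{i-1}$. The crucial observation is that, because of the $\eword$-transition property and the fact that we work on the condensation (which, ignoring self-loops, is a DAG), a state $C$ survives into $S_i$ precisely when it either carries a self-loop labelled $w[i]$, or it is reachable by $\eword$-transitions from some surviving state that does carry such a self-loop. So the plan for each update is: use the data structure $\mathcal{R}$ of \cref{lem:todonot} (instantiated with $A = \cond(A_{\supersequence})$) to identify and pop, for the symbol $a = w[i]$, all currently-active states \emph{without} a self-loop labelled $a$; these are the candidates for removal. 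A popped state should be deleted from $S_i$ unless it can reach, via $\eword$-transitions through still-surviving states, a state that retains an $a$-self-loop. I would process the popped candidates in reverse-topological order of the DAG so that survival information propagates correctly, re-pushing any state that turns out to survive back into $\mathcal{R}$.

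The main obstacle I anticipate is exactly this reachability-through-$\eword$-transitions bookkeeping: deciding which popped states genuinely survive, while touching each transition only a constant number of times amortised over the whole simulation. The danger is that a naive recomputation of $\eword$-reachability at each step costs $O(m)$ per position and blows up to $O(|w|m)$. To avoid this, I would exploit that a state, once removed, never returns (by \cref{inheritPropertiesProposition}), so I can afford to scan the $\eword$-transitions entering or leaving a state only at the moment it is finally removed, charging that scan to the removal. Combined with the amortised $O(\ell + m)$ guarantee of the $\mathcal{R}$ structure (each state pushed $O(1)$ times, and the total pop-traversal cost bounded by $O(m)$ via the $B[\cdot,\cdot]$ marking), each transition is inspected a bounded number of times. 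Summing the per-position constant overhead gives the $O(|w|)$ term and the one-time structural processing gives the $O(m)$ term, for a total of $O(|w| + m)$.

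Finally, I would verify correctness by arguing inductively that the maintained set equals the true $S_i$ of the state-set simulation of $\cond(A_{\supersequence})$ at each step, invoking \cref{inheritPropertiesProposition} for the survival characterisation and \cref{condensationProp}/\cref{lem:SCC_basic} to transfer acceptance back to $A_{\supersequence}$ and hence to the existence of a supersequence of $w$ accepted by $A$. The answer is ``yes'' iff $\SCC_{A_{\supersequence}}[q_\ff] \in S_{|w|}$.
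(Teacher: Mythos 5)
Your overall skeleton matches the paper's proof (build $\cond(A_{\supersequence})$, exploit the monotonicity $S_0 \supseteq S_1 \supseteq \ldots \supseteq S_{|w|}$ from \cref{inheritPropertiesProposition}, use the structure $\mathcal{R}$ of \cref{lem:todonot}, and charge removals amortised against $m$), but the heart of your update step --- the characterisation of which states survive into $S_i$ --- is wrong, and this breaks correctness. You claim a state $C$ survives iff it has a self-loop labelled $a = w[i]$ or is $\eword$-reachable \emph{from} a surviving state carrying such a self-loop (and later, contradictorily, iff it can \emph{reach} a state with such a self-loop). The correct condition, which the paper formalises via its notion of \emph{good} states, is: $C$ survives iff there exists \emph{any} transition $(P, a, P')$ with $P \in S_{i-1}$ such that $C$ is reachable from $P'$ by $\eword$-transitions. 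Crucially, $P$ need not survive, need not carry a self-loop, and may differ from $P'$: non-self-loop $a$-transitions between distinct SCCs also seed survival. Concretely, let $A$ accept exactly the string $a$, so $\cond(A_{\supersequence})$ has two states $C_0, C_\ff$ with transitions $(C_0, a, C_\ff)$ and $(C_0, \eword, C_\ff)$ and no self-loops anywhere. On input $w = a$ the correct simulation gives $S_1 = \{C_\ff\}$ and accepts (indeed $a$ is a supersequence of $a$ accepted by $A$), whereas under either of your survival rules the seed set is empty, every state is deleted, and your algorithm rejects. This is also why the paper processes deletions in \emph{forward} topological order starting from the roots: when a bad state is removed, its outgoing transitions are scanned, targets of its $a$-labelled transitions are marked as surviving (the source only needs to have been in $S_{i-1}$), and a state is removed only once \emph{all} its incoming transitions have been deleted without producing such a mark or an $a$-self-loop. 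A reverse-topological pass propagating survival backwards toward self-loop states cannot recover this.

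A second, smaller issue is your plan to ``re-push any state that turns out to survive back into $\mathcal{R}$''. The $O(\ell + m)$ guarantee of \cref{lem:todonot} holds only when each state is pushed at most once; the paper respects this because a popped state is removed permanently, and a state is pushed only when it first becomes a root. With re-pushing, a surviving state lacking an $a$-self-loop would be popped and re-pushed at every occurrence of $a$ in $w$, so the pop traversals no longer amortise against $m$ and the total cost can grow to order $|w|$ times the number of states, destroying the $O(|w|+m)$ bound.
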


\begin{proof}
We first construct $\cond(A_{\supersequence})$ from $A$, which can be done in time $O(m)$ (see Section~\ref{sec:naive} and Proposition~\ref{claim:condensation}). As noted in Section~\ref{sec:naive}, we can solve the supersequence matching problem by checking whether $w \in \LL(A_{\supersequence})$, which, by Lemma~\ref{lem:SCC_basic}, can be done by checking $w \in \LL(\cond(A_{\supersequence}))$. Since $|\cond(A_{\supersequence})| = O(|A|)$, it suffices to show that we can check $w \in \LL(\cond(A_{\supersequence}))$ in time $O(|w| + |\cond(A_{\supersequence})|)$. 

For convenience, we set now $A' = \cond(A_{\supersequence}) = (Q',\Sigma,q'_0,q'_\ff,\delta')$ and let $n' = |Q'|$ and $m' = |\delta'|$. For $A'$, we construct the arrays $L[\cdot]$ and $T[\cdot,\cdot]$ defined in Section~\ref{sec:prelim}. We will now show how to check $w \in \LL(A')$ in time $O(|w| + m')$. We will still use the special structure of $A'$ for our algorithm, i.e., it is a DAG (potentially with self-loops) and has the property described in Proposition~\ref{inheritPropertiesProposition}.

We start by pre-computing several data structures:

\begin{itemize}
\item $\loops[\cdot,\cdot]$ is a lazily initialised $|Q'| \times |\Sigma|$ Boolean array with $\loops[q,a]=1$ if $(q,a,q)\in \delta'$. 
\item ${\mathcal L}_q$ for each $q\in Q'$ is a list that contains all $a\in \Sigma$ with $\loops[q,a]=1$. 
\item $\inarray[\cdot]$ is an array with $\inarray[q]=|\{(q',a,q)\in \delta'\mid q'\in Q'\setminus \{q\}, a\in \Sigma\cup\{\eword\} \}|$ for each $q\in Q'$.
\item $\roots$ is a set with $\roots = \{q\in Q'\mid \inarray[q]=0\}$. 
\item $\markarray[\cdot]$ is an array with $\markarray[q]=0$ for all $q\in Q'$.
\end{itemize}

Note that $\loops$ and ${\mathcal L}_q$ provide the information of which states have loops for which symbols, $\inarray[q]$ is the number of transitions that have $q$ as target (while originating in other states), $\roots$ stores all roots of $A'$, and $\markarray$ is a Boolean array that shall be used to mark states (initially, all states are unmarked). Note that since $A'$ has a DAG structure, we know that $\roots \neq \emptyset$.

The first three data structures can be computed simultaneously by considering every transition of $A'$ at most once; the last two data structures can be computed by considering each state at most once. Thus, all these data structures can be computed in time $O(m')$. 

By Lemma \ref{lem:todonot}, we can construct in time $O(|\roots|)$ a data structure ${\mathcal R}$, storing initially the set of states in $\roots$, and supporting the following operations:

  \begin{itemize}
    \item Push: Insert a state $q$ in ${\mathcal R}$.
    \item Pop: Given $a \in \Sigma$, retrieve all states $q \in{\mathcal R}$ such that $a \notin {\mathcal L}_q$ and remove them from~${\mathcal R}$ (or indicate that no such state exists).
  \end{itemize}

The initialisation of ${\mathcal R}$ is done by repeatedly pushing the states of $\roots$ in it. Further, the time needed by ${\mathcal R}$ to process a sequence of $\ell$ push and pop operations where each state of $A'$ is pushed at most once in total is $O(\ell +  m)$; this includes the initial push of the states in $\roots$.
 
Our algorithm is an efficient implementation of the state-set simulation for the automaton $A'$ on $w$ (see Section~\ref{sec:naive}). That is, for every $i = 0, 1, 2, \ldots, |w|$, we compute a state-set $S_i$ of all states of $A'$ reachable from $q_0$ by a path labelled with $w[1:i]$. Moreover, since $A' = \cond(A_{\supersequence})$ and Proposition~\ref{inheritPropertiesProposition} holds, we will have $Q'=S_0 \supseteq S_1 \supseteq \ldots \supseteq S_{|w|}$. 

We implement all the sets $S_i$ with a single Boolean characteristic array, which allows us to obtain $S_i$ from $S_{i-1}$ by changing $|S_{i-1} \setminus S_i|$ many entries from $1$ to $0$ (due to the fact that $S_{i-1} \supseteq S_{i}$). 

We start with $S_0=Q'$. Recall that the data structure ${\mathcal R}$ contains the set of states $\roots$. 

Let us first give an intuitive explanation of how $S_i$ is computed from $S_{i-1}$. Let $a=w[i]$.
Since $S_{i-1} \supseteq S_i$, we only have to identify those states that must be deleted from $S_{i-1}$ in order to obtain $S_i$, which are exactly the states that cannot be reached by an $a$-labelled path from any other state of $S_{i-1}$. Since $A'$ has a DAG structure, this can be done as follows. 

Every root without a self-loop with $a$ has to be deleted, and every root with a self-loop with $a$ will be in $S_i$. In particular, we can ignore the outgoing transitions of those roots that will be in $S_i$ in the rest of the update step, since they necessarily point to states that will also be in $S_i$. This means that we have to be able to initially get those roots that have no self-loop with $a$, for which we employ the data structure $\mathcal{R}$. Now if we remove some root $q$, then each of its outgoing transitions $(q, x, q')$ with $x \in \Sigma\cup \{\eword\}$ has to be removed, but we also have to process each such transition as follows. If $x = a$, then $q'$ will necessarily be in $S_i$ and we do not have to consider $q'$ again in this update step (we also mark $q'$ accordingly). 
If, on the other hand, $x \neq a$, then we first check if $q'$ has a self-loop with $a$, which is also sufficient for $q'$ being in $S_i$ (again, we mark it accordingly and do not have to consider it again). If $q'$ has no self-loop with $a$ and at least one other incoming transition, then we do not know whether it is in $S_i$ or not, since this might be determined by some other incoming transition from a root. If $q'$ has no self-loop with $a$, but becomes a root when we delete $(q, x, q')$, then it has to be treated as one of the initial roots (i.e., it will be deleted at some point and its outgoing transitions have to be deleted and processed as described above). In particular, this means that when we delete states and transitions, we have to update the in-degrees of states to be able to identify those states that become roots. Let us now formally define this algorithm.

We initially set $S_i \gets S_{i-1}$. Then we extract from ${\mathcal R}$ all states $q$ such that $a \notin {\mathcal L}_q$ (i.\,e, $q$ has no self-loop with $a$) and store them in a queue $G$. 
While $G$ is not empty, we pop state $q$ from the queue $G$ and remove it from $A'$ and $S_i$. Then we remove from $A'$ each transition $(q, b, q')$ with $b \in \Sigma\cup \{\eword\}$ in $L[q]$ and proceed as follows:
\begin{enumerate} 
\item If $b= a$, then
\begin{itemize}
\item set $\inarray[q'] \gets \inarray[q']-1$; 
\item set $\markarray[q']=i$ (this state needs to stay in $S_i$, and will no longer be analysed in this update step); 
\item if $\inarray[q']=0$ then insert $q'$ into ${\mathcal R}$
(i.e., it is now a root of $A'$ and has to be treated like one in the next update step). 
\end{itemize}
\item If $b\neq a$, then
\begin{itemize}
\item set $\inarray[q'] \gets \inarray[q']-1$;
\item if $\loops[q',a]=1$ then set $\markarray[q']=i$ (meaning that this state needs to stay in $S_i$, and will no longer be analysed in this update step); 
\item if $\inarray[q']=0$ and $\markarray[q']=i$ then insert $q'$ into ${\mathcal R}$; 
\item if $\inarray[q']=0$ and $\markarray[q']\neq i$ then push $q'$ to $G$ (this state needs to still be removed from $S_i$, at a later point during the processing of $w[i]$). 
\end{itemize}
\end{enumerate}

Once the set $S_{|w|}$ is computed, we provide a positive answer to the supersequence matching problem if and only if $S_{|w|}$ contains the final state $q'_\ff$.

In order to prove the correctness of the algorithm, it is sufficient to prove that we really compute the sets $S_0, S_1, \ldots, S_{|w|}$ of the state-set simulation of $A'$ on $w$. 

Since we start with $S_0 = Q'$, this is clearly true for $S_0$ due to the property of Proposition~\ref{inheritPropertiesProposition}. Moreover, the following invariant is satisfied: for every $q \in S_0$, $\inarray[q]$ stores the number of transitions, other than self-loops, that have $q$ as target and some $q '\in S_{0}$ as origin, and ${\mathcal R}$ stores exactly those states $q \in S_0$ with $\inarray[q]=0$.

Let us next assume that $S_{i-1}$ has been computed correctly for some $i \in \{1, 2, \ldots, |w|\}$, and that the invariant is satisfied, i.e., for every $q \in S_{i-1}$, $\inarray[q]$ stores the number of transitions, other than self-loops, that have $q$ as target and some $q '\in S_{i-1}$ as origin, and ${\mathcal R}$ stores exactly those states $q \in S_{i-1}$ with $\inarray[q]=0$. We will now show that $S_i$ is correctly computed in such a way that the invariant is also satisfied. 

Let us call every state $q \in S_{i-1}$ \emph{good} if there is some state $p \in S_{i-1}$ with a $w[i]$-labelled path from $p$ to $q$ (which can also be a self-loop); and denote states from $S_{i-1}$ as \emph{bad} if they are not good. This means that $S_i$ is exactly the set of good states from $S_{i-1}$. Observe that if a state $q$ is good then all states that can be reached from $q$ are also good.
We will show that our algorithm constructs $S_i$ by deleting exactly the bad states from $S_{i-1}$. 

Assume $|S_{i-1}|=h$ and let us consider an arbitrary topological sorting $q_1<\ldots <q_h$ of the states in $S_{i-1}$, w.r.t.\ the transitions between states of $S_{i-1}$ which are not self-loops. We will show, by induction on $j$, with $1\leq j\leq h$, that the states removed by our algorithm from $q_1<\ldots <q_j$ are exactly the bad states of $\{q_1,\ldots,q_j\}$. For $j=1$, this clearly holds. Indeed, $q_1$ must be a root of $S_{i-1}$, as it comes first in the topological sorting. It is removed if and only if it has no self-loop labelled with $w[i]$; but this is equivalent to $q_1$ being a bad state. Assume now that our claim holds for $j-1$, and let us show that it holds for $j$. We want to show that $q_j$ is removed if and only if $q_j$ is bad. Assume first that $q_j$ is removed by our algorithm.  If $q_j$ is a root of $S_{i-1}$, the same argument as for $q_1$ holds. Assume $q_j$ is not a root. This means that all its parents were also removed by our algorithm (as a state $q$ is removed only if $\inarray[q]=0$). But all the parents of $q_j$ must come before $q_j$ in any topological sorting of $S_{i-1}$. They were removed, so they are bad states, by the induction hypothesis. Moreover, when the parents of $q_j$ were removed, no $w[i]$-transition targeting $q_j$ and no $w[i]$-self-loop of $q_j$ were discovered. Thus, $q_j$ is also bad. For the converse, assume $q_j$ is a bad state. This means that all its parents (if any) are also bad. But, as before, all the parents of $q_j$ (if any) must come before $q_j$ in any topological sorting of $S_{i-1}$. By the induction hypothesis, as they are bad, they are also removed by our algorithm. During their removal, we explore all transitions connecting them to $q_j$, and this allows us to check whether there are any $w[i]$-transitions targeting $q_j$ or any $w[i]$-self-loops of $q_j$. The existence of such a transition
would be a contradiction to the fact that $q_j$ is bad. Thus, we never set $\markarray[q_j]\gets i$, but we decrement $\inarray[q_j]$ until it becomes $0$. When this happens, $q_j$ is inserted in $G$, so it will be removed. This concludes the proof of our claim. Hence, we compute $S_i$ correctly. 

With respect to the invariant, we observe that the values $\inarray[q]$ are correctly updated whenever we remove a transition, and whenever a state becomes a root, but is not removed (i.e., it will be a root in $S_i$), then this state is explicitly added to ${\mathcal R}$.

This concludes the proof of correctness.

As far as the complexity is concerned, the number of steps of the algorithm is
proportional to the sum of $|w|$, the number of removed edges (which is
$O(m')$),
and the total time spent with the data structure of \cref{lem:todonot}. We spend
time $O(n')$ to populate it initially, and overall the number of operations
performed is $O(|w| + m')$ (and note that each state is only pushed at most
once).
In particular, it is important to efficiently access the roots of the DAG $S_i$, and then start the exploration of $S_i$ from these states; the fact that we can retrieve these states in time proportional to their number is ensured by the usage of the data structure of \cref{lem:todonot}. Therefore, our statement holds.
\end{proof}

\section{Quantitative Problem Variants}\label{sec:weighted}

For every $x \in \{\infix, \prefix, \extension, \leftextension, \subsequence, \supersequence\}$, we consider now the min- and max-variant of the $\preceq_x$-matching problem, i.e., computing a shortest or longest string $u \in \LL(A)$ with $u \preceq_x w$.
We show an upper bound of $O(|w|m)$ for all these problem variants.\footnote{Of course, these upper bounds also give upper bounds on the corresponding (non-quantitative) matching problems for $\infixrel$, $\prefixrel$, $\extensionrel$ and $\leftextensionrel$; see Table~\ref{table:resultTablesUpperBounds}.}

\begin{theorem}\label{upperBoundsMinmaxTheorem}
The min- and max-variant of the matching problem can be solved in time $O(|w| m)$ for all relations $\preceq_x$ with $x \in \{\infix, \prefix, \extension, \leftextension, \subsequence, \supersequence\}$.
\end{theorem}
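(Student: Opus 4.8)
The plan is to solve all six min- and max-variants by a single unified approach based on a weighted product construction between the input $\eword$NFA $A$ and a small automaton $B_x$ that recognizes exactly the strings in relation $\preceq_x$ with $w$. First I would observe that for each relation $x$, the set $\relSet_{\preceq_x}(w)$ is itself a regular language, accepted by an automaton $B_x$ of size $O(|w|)$: for the infix and prefix relations this is a standard path-like automaton on the positions of $w$ (with appropriate ``free'' prefix/suffix gadgets for infix), for the (left-)extension relations we attach $\Sigma$-labeled self-loops or free segments at the ends, and for subsequence/supersequence we use the constructions already recalled in the excerpt (the latter two being essentially $A_{\subsequence}$ and $A_{\supersequence}$ applied to the single string $w$, hence again of size $O(|w|)$ up to the $n|\Sigma|$ blow-up, which I would take care to keep linear in $|w|$ by exploiting the linear structure of $w$).

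Next I would form the product automaton $P = A \times B_x$, whose accepted language is $\LL(A) \cap \relSet_{\preceq_x}(w)$ and which has $O(|w| \cdot m)$ transitions since $B_x$ has $O(|w|)$ states and $A$ has $m$ transitions. A string $u$ with $u \preceq_x w$ and $u \in \LL(A)$ exists if and only if $P$ has an accepting run, and the length of such a $u$ equals the number of $\Sigma$-labeled transitions along the corresponding path. Thus the min-variant reduces to computing a shortest accepting path in $P$ where each $\Sigma$-transition has weight $1$ and each $\eword$-transition has weight $0$, and the max-variant reduces to computing a longest such path (or detecting that arbitrarily long accepting paths exist).

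For the min-variant I would run a shortest-path computation from the initial state of $P$ to its final state. Because all weights are in $\{0,1\}$, this is exactly the setting of a $0$-$1$ BFS (or Dijkstra with a deque), which runs in time linear in the size of $P$, i.e. $O(|w| m)$. For the max-variant, the key point is that unbounded length corresponds to a cycle containing a $\Sigma$-transition that lies on a path from the initial to the final state; I would first contract or analyze the strongly connected components of $P$, report ``arbitrarily large'' exactly when some reachable-and-co-reachable SCC contains a $\Sigma$-labeled edge, and otherwise compute the longest accepting path. In the latter case the relevant subgraph is a DAG after restricting to the finite-length regime (every SCC contributes only $\eword$-transitions among the live states), so a longest path can be found by a topological-order dynamic program in time $O(|w| m)$.

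The main obstacle I expect is twofold. First, handling $\eword$-transitions correctly in both the shortest- and longest-path computations: zero-weight edges must not create spurious infinite improvements in the min case (they cannot, since weights are nonnegative) but can create zero-weight cycles that must be collapsed, and in the max case one must carefully distinguish zero-weight ($\eword$) cycles, which are harmless, from cycles carrying a $\Sigma$-transition, which force the ``arbitrarily large'' answer. Second, keeping the automata $B_x$ genuinely of size $O(|w|)$ rather than $O(|w| \cdot |\Sigma|)$ for the relations (subsequence, extension) that naively add a self-loop for every alphabet symbol at many states; here I would reuse the on-the-fly simulation idea from the linear-time sections so that the product is explored without materializing all such loops, ensuring the overall bound stays at $O(|w| m)$. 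Once these are addressed, the same weighted-product template yields the claimed bound uniformly across all six relations.
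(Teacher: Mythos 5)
Your proposal is correct and takes essentially the same route as the paper: the paper's ``product graph'' $G_{A,w}$ augmented with relation-specific extra vertices and extra edges is exactly your weighted product $A \times B_x$ (weight $1$ on edges where $A$ reads a symbol, weight $0$ otherwise), and the paper likewise solves the min-variant by a linear-time $0$/$1$ shortest-path computation (Dijkstra with a bucket queue, equivalent to your $0$-$1$ BFS) and the max-variant by declaring ``unbounded'' precisely when a reachable and co-reachable SCC contains a weight-$1$ edge, else running a topological-order longest-path DP. One minor correction: the $O(|w|\cdot|\Sigma|)$ self-loop blow-up you flag affects $B_{\supersequence}$ (self-loops on every symbol at every position), not $B_{\subsequence}$, whose skip transitions can simply be $\eword$-labelled; your implicit, on-the-fly handling resolves it regardless, just as the paper's direct definition of ``extra-edges'' in the product does.
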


In Section~\ref{sec:fineGrained} we show that the $O(|w|m)$ upper bound for all the considered variants is conditionally tight, i.\,e., $O((|w|m)^{1 - \epsilon})$ is not possible unless SETH fails.
The rest of this section is devoted to the proof of
Theorem~\ref{upperBoundsMinmaxTheorem}. Before giving this proof, we need some preliminaries.

Recall that a path in a directed graph  may traverse the same vertex multiple times (a simple path is a path where each vertex is traversed once). 
An \emph{$st$-path} is a path from a vertex $s$ to a vertex $t$. When the edges of a directed graph are weighted, let $\text{weight}(e)$ be the \emph{weight} of edge $e$ and let the
weight of a path be the sum of weights of the successive edges that it traverses (if
an edge is traversed multiple times then the weight is added as many times as
the edge is traversed).
In the case where the edges are labelled, the \emph{label} of a
path is the concatenation of the labels of the successive edges that it
traverses (if the edges carry labels in $\Sigma \cup \{\eword\}$ then the
label of the path is a string in $\Sigma^*$).
We show the following ancillary result:

\begin{lemma}
  \label{lem:paths}
  Let $G$ be a directed graph with edges weighted by $0$ and $1$, and let $s$ and $t$ be
  two distinguished vertices. 
We can check in time $O(|G|)$ whether there is an $st$-path in $G$, and if an $st$-path exists, then we can compute in time $O(|G|)$:
   \begin{itemize}
    \item an $st$-path with minimum weight.
    \item whether there exist $st$-paths of unbounded weight in~$G$.
    \item an $st$-path with maximum weight, provided that the weight of $st$-paths is not unbounded.
  \end{itemize}
\end{lemma}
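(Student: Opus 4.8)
The plan is to reduce all three computations to standard linear-time graph algorithms, using a restriction to a ``relevant'' subgraph together with a strongly connected component (SCC) decomposition to control the effect of cycles (recall that paths here need not be simple).

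First, reachability and the minimum-weight path are straightforward. Checking whether an $st$-path exists is a single graph search from $s$ in time $O(|G|)$. For the minimum weight, I observe that since all edge weights are nonnegative, a minimum-weight $st$-path never benefits from repeating a vertex, so it suffices to compute shortest-path distances; because weights lie in $\{0,1\}$, a deque-based $0$-$1$ BFS computes these distances, together with predecessor pointers for path reconstruction, in time $O(|G|)$.

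Next, for the unbounded and maximum-weight questions, I first restrict attention to the subgraph $G'$ induced by the vertices that are both reachable from $s$ and can reach $t$; these two vertex sets are obtained by a forward search from $s$ and a backward search from $t$, in time $O(|G|)$. Every $st$-path lies in $G'$, and conversely every vertex of $G'$ lies on some $st$-path, so the weights of $st$-paths are unbounded if and only if $G'$ contains a cycle of positive weight. I claim this happens exactly when some weight-$1$ edge $(u,v)$ has both endpoints in the same SCC of $G'$: such an edge, together with a path back from $v$ to $u$ inside the SCC, forms a positive cycle that can be inserted arbitrarily often into an $st$-path; and conversely any positive cycle lies within a single SCC and must use a weight-$1$ edge. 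Computing the SCCs of $G'$ and scanning all edges for this condition takes $O(|G|)$.

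Finally, when the weight is bounded, no SCC of $G'$ contains a weight-$1$ edge, so every path internal to an SCC has weight $0$ and any two vertices of a common SCC are joined by a weight-$0$ path. Hence the maximum weight of an $st$-path equals the maximum weight of a path from the component of $s$ to the component of $t$ in the condensation of $G'$, where each inter-component edge keeps its weight; this condensation is a DAG, so a topological-order dynamic program computes the optimal value and the corresponding sequence of components in time $O(|G|)$. To output an actual path, I reconstruct, inside each visited component, a weight-$0$ path connecting its entry vertex to the chosen exit vertex, again in total time $O(|G|)$. The main obstacle is precisely this maximum-weight case: since paths may revisit vertices, correctness hinges on the structural observation that in the bounded case intra-component travel is ``free'', which is what licenses collapsing $G'$ to its condensation and reducing to a DAG longest-path computation.
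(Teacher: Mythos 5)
Your proposal is correct and follows essentially the same route as the paper's proof: trim to the vertices lying on $st$-paths, solve the minimum-weight case by a linear-time shortest-path computation for $0$/$1$ weights (the paper uses bucket-based Dijkstra where you use deque-based $0$-$1$ BFS, an interchangeable choice), detect unboundedness via a weight-$1$ edge inside an SCC of the condensation, and compute the maximum weight by a topological-order dynamic program on the condensation. Your write-up is somewhat more explicit than the paper's on why intra-SCC travel is free in the bounded case and how to reconstruct the witnessing path, but the underlying argument is the same.
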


Note that the lemma is stated for edge weights $0$ and $1$, but applies more
generally to weights bounded by a constant, as we can split every edge of weight larger than $1$ into a constant number of length-$1$ edges connecting auxiliary nodes.
Note that the lemma also applies to labelled directed graphs: in this case, when we
obtain a path by applying the lemma, then we can retrieve its label (in~$\Sigma^*$) from its sequence of edges.
Let us prove the lemma:

\begin{proof}[Proof of \cref{lem:paths}]
  We first eliminate all graph vertices that do not have a
  path from~$s$, as well as those that do not have a path to~$t$. This is
  akin to automaton trimming, and can be done in $O(|G|)$ with two graph
  traversals. The resulting graph is empty if and only if there is no $st$-path. Let us now assume that the trimmed graph is not empty and, for simplicity, let us denote this trimmed graph by $G$. 

  \medskip
  \noindent\textit{Minimum weight $st$-path.}
To compute an $st$-path of minimum weight, we simply use Dijkstra's algorithm with a priority queue
  implemented with a table of $|G|$ buckets.
  This makes it possible to insert
  vertices with a given priority in $O(1)$, lower their priority in $O(1)$, and
  retrieve a vertex of lowest priority in time $O(1)$ per call plus an
  $O(|G|)$ total cost throughout the entire algorithm to go over successive
  buckets. (Note that the sequence of priorities of the vertices that we extract out
  of the priority queue in Dijkstra's algorithm is nondecreasing,
  so we only need to go once over the sequence of buckets throughout
  the algorithm.) One may remark that the path of minimum weight that we obtain is always a simple path.
  
  \medskip
  \noindent\textit{Unbounded weight $st$-path.}
  To check whether there exist $st$-paths of unbounded weight, we compute the condensation of~$G$ (see Section~\ref{sec:supersequence}) in time $O(|G|)$.
  Obviously, there is a strongly connected component (SCC) that contains an edge
  with nonzero weight if and only if there are $st$-paths of unbounded weight. Indeed, if there is an SCC that contains an edge
  with nonzero weight, then we can repeat a cycle of the SCC containing this edge as many times as we
  want, and combining it with a path from $s$ to the SCC and from the SCC to $t$
  (these paths must exist by our preprocessing of~$G$ at the beginning of the
  proof). On the other hand, if no such SCC with an edge of nonzero weight exists, then every possible $st$-path in $G$ has only cycles with weight-$0$ edges, which means that there are only a finite number of possible weights for $st$-paths.

  \medskip
  \noindent\textit{Maximum weight $st$-path.}
  Finally, let us assume that there are no $st$-paths with unbounded weight and let us show how to compute a largest weight $st$-path.
  For this purpose, we will compute a path of maximum weight that leads to~$t$ from every node.
  However, the paths can be of linear size, so we store them in a compressed way, by remembering for every node $v$ only the first edge on a maximum weight $vt$-path.
  Then the path can be retrieved edge-by-edge, starting from $s$, until we reach node $t$.
  In addition to the first edge, we also store the total weight $X[v]$ of the $vt$-path of maximum weight.

  We process all SCCs of the condensation of~$G$ in reverse topological ordering.
  Observe that within every SCC all edges have weight $0$, thanks to our assumption that there are no $st$-paths with unbounded weight.
  First we process $\mathcal{C}_t=\SCC[t]$, the SCC containing $t$.
  Clearly, for all nodes from this component, every path to~$t$ has weight $0$, so we can use any such path. Now we show how to 
  make a consistent choice of first edge towards~$t$ from every node.
  For this, let $E^{rev}$ be the set of all edges of $\mathcal{C}_t$, but with reversed direction, that is $\{(q',q)\mid (q,q')\in E(G)\}$.
  We run a depth-first search from $t$ only on the nodes from $\mathcal{C}_t$, but along the reversed edges $E^{rev}$. For every considered node $v\ne t$, we store the edge to its parent in the recursion tree as the first edge on a $vt$-path with maximum weight.
  Observe that all the nodes of $\mathcal{C}_t$ will be visited, as $\mathcal{C}_t$ is a strongly connected component and so is its counterpart with edges reversed.
  Moreover, the chosen edges correctly define paths to $t$, as they correspond to going up the recursion tree towards the root $t$.

  For every other SCC $\mathcal C$ of $G$ in reverse topological ordering, we calculate the value:
  \[
  x \coloneq \max_{u\in \mathcal{C}}\max_{\substack{(u,v)\in E(G)\\ v\notin \mathcal{C}}}\{\text{weight}((u,v)) + X[v]\}\]
  and we set $X[u]\coloneq x$ for all $u\in \mathcal{C}$.
  Note that all such edges to outside of $\mathcal{C}$ point to SCCs that have been already visited and hence the values $X[v]$ have already been calculated for all $v\notin \mathcal{C}$ such that there is an edge from a node from $\mathcal{C}$ to $v$.
  Let $z$ and $e=(z,v)$ be a node and edge for which the above maximum is attained and set $e$ as the first edge on a $zt$-path with largest weight.
  For all nodes $u\in \mathcal{C}\setminus \{z\}$ we find the first edge on the $ut$-path with maximum weight similarly as in $\mathcal{C}_t$, but now we start the DFS from $z$.
  Again, all edges within~$\mathcal{C}$ have weight $0$, so it suffices to find any path to $z$.

  Finally, after processing all the SCCs, $X[s]$ stores the total weight of the $st$-path of maximum weight and we can retrieve such path in linear time.
  This concludes the proof of the last case and hence the lemma holds.
\end{proof}

One graph which will often be relevant is the \emph{product graph} of an
automaton and a string:

\begin{definition}
  The \emph{product graph} $G_{A,w}$ of an $\eword$NFA $A$ on states $Q$ and a
  string $w \in \Sigma^*$
  is the labelled and weighted graph with
vertex set $\{0, \ldots, |w|\} \times Q$ constructed as follows:
\begin{itemize}
  \item For each transition $(q, \eword, q')$ in~$A$ and for
    each $i \in \{0, \ldots, |w|\}$, there is an edge from
    $(i,q)$ to $(i,q')$ in~$G_{A,w}$ with label $\eword$ and weight $0$.
  \item For each transition $(q, b, q')$ with $b \in \Sigma$ and for
    each $i \in \{1, \ldots, |w|\}$ with $w[i] = b$, there is an edge from
    $(i-1,q)$ to $(i,q')$ in~$G_{A,w}$ with label $b$ and weight $1$.
\end{itemize}
\end{definition}

The following points about the product graph $G_{A,w}$  are immediate:
\begin{itemize}
  \item The product graph can be built in time $O(|A| \cdot |w|)$.
  \item Any path in $G_{A,w}$ from a vertex of the
form $(i,q)$ to a vertex of the form $(j,q')$
    must be such that $j \geq i$, and
it has weight precisely $j-i$.
  \item %
For any two states $q$ and $q'$ and integers $0
\leq i \leq j \leq |w|$, there is a path from $(i,q)$ to $(j,q')$ in~$G_{A,w}$
if and only if there is a run of~$A$ going from state $q$ to state $q'$ while reading the
(possibly empty) infix $w[i+1:j]$.
\end{itemize}

\subsection{Proof of Theorem~\ref{upperBoundsMinmaxTheorem}}

We are now ready to give the proof of Theorem~\ref{upperBoundsMinmaxTheorem}. The proof will be structured into a part for the infix and prefix relation, a part for the extension and left-extension relation, and a part for the subsequence and supersequence relation.

\begin{proof}[Proof for the infix and prefix relation]
  For the prefix relation, given the $\eword$NFA $A$ and string~$w$, we want to
  compute the minimum or maximum length of a prefix of~$w$ accepted by~$A$. We
  build in time $O(|w|m)$ the product graph $G_{A,w}$, then we interpret $(0,q_0)$ as the vertex $s$
  and we add a new vertex $t$ with an $\eword$-labelled edge with weight $0$ from $(i,q_\ff)$ to~$t$ for each $i \in \{0, \ldots,
  |w|\}$. Let us call the resulting graph $G$ and note that it can be constructed in time $O(|w|m)$.
  It is now obvious that paths from~$s$ to~$t$ in~$G$ are in one-to-one-correspondence with accepting runs over the prefixes of~$w$, and that the
  weight of such a path is the length of the corresponding prefix. Hence,
  \cref{lem:paths} can be used to solve the min- and max-variants of the
  matching problem for the prefix relation in time $O(|G|) = O(|w|m)$, including the
  explicit computation of the witnessing string (as the label of the resulting
  path).

  \medskip

  For the infix relation, we build again the product graph $G_{A,w}$ and
we add two vertices:
  \begin{itemize}
    \item A source vertex $s$ with $\eword$-labelled edges with weight 0
      from~$s$ to~$(i,q_0)$ for each $i \in \{0,\ldots, |w|\}$, where $q_0$ is the initial state
      of~$A$.
    \item A target vertex $t$ with
      $\eword$-labelled edges with weight 0 from $(i,q_\ff)$ to~$t$ for each $i \in \{0,
      \ldots, |w|\}$, where $q_\ff$ is the final state of~$A$.
  \end{itemize}
  We call this graph again $G$ and note that its construction can be done in time $O(|w|m)$. Moreover paths from $s$ to~$t$ in~$G$ are
  in one-to-one-correspondence with accepting runs over the infixes of~$w$, with the
  weight of the path being the length of the corresponding infix. Hence, we
  conclude again with \cref{lem:paths}.
\end{proof}

\begin{proof}[Proof for the extension and left-extension relation]
  For the left-extension relation, given the $\eword$NFA $A$ with state set $Q$ and string~$w$, we want to
  compute the minimum or maximum length of a string $vw$ 
  accepted by~$A$
  across all possible choices of $v \in \Sigma^*$.
  We build in time $O(|w|m)$ the product graph $G_{A,w}$. Further, we add to $G_{A,w}$ a copy of the automaton
  $A$: its vertices are $\{q_\leftarrow \mid q \in Q\}$ and for each transition $(q, x, q')$ in $A$ we add an edge from $q_\leftarrow$ to
  $q_\leftarrow'$ labelled $x$ and having weight~$0$ if $x = \eword$ and weight~$1$
  otherwise.
  Last, for each state $q \in Q$, we
  add to $G_{A,w}$ an $\eword$-labelled edge with weight $0$
  from $q_\leftarrow$ to $(0,q)$. We call the thus constructed graph $G$ and we note that it
  can be constructed in time $O(|w|m)$.
  Let the source vertex $s$ of~$G$ be $(q_0)_\leftarrow$, and let the
  target vertex $t$ of~$G$ be $(|w|,q_\ff)$.

  Let us now characterise the $st$-paths in~$G$.
  Any $st$-path in~$G$ can be
  decomposed in two parts: first a path from~$s$ to some $q_\leftarrow$ with
  $q \in Q$, which is a path labelled with a string $v \in
  \Sigma^*$, with weight $|v|$ and such that there is a run from~$q_0$ to~$q$
  in~$A$ when reading $v$; and second a path from $(0,q)$ to $(|w|,q_\ff)$, which is a path
  labelled by $w$, with weight $|w|$, and which witnesses that there is a run from~$q$ to~$q_\ff$ in~$A$ when reading $w$. Hence, $vw$ is a left-extension of $w$ accepted by $A$. Conversely, for any string $v \in \Sigma^*$ such that the left-extension $vw$ of $w$ is accepted by~$A$, there is a run of~$A$ over $v$ going from $q_0$ to~$q$ and
  a run of~$A$ over $w$ going from~$q$ to~$q_\ff$, and so there is an $st$-path in~$G$ going via $q_\leftarrow$ and $(0,q)$ and having weight~$|vw|$.

  This means that the lengths of left-extensions of~$w$ accepted by~$A$
  correspond to the weights of $st$-paths in~$G$. Thus, we conclude with
  \cref{lem:paths} that we can solve the min- and max-variants of the matching
  problem for the left-extension relation, including the computation of
  witnessing strings, in time $O(|w|m)$.

  \medskip

  For the extension relation, given the $\eword$NFA $A$ with state set $Q$ and
  string~$w$, we want to compute the minimum or maximum length of $vwv'$ for
  $v, v' \in \Sigma^*$ such that $vwv'$ is accepted by~$A$.
  We build the product graph $G_{A,w}$. Further, we add to $G_{A,w}$ two copies of the automaton
  $A$: one on vertices $\{q_\leftarrow \mid q \in Q\}$ defined like previously
  (i.e., for each transition $(q, x, q')$ in $A$ we add an edge from $q_\leftarrow$ to
  $q_\leftarrow'$ labelled $x$ and having weight~$0$ if $x = \eword$ and weight~$1$
  otherwise)
  and one on vertices $\{q_\rightarrow \mid q \in Q\}$ defined in exactly the
  same way but replacing the ``$\leftarrow$'' subscripts by ``$\rightarrow$''
  subscripts.
  Last, for each state $q \in Q$ we
  add an $\eword$-labelled edge with weight $0$
  from $q_\leftarrow$ to $(0,q)$,
  and for each state $q' \in Q$ we add an
  $\eword$-labelled edge with weight $0$ from $(|w|,q')$ to $q'_\rightarrow$ to the graph. We call the thus constructed graph $G$ and we note that this construction can be done in time $O(|w|m)$.
  Let the source vertex $s$ of~$G$ be $(q_0)_\leftarrow$ and let the
  target vertex $t$ of~$G$ be $(q_\ff)_\rightarrow$.

  Let us now characterise the $st$-paths in~$G$.
  Any $st$-path in~$G$ can be
  decomposed in three parts:
  \begin{itemize}
      \item First a path from~$s$ to $q_\leftarrow$ for some $q \in Q$, which is a path
        labelled with a string $v \in \Sigma^*$, with weight $|v|$, and such that there is a run from~$q_0$ to~$q$
  in~$A$ when reading $v$;
      \item Second, a path from $(0,q)$ to $(|w|,q')$ for some $q' \in Q$, which is a path labelled by~$w$, with weight~$|w|$, and that is
  witnessing that there is a run from~$q$ to~$q'$ in~$A$ when reading $w$;
\item Third, a path from $(q')_\rightarrow$ to~$t$ labelled with a string $v' \in \Sigma^*$, with weight $|v'|$ and such that there is a run from~$q'$ to~$q_\ff$
  in~$A$ when reading $v'$.
  \end{itemize}
  Hence, $vwv'$ is an extension of $w$ accepted by $A$.
  
  Conversely, for any strings $v,v' \in \Sigma^*$ such that the extension $vwv'$ of $w$ is accepted by~$A$,
  there is a run of~$A$ over $v$ going from $q_0$ to~$q$, 
  a run of~$A$ over $w$ going from~$q$ to~$q'$ and a run of~$A$ over~$v'$
  going from~$q'$ to~$q_\ff$, and so there is an $st$-path
  in~$G$ going via $q_\leftarrow$ and $q'_\rightarrow$ and having weight~$|v|+|w|+|v'|$.

  This means that the lengths of extensions of~$w$ accepted by~$A$
  correspond to the weights of $st$-paths in~$G$. Thus, we conclude with
  \cref{lem:paths} that we can solve the min- and max-variants of the matching
  problem for the extension relation, including the computation of
  witnessing strings, in time $O(|w|m)$.
\end{proof}

\begin{proof}[Proof for the subsequence and supersequence relation]
  For the subsequence relation, given the $\eword$NFA $A$ with state set $Q$ and
  string~$w$, we want to compute the minimum or maximum length of a subsequence
  of~$w$ accepted by~$A$. We build in time $O(|w|m)$ the product graph $G_{A,w}$ and add edges with weight $0$ and label
  $\eword$ from $(i,q)$ to $(i+1,q)$ for each state $q$ of~$A$ and each $0 \leq
  i < |w|$. These additional edges are called \emph{extra-edges} in the following. Note that the extra-edges mimic the self-loops $(q, b, q)$ for every $b \in \Sigma$ added
  in the $\eword$NFA $A_\subsequence$
  defined in \cref{sec:naive}, relative to the original $\eword$NFA $A$. We call this graph $G$ and note that it can be constructed in time $O(|w|m)$. 
  Let the source $s$ of $G$ be $(0,q_0)$ and let the target $t$ be
  $(|w|,q_\ff)$.

  Let us describe the $st$-paths in~$G$. For any subsequence $u$ of~$w$ accepted
  by~$A$, we can build a path of weight $|u|$ from~$s$ to~$t$ in~$G$ following an accepting run $\rho$ over~$w$ of the $\eword$NFA $A_\subsequence$ defined in
  \cref{sec:naive}.
  (Note that
  we do not actually build the automaton $A_\subsequence$ in the construction:
  we only use it for the correctness
  proof. This is important because $A_\subsequence$ has size
  $O(n|\Sigma| + m)$, so generally we cannot afford to build it.)
  More precisely, when $\rho$ takes a transition $(q, x, q')$ of $A_\subsequence$ and the prefix $w[1:i]$ for $i \in \{0, 1, \ldots, |w|\}$ has already been consumed before the transition $(q, x, q')$, then there are two possible cases. Either $(q, x, q')$ is an original transition of $A$, in which case we take the corresponding $x$-labelled edge from $(i, q)$ to $(i+1, q')$ of~$G$ if $x = w[i+1]$, and the corresponding $\eword$-labelled edge from $(i, q)$ to $(i, q')$ of~$G$ if $x = \eword$. Or $(q, x, q')$ is not a transition of $A$, which means that it is a self-loop $(q, w[i + 1], q')$ with $q = q'$, in which case we take the extra-edge from $(i, q)$ to $(i + 1, q)$ of $G$. The weight of this $st$-path corresponds to the number of symbols read by transitions of $A_\subsequence$ which are
  transitions of~$A$, i.e., the number of symbols that are part of the
  witnessing subsequence $u$; and so the weight of the $st$-path is $|u|$ and
  its label is~$u$.

  Conversely, given an $st$-path in~$G$, we can build an accepting run $\rho$ of
  $A_\subsequence$ over~$w$ by taking the transitions corresponding to the edges
  of~$G$. More precisely, for every $x$-labelled edge of $G$ from $(i, q)$ to $(i + 1, q')$ or from $(i, q)$ to $(i, q')$ that is not an extra-edge, we take the transition $(q, x, q')$ of $A_\subsequence$, and for every extra-edge of $G$ from $(i, q)$ to $(i + 1, q)$, we take the transition $(q, w[i+1], q)$ of $A_\subsequence$. This witnesses that $w$ has a subsequence accepted by~$A$, namely, the
  subsequence $u$ formed of those symbols read by transitions of~$A$ in~$\rho$. The
  number of such symbols, which is the length of~$u$, is the weight of the
  $st$-path; and the string $u$ is the label of the $st$-path.

  This means that the lengths of subsequences of~$w$ accepted by~$A$
  correspond to the weights of $st$-paths in~$G$. Thus, we conclude with
  \cref{lem:paths} that we can solve the min- and max-variant of the matching
  problem for the subsequence relation, including the computation of
  witnessing strings, in time $O(|w|m)$.

  \medskip
  
  For the supersequence relation, given the $\eword$NFA $A$ with state set $Q$ and
  string~$w$, we want to compute the minimum or maximum length of a supersequence
  of~$w$ accepted by~$A$. We build in time $O(|w|m)$ the product graph $G_{A,w}$ and for each transition $(q, b, q')$ in~$A$ with $b \in \Sigma$ and each $i \in \{0, \ldots,
  |w|\}$ we add a $b$-labelled edge with weight $1$ from $(i,q)$ to $(i,q')$. These additional edges are called \emph{extra-edges} in the following. Note that the extra-edges mimic the $\eword$-transitions added in the $\eword$NFA $A_\supersequence$
  defined in \cref{sec:naive}, relative to the original $\eword$NFA $A$. We call this graph $G$ and note that it can be constructed in time $O(|w|m)$. 
  Let the source $s$ of $G$ be $(0,q_0)$ and let the target $t$ be
  $(|w|,q_\ff)$.
  
  Let us describe the $st$-paths in~$G$. For any supersequence $u$ of~$w$
  accepted by~$A$, we can build a path of weight $|u|$ from $s$ to~$t$ in~$G$
  following an accepting run $\rho$ over~$w$ of the $\eword$NFA
  $A_\supersequence$. More precisely, when $\rho$ takes a transition $(q, x, q')$ of $A_\supersequence$ and the prefix $w[1:i]$ for $i \in \{0, 1, \ldots, |w|\}$ has already been consumed before the transition $(q, x, q')$, then there are two possible cases. Either $(q, x, q')$ is an original transition of $A$, in which case we take the corresponding $x$-labelled edge from $(i, q)$ to $(i+1, q')$ of~$G$ if $x = w[i+1]$, and the corresponding $\eword$-labelled edge from $(i, q)$ to $(i, q')$ of~$G$ if $x = \eword$. Or $(q, x, q')$ is not a transition of $A$, which means that it has to be an $\eword$-transition $(q, \eword, q')$, in which case we take an extra-edge from $(i, q)$ to $(i, q')$ of $G$. Note that for any $\eword$-transition $(q, \eword, q')$ of $A_{\supersequence}$ that is not a transition of $A$, there are several (and at least one) transitions $(q, b, q')$ with $b \in \Sigma$ in $A$; thus, there are also several (and at least one) extra-edges from $(i, q)$ to $(i, q')$ that differ in their labels. We take just any of those extra-edges. The weight of this $st$-path corresponds to the number of edges labelled by a symbol from $\Sigma$ (i.e., transitions in $\rho$ that read symbols from $\Sigma$ and that are transitions of $A$) plus the number of extra-edges (i.e., the number of $\eword$-transitions of $A_\supersequence$ in $\rho$ that are not transitions in $A$). Thus, the weight of the $st$-path is $|u|$ and its label is~$u$.

  Conversely, given an $st$-path in~$G$, we can build an accepting run $\rho$ of
  $A_\supersequence$ over~$w$ by taking the transitions corresponding to the
  edges of~$G$. More precisely, for every $x$-labelled edge of $G$ from $(i, q)$ to $(i + 1, q')$ or from $(i, q)$ to $(i, q')$ that is not an extra-edge, we take the transition $(q, x, q')$ of $A_\supersequence$, and for every extra-edge of $G$ from $(i, q)$ to $(i, q')$ labelled with some $b \in \Sigma$, we take the transition $(q, \eword, q')$ of $A_\supersequence$ (such a transition exists by definition of~$A_\supersequence$). This witnesses that $w$ has a supersequence accepted by~$A$, namely the label $u$ of the $st$-path in~$G$. The
  number of such symbols, which is the length of~$u$, is the weight of the
  $st$-path; and the string $u$ is the label of the $st$-path.

  This means that the lengths of supersequences of~$w$ accepted by~$A$
  correspond to the weights of $st$-paths in~$G$. Thus, we conclude with
  \cref{lem:paths} that we can solve the min- and max-variant of the matching
  problem for the supersequence relation, including the computation of
  witnessing strings, in time $O(|w|m)$.
\end{proof}

\section{Universal Problem Variants}\label{sec:universal}

We now consider the universal variants, i.e., checking whether all strings $u \preceq w$ are in $\LL(A)$. %

\begin{theorem}\label{universalTheorem}
The universal-variant of the matching problem can be solved in time $O(|w|m)$ for the prefix relation and in time $O({|w|}^2m)$ for the infix relation.
It is PSPACE-complete for the extension and left-extension relation, coNP-complete for the subsequence relation, and PSPACE-complete for the supersequence relation.
\end{theorem}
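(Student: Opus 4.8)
The plan is to reuse the state-set simulation recalled in Section~\ref{sec:naive}. For the prefix relation, I would observe that $\relSet_{\prefixrel}(w) = \{w[1:i] \mid 0 \le i \le |w|\}$ consists of only the $|w|+1$ prefixes of $w$. A single state-set simulation of $A$ on $w$ produces the sets $S_0, S_1, \ldots, S_{|w|}$, and the prefix $w[1:i]$ is accepted exactly when $q_\ff \in S_i$; so I would answer positively iff $q_\ff \in S_i$ for every $i$, at cost $O(|w|m)$. For the infix relation, I would note that every infix of $w$ is a prefix of some suffix $w[i:|w|]$, so I would run the prefix procedure above once for each starting position $i \in \{1, \ldots, |w|\}$, checking at each step whether $q_\ff$ is active (and testing $\eword$ once separately). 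This amounts to $|w|$ simulations of cost $O(|w|m)$ each, hence $O(|w|^2 m)$ overall.

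\textbf{PSPACE-completeness for the extension, left-extension, and supersequence relations.} For each of these relations, $\relSet_{\preceq}(w)$ is a regular language recognised by an NFA $B_w$ of polynomial size: for the left-extension it is $\Sigma^* w$, for the extension it is $\Sigma^* w \Sigma^*$, and for the supersequence it is the set of all supersequences of $w$, given by a chain of $|w|+1$ states each carrying a $\Sigma$-self-loop (analogous to the $A_{\supersequence}$ construction of Section~\ref{sec:naive}). The universal problem is then precisely the inclusion $\LL(B_w) \subseteq \LL(A)$, which lies in PSPACE: searching for a witness $x \in \LL(B_w) \setminus \LL(A)$ can be done nondeterministically while tracking one state of $B_w$ together with the subset of states of $A$ reachable after reading the current prefix (both fitting in polynomial space), and NPSPACE${}={}$PSPACE. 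For hardness I would reduce from the PSPACE-complete universality problem for $\eword$NFAs by taking $w = \eword$: since $\relSet_{\leftextensionrel}(\eword) = \relSet_{\extensionrel}(\eword) = \relSet_{\supersequencerel}(\eword) = \Sigma^*$, the universal problem on input $(A, \eword)$ asks exactly whether $\LL(A) = \Sigma^*$, settling all three relations at once.

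\textbf{coNP-completeness for the subsequence relation.} Membership follows via the complement: a rejected subsequence $u$ can be guessed as a subset of the positions of $w$ (so $|u| \le |w|$), and $u \notin \LL(A)$ is verified in polynomial time by state-set simulation, placing the complement in NP. For hardness I would reduce from the unsatisfiability of a CNF formula $\phi = C_1 \wedge \cdots \wedge C_m$ over variables $x_1, \ldots, x_n$. I would set $w = a_1 a_2 \cdots a_n$ with pairwise distinct symbols, so that subsequences of $w$ correspond bijectively to assignments $\alpha$, where $a_i$ is kept iff $\alpha(x_i)$ is true. For each clause $C_j$ I would build a regular expression $r_j = s^{(j)}_1 s^{(j)}_2 \cdots s^{(j)}_n$ matching exactly the subsequences whose assignment \emph{falsifies} $C_j$, setting $s^{(j)}_i = a_i$ if $\neg x_i \in C_j$ (symbol forced present), $s^{(j)}_i = \eword$ if $x_i \in C_j$ (symbol forced absent), and $s^{(j)}_i = (a_i \vee \eword)$ otherwise (symbol free), with any tautological clause mapped to $\emptyset$. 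Taking $A$ equivalent to $r_1 \vee \cdots \vee r_m$, of size $O(nm)$, a subsequence of $w$ is accepted iff its assignment falsifies some clause; hence all subsequences are accepted iff every assignment falsifies a clause, i.e.\ iff $\phi$ is unsatisfiable.

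\textbf{Main obstacle.} The state-set simulations and the $w = \eword$ reductions are routine; the delicate part will be the subsequence construction, where I must check that each $r_j$ recognises exactly the falsifying subsequences and produces nothing outside $\relSet_{\subsequencerel}(w)$, that the clause-wise disjunction remains polynomial, and that unsatisfiability of $\phi$ transfers faithfully to the universality of $\LL(A)$ over the subsequences of $w$.
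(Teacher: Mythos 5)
Your proposal is correct throughout, and on most components it coincides with the paper's proof: the prefix bound via one state-set simulation checking $q_\ff \in S_i$ at every step, the infix bound by repeating this for every suffix, PSPACE membership for extension\slash left-extension\slash supersequence via inclusion $\LL(B_w) \subseteq \LL(A)$ for a polynomial-size automaton $B_w$ recognising $\relSet_{\preceq}(w)$, PSPACE-hardness via the $w = \eword$ reduction from $\eword$NFA universality, and coNP membership by guessing a rejected subsequence. The one place where you genuinely diverge is the coNP-hardness reduction for the subsequence relation. The paper reduces from 3-CNF unsatisfiability over the \emph{binary} alphabet: it takes $w = (01)^n$ and an $\eword$NFA accepting all $\{0,1\}$-strings of length at most $n-1$, all of length $n+1$ to $2n$, and all length-$n$ strings encoding non-satisfying assignments; the padding by ``wrong-length'' strings is needed precisely because subsequences of $(01)^n$ do not correspond bijectively to assignments, only the length-$n$ ones do (and all length-$n$ binary strings occur). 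You instead use $n$ pairwise distinct symbols, $w = a_1 \cdots a_n$, so that subsequences of $w$ are in clean bijection with assignments, and you encode each clause $C_j$ by a regex $s^{(j)}_1 \cdots s^{(j)}_n$ matching exactly the falsifying subsequences (your literal-to-$\{a_i, \eword, a_i \vee \eword\}$ translation is correct, including the tautological-clause case, and each $\LL(r_j)$ is automatically contained in $\relSet_{\subsequencerel}(w)$ because the $s^{(j)}_i$ respect the index order). Your construction is more elementary, avoiding the length-based padding entirely; what the paper's version buys in exchange is that hardness holds even over a fixed alphabet of size two, whereas your alphabet grows with the number of variables. Both reductions are polynomial and both establish the stated coNP-completeness, since the theorem does not restrict the alphabet size.
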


\begin{proof}
We will prove the statements for the different relations separately.
\begin{itemize}
\item The prefix relation: We perform a state-set simulation and check in every step of the simulation whether there is at least one accepting state in the set of active states. 

\item The infix relation: We can solve the problem by solving the prefix-case for every suffix of $w$, which yields a running time of $O({|w|}^2m)$.

\item The extension and left-extension relation: To solve the extension-case in PSPACE, we can construct an $\eword$NFA $A'$ for $\relSet_{\extensionrel}(w)$ and then check whether $\LL(A') \subseteq \LL(A)$, which can be done in PSPACE.
  The left-extension case is analogous: construct an $\eword$NFA for $\relSet_{\leftextensionrel}(w)$. \par
The PSPACE-hardness follows from the fact that for $w = \eword$ the extension- and left-extension-case of the universal matching problem is the same as $\eword$NFA universality, which is PSPACE-hard.

\item The subsequence relation: The problem is obviously in coNP, since we can just guess a subsequence of $w$ and check whether it is rejected by the $\eword$NFA.\par
  For the coNP-hardness, we reduce from the negation of the NP-hard Boolean satisfiability problem on conjunctive normal form (CNF) formulas with at most 3 literals per clause, i.e., we reduce from the coNP-hard problem of checking whether a 3-CNF formula $F$ is not satisfiable. Take a $3$-CNF formula $F$ with $n$ variables. We build an $\eword$NFA $A$ that accepts all $\{0, 1\}$-strings of length at most $n-1$, accepts all $\{0, 1\}$-strings of length between $n+1$ and $2n$, and accepts all length-$n$ $\{0, 1\}$-strings that represent non-satisfying assignments of $F$. Indeed, an $\eword$NFA $A'$ for the latter strings can easily be built by using for every clause a path that reads exactly the assignments that do not satisfy this particular clause, and taking the disjunction across the clauses of~$F$. Now let $w = (0 1)^n$ and note that $\relSet_{\subsequencerel}(w)$ contains only $\{0, 1\}$-strings of length at most $2n$, and that it does contain every $\{0, 1\}$-string of length $n$. Therefore, if the $\eword$NFA $A$ accepts all strings from $\relSet_{\subsequencerel}(w)$ then $A'$ accepts every possible length-$n$ $\{0, 1\}$-string, and every such string represents a non-satisfying assignment of $F$, so we know that $F$ is not satisfiable. If $F$ is not satisfiable, then every possible length-$n$ $\{0, 1\}$-string must be accepted by the $\eword$NFA $A'$, which means that the $\eword$NFA $A$ must accept all strings from $\relSet_{\subsequencerel}(w)$.

\item The supersequence relation: To show the PSPACE upper bound, we can easily construct in PTIME an $\eword$NFA $A_{w, \supersequence}$ that accepts exactly the supersequences of the input string~$w$~-- 
  note that this is an easy special case of \cite[Lemma 8]{bachmeier2015finite} which we reviewed in \cref{sec:naive}. As $A_{w, \supersequence}$ accepts $\relSet_{\supersequencerel}(w)$, we can then check whether $\LL(A_{w,\supersequence}) \subseteq \LL(A)$, which can be done in PSPACE.

    The PSPACE-hardness follows again from the fact that for $w = \eword$ the supersequence-case of the universal matching problem is the same as $\eword$NFA universality, which is PSPACE-hard. \qedhere
\end{itemize}
\end{proof}

\section{Conditional Lower Bounds}\label{sec:fineGrained}

We now present the conditional lower bounds that are stated in 
Table~\ref{table:resultTablesLowerBounds}.
These bounds apply to all problems for which we showed a time complexity higher than the optimal linear complexity $O(|w|+m)$. Further, they match our upper bounds. The section is structured in two parts.
We first show lower bounds for the infix, prefix, extension and left-extension relations: \cref{lowerBoundsNormalMatchingTheorem} covers the matching problem and its min- and max-variants, and \cref{lowerboundUniversalTheoremPrefix,lowerboundUniversalTheoremInfix} cover lower bounds for the universal-variant with the infix and prefix relations; note that the universal-variant for extension and left-extension was shown earlier to be PSPACE-complete (\cref{universalTheorem}). Second, we show lower bounds for the min- and max-variants of the sub- and supersequence relations; the case of the universal-variant was covered by \cref{universalTheorem}, too.\looseness=-1

\subsection{Infix, Prefix, Extension and Left-Extension Relations}

We first observe that SETH-based lower bounds for the $\infixrel$-,
$\prefixrel$-, $\extensionrel$- and $\leftextensionrel$-matching problems can be obtained by minor adaptations of the original construction from~\cite{backurs2016regular}. Moreover, since the min- and max-variants also solve the non-quantitative variants of the matching problem, we can conclude that these lower bounds also hold for the quantitative variants:

\begin{theorem}\label{lowerBoundsNormalMatchingTheorem}
If the matching problem for the infix, prefix, extension and left-extension relation can be solved in time $O((|w|m)^{1 - \epsilon})$ for some $\epsilon > 0$, then SETH fails.  The same holds for the min- and max-variants,
  even if we operate on an alphabet of constant size, and 
  even if we only require the length of the answer strings.
\end{theorem}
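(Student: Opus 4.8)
The plan is to reduce from the Orthogonal Vectors (OV) problem, following the structure of the original SETH-based lower bound of~\cite{backurs2016regular} for classical regex matching and adapting the gadget construction so that the target string is recovered as an infix, prefix, extension, or left-extension rather than as the whole string. Recall that OV asks, given two sets $U, V \subseteq \{0,1\}^d$ with $|U| = |V| = N$ and $d = O(\log N)$, whether there exist $u \in U$ and $v \in V$ that are orthogonal (i.e., $u[k] \cdot v[k] = 0$ for every coordinate $k$); under SETH this cannot be solved in time $O(N^{2-\epsilon})$. The construction of~\cite{backurs2016regular} encodes one family of vectors into the regular expression (hence into the $\eword$NFA $A$) and the other family into the input string $w$, in such a way that classical matching $w \in \LL(A)$ holds if and only if an orthogonal pair exists; both $|w|$ and $m$ are $O(N \operatorname{polylog} N)$, so an $O((|w|m)^{1-\epsilon})$ matching algorithm would give a truly subquadratic OV algorithm.

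The key step is to adapt this reduction to each of the four relations. The cleanest route is to first establish the prefix case and then bootstrap the others. For the prefix relation I would augment $A$ with a ``tail gadget'': from the original final state $q_\ff$, add $\eword$-transitions (or a $\Sigma^*$-loop reading arbitrary symbols) so that after recognising the encoded orthogonal pair the automaton can consume any remaining suffix of $w$. Then $w$ has a prefix accepted by the modified automaton exactly when the original matching succeeds, so the prefix-matching problem inherits the lower bound with only an $O(|w|)$ overhead in the instance size. The infix case is symmetric: prepend an analogous ``head gadget'' of fresh states that can consume an arbitrary prefix of $w$ before entering the initial state of $A$, so that the relevant factor of $w$ can be matched anywhere. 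The (left-)extension cases are handled dually by building an automaton whose accepted language consists of the extensions of the witnessing strings, again padding with $\Sigma^*$-gadgets on the appropriate side; here one must take care that the quantifier ``$w \preceq u$'' (rather than ``$u \preceq w$'') is respected, but the same orthogonality encoding carries over since the vector data still lives in $A$ and $w$.

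Finally, the min/max-variant claim follows almost for free: any algorithm computing a shortest (resp.\ longest) witnessing string $u$ with $u \preceq w$ and $u \in \LL(A)$, even if it only returns the \emph{length} of $u$, also decides whether such a $u$ exists at all (it reports ``no such string'' precisely when the non-quantitative instance is negative). Since the reductions above produce instances where a witness exists iff an orthogonal pair exists, an $O((|w|m)^{1-\epsilon})$ algorithm for the min- or max-variant would decide the matching instance in the same time, contradicting SETH. I expect the main obstacle to be purely bookkeeping: verifying that each padding gadget adds only $O(|w| + m)$ to the instance size (so that the quadratic product $|w| \cdot m$ is preserved up to constants) and that the gadgets do not accidentally create spurious accepting runs that bypass the orthogonality check. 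Getting the extension/left-extension direction right — ensuring the added $\Sigma^*$ material sits on the correct side and that the relation's asymmetry between $u$ and $w$ is faithfully encoded — is the most delicate part, but it reduces to a careful but routine case analysis of the gadget transitions.
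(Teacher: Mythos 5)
There is a genuine gap, and it lies exactly where you flagged the work as ``routine'': the padding gadgets point in the wrong direction. Adding a $\Sigma^*$-loop after $q_\ff$ (or a $\Sigma^*$ head gadget before $q_0$) is what one does to reduce prefix- or infix-matching \emph{to} classical matching, i.e., to prove an upper bound; it does not transfer hardness onto the relation-matching problems. Concretely, with your tail gadget the claim ``$w$ has a prefix accepted by the modified automaton exactly when the original matching succeeds'' is false: take $\LL(A)=\{a\}$ and $w=ab$; the prefix $a$ is accepted by the padded automaton, yet $w\notin\LL(A)$. Whether this can be rescued for the specific OV instances depends on unverified structural details of the construction of \cite{backurs2016regular}, which you do not check. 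Worse, for the extension and left-extension relations the approach is unfixable: if the automaton's language is padded with $\Sigma^*$ on the outside, then \emph{every} string $w$ has an accepted extension as soon as $\LL(A)\neq\emptyset$, because the witnessing material can sit entirely in the part of $u=vwv'$ outside of $w$. No case analysis of gadget transitions repairs this; the reduction becomes trivially ``yes'' on all instances.

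The missing idea --- and what the paper actually does --- is the opposite of padding: \emph{forbid} any material outside the designated string using a fresh delimiter. The paper reduces from classical $\eword$NFA matching (already SETH-hard by \cite{backurs2016regular}, used as a black box, so no OV gadgets need to be re-derived): given $(M,w)$, build $M'$ with $\LL(M')=\{\#\}\cdot\LL(M)\cdot\{\#\}$ for a fresh symbol $\#$ and use the string $\# w\#$. Then the only infix, prefix, extension, or left-extension of $\# w\#$ that $M'$ can possibly accept is $\# w\#$ itself: a proper infix or prefix fails to start or end with $\#$, and a proper (left-)extension would force $M$ to accept a string containing $\#$. This single construction collapses all four relations to equality simultaneously and yields the theorem. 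Your final paragraph on the min-/max-variants (any algorithm returning the length of a shortest or longest witness, or reporting that none exists, decides the non-quantitative problem) is correct and matches the paper, but it rests on reductions that, as proposed, do not work.
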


\begin{proof}
Let us first recall that if $w \in \LL(M)$ for a given string $w$ and
$\eword$NFA $M$ can be decided in time $O((|w|\cdot|M|)^{1 - \epsilon})$ for
some $\epsilon > 0$, then SETH fails (see~\cite{backurs2016regular}); this
result even holds on an alphabet of constant size.

Let $w$ be an input string and $M$ an $\eword$NFA. In time $O(|M|)$ we can construct an $\eword$NFA $M'$ with $\LL(M') = \{\#\} \cdot \LL(M) \cdot \{\#\}$, where $\#$ is a fresh symbol (i.e., not used on any transition of $M$). Note that $w \in \LL(M)$ if and only if $\# w \# \in \LL(M')$. Moreover, the only infix, prefix, left-extension or extension of $\# w \#$ that could possibly be accepted by $M'$ is the string $\# w \#$ itself. Indeed, if $M'$ accepts a proper infix or prefix, then it would accept a string that does not start or end with $\#$, and if $M'$ accepts a proper left-extension (or extension), then $M$ would accept a string that contains an occurrence of $\#$. Hence, the following statements are equivalent:
\begin{itemize}
\item $M$ accepts $w$.
\item $M'$ accepts $\# w \#$.
\item $M'$ accepts an infix of $\# w \#$.
\item $M'$ accepts a prefix of $\# w \#$.
\item $M'$ accepts a left-extension of $\# w \#$.
\item $M'$ accepts an extension of $\# w \#$.
\end{itemize}
If we could solve the matching problem for the infix, prefix, extension and
left-extension relation on the instance $M'$ and $\# w \#$ in time $O((|\# w \#|\cdot |M'|)^{1 - \epsilon})$ for some $\epsilon > 0$, then we would have decided whether $w \in \LL(M)$ in time $O((|w|\cdot|M|)^{1 - \epsilon})$.

This shows that if the matching problem for the infix, prefix, extension and left-extension relation can be solved in time $O((|w|\cdot|M|)^{1 - \epsilon})$ for some $\epsilon > 0$, then SETH fails.

Obviously, each of the min- or max-variant of the matching problem for the infix, prefix, extension and left-extension relation implicitly also solves the matching problem for the infix, prefix, extension and left-extension relation:
  this is true even if the min- and max-variants are only required to compute the length of their answers (and not the strings themselves).
  Thus, the lower bounds carry over to the quantitative variants as well.
\end{proof}

Further, we can show that the $O(|w|m)$ upper bound for the universal variant of the matching problem for the prefix
relation is also optimal under SETH: 

\begin{theorem}\label{lowerboundUniversalTheoremPrefix}
If the universal variant of the matching problem for the prefix 
 relation can be solved in time $O((|w|m)^{1 - \epsilon})$ for some $\epsilon >
 0$, then SETH fails. This holds even if we operate on an alphabet of constant
 size.
\end{theorem}

\begin{proof}
Let $w$ be an input string and $M$ an $\eword$NFA. In time $O(|M|)$ we can construct an $\eword$NFA $M'$ with 
$\LL(M') = (\LL(M) \cdot \{\#\}) \cup \Sigma^*$
 for a new symbol $\#$. 

If $w \in \LL(M)$, then 
$w \# \in \LL(M')$, which means that $\relSet_{\prefixrel}(w \#) \subseteq \LL(M')$. If $\relSet_{\prefixrel}(w \#) \subseteq \LL(M')$, then $w \# \in \LL(M')$, which means that $w \in \LL(M)$. Thus, if we could decide whether $\relSet_{\prefixrel}(w \#) \subseteq \LL(M')$ in time $O((|w \#|\cdot |M'|)^{1 - \epsilon})$ for some $\epsilon > 0$, then we would also have decided whether $w \in \LL(M)$ in time $O((|w|\cdot |M|)^{1 - \epsilon})$.
This concludes using the results of~\cite{backurs2016regular}.
\end{proof}
Using a slightly more complicated reduction\footnote{We note that this bound was
first discussed online, see~\cite{cstheoryinfix}.} from the \emph{orthogonal
vectors problem}, we can also show that the $O(|w|^2m)$ upper bound for the
universal variant of the matching problem for the infix relation is
conditionally optimal under SETH. Recall that, in the \emph{orthogonal vectors
problem (OV)}, we are given two sets $U=\{u_1,\ldots,u_n\}$,
$V=\{v_1,\ldots,v_n\}$ of $d$-dimensional Boolean vectors, with $d=\omega(\log
n)$, and want to determine whether there are two orthogonal vectors $u_i$ and
$v_j$, i.e., whether there exist $i,j\in[n]$ such that $u_i\cdot v_j = \sum_{k=1}^d u_i[k]\cdot v_j[k]=0$. For convenience, we also use the notation $u \perp v$ to denote that vectors $u$ and $v$ are orthogonal. The following is well-known:

\begin{lemma}[\cite{kOVBase}, see, e.g., \cite{abboud2018more}, Conjecture~1.1]\label{ovSethHard}
  If there is an algorithm that solves OV in time $O(n^{2-\epsilon}\mathrm{poly}(d))$ for some $\epsilon>0$, then SETH fails.
\end{lemma}

Using the result above, we can now show:

\begin{theorem}\label{lowerboundUniversalTheoremInfix}
  If we can solve the universal variant of the $\infixrel$-matching problem on alphabet $\Sigma=\{0,1\}$ in time $O(|w|^{2-\epsilon}\mathrm{poly}(m))$ for some $\epsilon>0$, then SETH fails. 
\end{theorem}

\begin{proof}
    We reduce from OV. We are given two sets $U=\{u_1,\ldots,u_n\}, V=\{v_1,\ldots,v_n\}$ of Boolean $d$-dimensional vectors, where $d=\omega(\log n)$, and want to determine whether there exist $i,j\in[n]$ such that $u_i\cdot v_j=0$. We will build an NFA $A$ and string $w\in\Sigma^\ast$ such that there is an infix of $w$ that is not in $\LL(A)$ if and only if there exist $i,j\in[n]$ such that $u_i\cdot v_j=0$. Let $h\colon \Sigma^\ast \rightarrow \Sigma^\ast$ be the homomorphism defined by $h(0)=000$ and $h(1)=010$, and define $B=11011$ and $E=11111$ as beginning and end markers. We can now construct $$w=(B\ h(u_1))\ (B\ h(u_2)) \ldots (B\ h(u_n))\ (h(v_1^R)\ E)\ (h(v_2^R)\ E) \ldots (h(v_n^R)\ E),$$ where the parentheses are only used for the convenience of the reader and are not part of the string itself. It is not hard to see that $u_i\cdot v_j\neq 0$ if and only if there is some $k\in[d]$ such that $h(u_i)[3k-2:3k]$
    and $h(v_j)[3k-2:3k]$
    are both equal to
    $h(1)=010$, namely,
    $h(u_i)[3k-2:3k] = h(v_j^R)[3(d-k+1)-2:3(d-k+1)]=010$.

    Let us now build the NFA $A$ that rejects exactly the infixes of $w$ that witness $u_i\cdot v_j=0$ for some $i,j\in[n]$. We construct $A$ as the union of three NFAs $A_B$, $A_E$, and $A'$. The NFA $A_B$ accepts all strings that do not start with $B$, and the NFA $A_E$ accepts all strings that do not end with $E$; it is not hard to achieve $|A_B|=|A_E|=O(1)$. Finally, the NFA $A'$ accepts all strings of the form $B\ \Sigma^{3(k-1)}\ 010\ \Sigma^\ast\ 010\ \Sigma^{3(k-1)}\ E$ for some $k \in [d]$. By a naive construction of $A'$, we can achieve $|A'|=O(d^2)$, which suffices for our purpose.
Overall, we thus obtain $|A|=O(d^2)$.

Let us give an intuitive explanation of which infixes are accepted by the NFA $A$. Let $(B\ h(u_i))$ (resp., $(h(v_j^R)\ E)$) be the block corresponding to $u_i$ (resp., $v_j$). Then, intuitively, we use $A_B$ to match all infixes that do not start at the beginning of some $u_i$-block. Analogously, we use $A_E$ to match all infixes that do not end at the end of some $v_j$-block. This means that $A_B$ and $A_E$ accept all `ill-formed' infixes, whereas $A'$ only has to consider infixes that consist of complete blocks, and accepts the infix starting at the $u_i$-block and ending at the $v_j$-block if and only if there is some $k\in[d]$ witnessing $h(u_i[k])=h(v_j^R[d-k+1])=h(v_j[k])=010=h(1)$, i.e., $u_i[k]\cdot v_j[k]=1$.

    Let us now show that $\relSet_{\infixrel}(w)\not\subseteq \LL(A)$ (i.e., there is an infix $u\infixrel w$ such that $u\notin \LL(A)$) if and only if there exist $i,j\in[n]$ with $u_i\cdot v_j=0$. First, assume that $u_i$ and $v_j$ are orthogonal and let $w_{ij} = (B\ h(u_i))\ldots (h(v_j^R)\ E)$. Since $w_{ij}$ starts with $B$ and ends with $E$, it is rejected by both $A_B$ and $A_E$. The NFA $A'$, and therefore $A$, also rejects $w_{ij}$: assume, by contradiction, that $w_{ij}\in\LL(A')$, then there must be a $k\in[d]$ such that $u_i[k]=v_j[k]=1$ (and $w_{ij}$ must be of the form $B\ \Sigma^{3(k-1)}\ 010\ \Sigma^\ast\ 010\ \Sigma^{3(k-1)}\ E$), contradicting the fact that $u_i$ and $v_j$ are orthogonal. On the other hand, assume that $w'\notin\LL(A)$ for some infix $w'$ of $w$. Since $w'$ is rejected by $A_B$ and $A_E$, we know that $w'=B\ w''\ E$ for some $w''\in\Sigma^\ast$. The definition of $h$ and the definition of $B$ and $E$ then ensure that any factor $11011$ and $11111$ of $w$ must correspond to a $B$ that marks the beginning of some $u_i$-block and an $E$ that marks the end of some $v_j$-block. Thus, we know that there exist $i,j\in[n]$ such that $w'=(B\ h(u_i))\ldots (h(v_j^R)\ E)$, i.e., $w'$ cannot start or end in the middle of some block. Finally, $w'\notin\LL(A')$ yields that there is no $k\in[d]$ such that $h(u_i[k])=h(v_j^R[d-k+1])=010=h(1)$, thus, $u_i\cdot v_j=0$, i.e., $u_i$ and $v_j$ are orthogonal.
    
    Now, since $|w|=O(nd)$ and $m=|A|=O(d^2)$, if we can solve the universal variant of the infix matching problem in $O(|w|^{2-\epsilon}\mathrm{poly}(m))=O(n^{2-\epsilon}\mathrm{poly}(d))$ time, for some $\epsilon>0$, then SETH fails.
\end{proof}

\subsection{Subsequence and Supersequence Relations}

With respect to $\subsequencerel$ and $\supersequencerel$, we have proven optimal linear upper bounds for the non-quantitative variants in Section~\ref{sec:subAndsupersequence}. However, for the min- and max-variants, we only have upper bounds of $O(|w|m)$, which we will now complement with conditional lower bounds. 

We can easily construct $\eword$NFAs $A_{u, \subsequence}$ and $A_{u, \supersequence}$ that accept exactly the subsequences of a string $u$ (and supersequences of a string $u$, respectively)~-- 
note that this is an easy special case of \cite[Lemma 8]{bachmeier2015finite} which we reviewed in \cref{sec:naive}.
This means that the max-variant of the $\subsequencerel$-matching problem applied to the $\eword$NFA $A_{u, \subsequence}$ and a string $v$ amounts to computing the longest common subsequence of~$u$ and~$v$. Likewise, the min-variant of the $\supersequencerel$-matching problem applied to $A_{u,\supersequence}$ and $v$ admits a reduction from the shortest common supersequence problem. 
Thus, solving these problems would allow us to solve the longest common subsequence problem (or shortest common supersequence problem, respectively) for $u$ and $v$. 
This means that the known SETH-conditional lower bounds on the longest common subsequence and shortest common supersequence problems carry over to these quantitative variants of the matching problem (see~\cite{BringmannKunnemann2015,AbboudEtAl2015}).
Note that these lower bounds already hold when computing the length of the sequences, and further they already hold for constant-sized alphabets so it is not a problem that the automaton $A_{u, \supersequence}$ generally has size $\Theta(|u| \cdot |\Sigma|)$. Thus, we obtain:

\begin{theorem}\label{lowerBoundsFromLCSAndSCS}
If the max-variant of the matching problem for the subsequence relation or the min-variant of the matching problem for the supersequence relation can be solved in time $O((|w|m)^{1 - \epsilon})$ for some $\epsilon > 0$, then SETH fails.  
This holds even if we operate on an alphabet of constant size and if we only require the length of the answer strings.
\end{theorem}

\begin{proof}
Let us first recall that 
there is a constant-sized alphabet $\Sigma$ such that 
  if we can compute the 
  length of the
  longest common subsequence for two strings $u, v \in \Sigma^*$ in time $O((|u|\cdot|v|)^{1 - \epsilon})$ for some $\epsilon > 0$, then SETH fails (see~\cite{BringmannKunnemann2015,AbboudEtAl2015}).
  We take $\Sigma$ to be this constant-sized alphabet in the rest of this proof.
Moreover, the length $p$ of the longest common subsequence of $u$ and $v$ and
  the length $q$ of the shortest common supersequence of $u$ and $v$ obey the
  relationship $p = |u| + |v| - q$ (see~\cite{BergrothHR00}), so
  from 
  the length of
  a longest common subsequence $s$ for $u$ and $v$ we can obtain in linear time 
  the length of
  a shortest common supersequence of $u$ and $v$ and vice versa.
  This implies that if we can compute the 
  length of the
  shortest common supersequence for two strings $u$ and $v$ in time $O((|u|\cdot|v|)^{1 - \epsilon})$ for some $\epsilon > 0$, then SETH fails as well.

Now assume that the max-variant of the matching problem for the subsequence relation can be solved in time $O((|w|m)^{1 - \epsilon})$. Let $u$ and $v$ be two strings over~$\Sigma$. We construct an $\eword$NFA $A_u$ that accepts exactly the subsequences of $u$, and that has $O(|u|)$ states and transitions.
  This can be done in time $O(|u|)$:
  indeed, a more general result from \cite[Lemma 8]{bachmeier2015finite} was reviewed in \cref{sec:naive}.
  Now the longest subsequence of $v$ that is accepted by $A_u$ is exactly the longest common subsequence of $u$ and $v$. Thus, if the max-variant of the matching problem for the subsequence relation can be solved in time $O((|v|\cdot|A_u|)^{1 - \epsilon})$, then we can also 
  compute the length of the longest common subsequence of $u$ and $v$ in time $O((|u|\cdot|v|)^{1 - \epsilon})$.

  For the min-variant of the matching problem for the supersequence relation, assume that it can be solved in time $O((|w|m)^{1 - \epsilon})$. Let $u$ and $v$ be two strings over~$\Sigma$. We build an $\eword$NFA $A_u$ that accepts exactly the supersequences of~$u$, which has $O(|u|)$ states and $O(|u| \cdot |\Sigma|)$ transitions: this can be done in time $O(|u| \cdot |\Sigma|)$, again as a special case of the more general result from \cite[Lemma 8]{bachmeier2015finite} reviewed in \cref{sec:naive}. As $|\Sigma|$ is a constant, the construction is in $O(|u|)$. Now, solving the min-variant of the matching problem for the supersequence relation in the prescribed complexity gives us the 
  length of the 
  shortest common supersequence of~$u$ and~$v$ in time $O((|u|\cdot|v|)^{1 - \epsilon})$.
\end{proof}

This leaves the question of the optimality of $O(|w|m)$-time for the min-variant of the $\subsequencerel$-matching problem and the max-variant of the $\supersequencerel$-matching problem. For these, similarly as in Theorem~\ref{lowerboundUniversalTheoremInfix}, we  provide SETH-based conditional lower bounds via a reduction from the orthogonal vectors problem.
However, now we consider its \emph{unbalanced} variant (UOV), in which we are given two sets $U=\{u_1,\ldots,u_n\}, V=\{v_1,\ldots,v_{t}\}$ of Boolean $d$-dimensional vectors, where $d=\omega(\log n)$ and $t=n^\alpha$ for $\alpha\in(0,1]$, and want to determine whether there exist $i\in[n],j\in[t]$ such that $u_i\perp v_j$.
The Unbalanced Orthogonal Vector Hypothesis (UOVH) states that for every
$\epsilon>0$ there is no algorithm for UOV that runs in
$O(n^{1+\alpha-\epsilon}\mathrm{poly}(d))$ time. As UOVH is equivalent to the
Orthogonal Vectors Hypothesis \cite{BringmannK18,BringmannNusser21}, an
algorithm for UOV achieving such a running time bound would refute SETH (recall
Lemma~\ref{ovSethHard}). We can now show:

\begin{theorem}\label{lowerBoundTriangleTheorem}
If the max-variant of the matching problem for the supersequence relation or the min-variant of the matching problem for the subsequence relation can be solved in time $O((|w|m)^{1-\epsilon})$, for some $\epsilon>0$, then SETH fails.
This holds even if we operate on an alphabet of constant size and if we only require the length of the answer strings.
\end{theorem}
\begin{proof}
In order to simplify the presentation we operate on regular expressions instead of $\eword$NFAs, as an $\eword$NFA can be created from a regular expression in linear time.
For a string $s$ and regular expression~$r$, let $\maxsup(s,r)$ be the length of the longest supersequence of $s$ that matches~$r$ and $\minsub(s,r)$ be the length of the shortest subsequence of $s$ that matches~$r$. Further, 
for strings $a_1, \ldots, a_d \in \Sigma^*$ and for an integer predicate $\mathcal{P}$,
we denote by $\prod_{j\in [d]: \mathcal{P}(j)} a_j$ the concatenation of all $a_j\in\Sigma^\ast$ over the $j\in[d]$ such that $\mathcal{P}$ holds for $j$.
For example, let $b=010110\in\{0,1\}^\ast$ be a string of length $6$ and let $a_1,\ldots, a_6$ be strings over some larger alphabet $\Sigma$. Then $\prod_{j\in[6]:b[j]=1} a_j = a_2a_4a_5$ concatenates all strings $a_j$ for which $b[j]=1$ holds.

We will present a reduction from UOV in which we are given two sets of binary vectors $U=\{u_1,\ldots,u_n\},V=\{v_1,\ldots,v_t\}\subseteq\{0,1\}^d$ where $d=\omega(\log n)$ and $t=n^\alpha$ for $\alpha\in(0,1]$ to be chosen later.
First we consider the alphabet $\Sigma =\{\#,\$,c_1,c_2,\ldots,c_d\}$ of non-constant size and later show how to reduce it.
Observe that the following identities follow immediately from the properties of the subsequence relation:
\begin{claim}\label{claim:blocks}
 Let $A_1,\ldots,A_k,B_1,\ldots,B_k$ be strings over $\Sigma\setminus\{\#\}$ and $\mathcal{A}\coloneq A_1\#A_2\#\ldots A_k\#$ and $\mathcal{B}\coloneq B_1\#B_2\#\ldots B_k\#$.
 Then the following hold:
 \begin{itemize}
  \item $\mathcal{A}\subsequencerel\mathcal{B}$ iff for all $i\in[k]$ we have $A_i \subsequencerel B_i$
  \item $\maxsup(\mathcal{A},\mathcal{B}) =\sum_{i\in[k]}\left(\maxsup(A_i,B_i)+1\right)$
  \item $\minsub(\mathcal{A},\mathcal{B}) =\sum_{i\in[k]}(\minsub(A_i,B_i)+1)$
 \end{itemize}
\end{claim}
 Note that, in the statement above and throughout the proof, we abuse notation to see strings as regular expressions, i.e., the string $\mathcal{B}$ in the expressions $\maxsup(\mathcal{A},\mathcal{B})$ and 
 $\minsub(\mathcal{A},\mathcal{B})$ denotes the regular expression that accepts precisely $\mathcal{B}$.

 Claim~\ref{claim:blocks} will allow us to combine independent gadgets into one big string, ensuring that they do not affect each other.
We start with the max-variant of the matching problem for the supersequence relation.
\paragraph{Max-Variant of the Matching Problem for the Supersequence Relation.}
Given an instance of UOV with sets of vectors $U=\{u_1,\ldots,u_n\},V=\{v_1,\ldots,v_t\}\subseteq\{0,1\}^d$ we define $Z = \Pi_{j\in[d]} c_j$ and the following two functions that operate on binary vectors:
\begin{align*}
 F(v) &= \Pi_{j\in[d]:v[j]=1} c_j\\
 G(u) &= \left(\Pi_{j\in[d]:u[j]=0} c_j\right)\$^{1+\lVert u\rVert_1}
\end{align*}

Note that $F(v)\subsequencerel Z$ so $\maxsup(F(v),Z)=|Z|=d$.
Further observe that $F(v)\subsequencerel G(u) \iff v \perp u$, so if $v \perp u$ we have $\maxsup(F(v),G(u))=|G(u)|=d+1$.

Consider the regular expression $Y\coloneq Z \vee G(u_1) \vee G(u_2) \ldots \vee G(u_n) $.
Then we have:
\begin{equation}\label{eq:maxsup_block}
 \maxsup(F(v),Y)=  d+\mathbbm{1}[\exists_{i\in[n]} v\perp u_i],
\end{equation}
where the function $\mathbbm{1}$ is defined as $\mathbbm{1}[\mathcal{P}]$ equals $1$ if the predicate $\mathcal{P}$ holds, and $0$ otherwise.

We now use Claim~\ref{claim:blocks} to combine our construction for various vectors $v$ in such a way that they do not affect each other.
Let $w=F(v_1)\#F(v_2)\#\ldots F(v_t)\#$ and $R=(Y\#)^t$.
In order to avoid clutter, whenever $i$ is not quantified, we mean $i\in[t]$. Then we have:
\begin{align*}
  \maxsup(w,R) &= \maxsup\left(\Pi_i(F(v_i)\#),\Pi_i(Y\# )\right) & \text{(definition of $w$ and $R$)}\\
&= \sum_i(\maxsup\left(F(v_i),Y\right)+1) & \text{(Claim~\ref{claim:blocks})}\\
&= \sum_i \left(d+ \mathbbm{1} [\exists_{k\in[n]} v_i\perp u_k] + 1 \right) & \text{(Equation~\ref{eq:maxsup_block})}\\
&= t(d+1) + |\{i: \exists_{k\in[n]} v_i\perp u_k\}|
\end{align*}
To conclude, there exists an orthogonal pair of vectors $(u,v)\in U\times V$ iff $\maxsup(w,R) > t(d+1)$.

Now suppose that there exists an algorithm that solves the max-variant of the matching problem for the supersequence relation in time $O((|w||R|)^{1-\epsilon})$ for some $\epsilon>0$.
As $|w|= O(td)$ and $|R|= O(ntd)$, this yields an algorithm for UOV in time $O((nt^2d^2)^{1-\epsilon})=O(n^{(1+2\alpha)(1-\epsilon)}\mathrm{poly}(d))$.
Choosing $\alpha \coloneq \epsilon/2$ we obtain an algorithm for UOV that runs in time $O(n^{(1+\epsilon)(1-\epsilon)}\mathrm{poly}(d))=O(n^{1+\alpha-(\alpha+\epsilon^2)}\mathrm{poly}(d))$ which violates UOVH.
This concludes the proof for the case of non-constant-size alphabets.

\paragraph{Constant-Size Alphabet.}
Now we show how to adjust the above construction to operate on constant-size alphabets.
Note that the only strings containing characters $c_\star$ in their definition are $Z,F(v)$, and $G(u)$ where both $F(v)$ and the prefix of $G(u)$ before $\$$'s are subsequences of $Z$.

Let $\Sigma'=\{0,1,\$,\#\}$ and define a function $\delta$ by $\delta(1)\coloneq 10$ and $\delta(0)\coloneq 0$.
We define $Z'\coloneq (\delta(1))^d=(10)^d$, $F'(v)=\Pi_{j\in [d]} \delta(v[j])$, and $G'(u)\coloneq \left(\Pi_{j\in [d]} \delta(\neg u[j])\right)\$^{1+\lVert u\rVert_1}$ where $\neg x=1-x$ is the negation of a bit.
Similarly as in Claim~\ref{claim:blocks}, now the $0$'s serve as separators that ensure that $1$'s are matched only between the corresponding positions, because we have the same number of $0$'s in $Z',F'(v)$ and $G'(u)$.

Note that we have $F'(v)\subsequencerel Z'$ and $\maxsup(F'(v),Z')=|Z'|=2d$.
Next, we have $F'(v)\subsequencerel G'(u) \iff v \perp u$, so if $v \perp u$ we have $\maxsup(F'(v),G'(u))=|G'(u)|=2d+1$.
We define $Y',w'$ and $R'$ as before, but now using the new expressions $Z', G'$ and $F'$.
Then, $\maxsup(F'(v),Y')=2d+ \mathbbm{1}[\exists_{i\in[n]} v\perp u_i]$ and finally $\maxsup(w',R')=t(2d+1) + |\{i: \exists_{k\in[n]} v_i\perp u_k\}|$.
The result of the running time analysis is unchanged for the new construction.
This concludes the proof for the case of constant-size alphabet $\Sigma'=\{0,1,\$,\#\}$ for the max-variant of the matching problem for the supersequence relation.

\paragraph{Min-Variant of the Matching Problem for the Subsequence Relation.}
The reduction for this problem will be a slight adjustment of the reduction from the previous paragraphs.
Again we start with the construction for non-constant size alphabets.
Given an instance of UOV, we define $Z = \$^{d+1}$ and define the following two functions that operate on binary vectors:
\begin{align*}
 F(v) &= \left(\Pi_{j\in[d]:v[j]=0} c_j\right)\$^{d+1}\\
 G(u) &= \left(\Pi_{j\in[d]:u[j]=1} c_j\right)\$^{\lVert u\rVert_0}
\end{align*}

Note that $Z\subsequencerel F(v)$ so $\minsub(F(v),Z)=|Z|=d+1$.
Further observe that $G(u)\subsequencerel F(v) \iff v \perp u$ so if $v \perp u$, we have $\minsub(F(v),G(u))=|G(u)|=d$.

Consider the regular expression $Y\coloneq Z \vee G(u_1) \vee G(u_2) \ldots \vee G(u_n) $.
Then we have:
\begin{equation*}
 \minsub(F(v),Y)=  d+1-\mathbbm{1}[\exists_{i\in[n]} v\perp u_i]
\end{equation*}
Let $w=\Pi_{i\in[t]}\left(F(v_i)\#\right)$ and $R=(Y\#)^t$.
By Claim~\ref{claim:blocks}, similarly as before, we deduce:
$$\minsub(w,R)=t(d+2) - |\{i: \exists_{k\in[n]} v_i\perp u_k\}|$$
and hence there exists an orthogonal pair of vectors $(u,v)\in U\times V$ iff $\minsub(w,R) < t(d+2)$.

The rest of the analysis is identical to that of the previous paragraphs.
Namely, as $|w|= O(td)$ and $|R|= O(ntd)$, this yields an algorithm for UOV in time $O((nt^2d^2)^{1-\epsilon})=O(n^{(1+2\alpha)(1-\epsilon)}\mathrm{poly}(d))$.
Choosing $\alpha \coloneq \epsilon/2$ we obtain an algorithm for UOV that runs in time $O(n^{(1+\epsilon)(1-\epsilon)}\mathrm{poly}(d))=O(n^{1+\alpha-(\alpha+\epsilon^2)}\mathrm{poly}(d))$ which violates UOVH.

\paragraph{Constant-Size Alphabet.}
Reducing the alphabet size to a constant is very similar to the case of the max-variant for the supersequence relation.
Let $\Sigma'\coloneq \{0,1,\$,\#\}$ and  define a function $\delta$ as before by $\delta(1)\coloneq 10$ and $\delta(0)\coloneq 0$.
We define $Z'\coloneq 0^d\$^{d+1}$, $F'(v)\coloneq (\Pi_{j\in [d]} \delta(\neg v[j]))\$^{d+1}$, and $G'(u)\coloneq \left(\Pi_{j\in [d]} \delta(u[j])\right)\$^{\lVert u\rVert_0}$.
Again $Z'\subsequencerel F'(v)$ but now $\minsub(F'(v),Z')=|Z'|=2d+1$.
Next, $G'(u)\subsequencerel F'(v) \iff v \perp u$, so if $v \perp u$ we have $\minsub(F'(v),G'(u))=|G'(u)|=2d$.

We define $Y',w'$ and $R'$ as before, but now using the new expressions $Z', G'$ and $F'$.
Then, $\minsub(F'(v),Y')=2d+1- \mathbbm{1}[\exists_{i\in[n]} v\perp u_i]$ and finally $\minsub(w',R')=t(2d+2) - |\{i: \exists_{k\in[n]} v_i\perp u_k\}|$ so $\minsub(w',R')< t(2d+2)$ iff there exists $(u,v)\in U\times V$ such that $u\perp v$.
Further analysis of the reduction and its running time follows as in the case of the max-variant of the supersequence relation, which concludes the proof.
\end{proof}

\section{Conclusion and Future Work}\label{sec:conc}

Let us start with a brief summary of the key messages of our results. Firstly, for all our considered relations $\preceq$, the $\preceq$-matching problem and its quantitative variants are in $O(|w|m)$ -- the (conditionally optimal) complexity for conventional regex matching. Hence, this generalises standard regex matching and the longest common subsequence\slash shortest common supersequence problems. Moreover, as a surprising and unexpected special case, the non-quantitative $\preceq$-matching problem for the subsequence and supersequence relation can be solved in linear time $O(|w|+m)$; for all other problems, the running time of $O(|w|m)$ is optimal in the sense that $O((|w|m)^{1-\epsilon})$ cannot be achieved for any $\epsilon>0$ assuming SETH. Finally, the intuitively much more difficult task of checking whether \emph{all} strings $u \preceq w$ match the regex, i.e., the universal problem variants, leads to a more diverse picture ranging from cases that can still be solved efficiently (e.g., $O(|w|m)$ or $O(|w|^2m)$ for the prefix and infix relation, respectively) to intractable cases.

With respect to the linear time solvable cases of subsequence and supersequence matching, it might be worthwhile to investigate whether our linear time algorithms can be used as components within other useful string matching algorithms. For example, in big data scenarios where quadratic running times are unrealistic, it might be possible to approximate reasonable string analysis tasks by only performing subsequence or supersequence matching. Let us also point out that none of our upper bounds rely on heavy algorithmic machinery or data structures and therefore our asymptotic running times do not hide huge constant factors. Hence, we believe that our algorithms have straightforward and practically relevant implementations (just like the classical state-set simulation does). 

In terms of theoretical questions, since the linear time complexity for subsequence and supersequence regex matching does not extend to other natural and simple relations or to the quantitative variants, it would be interesting to find more string relations with this property, e.g., the subsequence relation with bounded gap size of $k$: $x_1 x_2 \ldots x_n \preceq_{\subsequence, k} w \Leftrightarrow w = w_0 \, x_1 \, w_1 \, x_2 \ldots x_n \, w_n$ and $|w_i| \leq k$ for every $i \in \{1, 2, \ldots, n-1\}$. However, we expect that such small modifications would make the complexity quadratic again, i.e., the classical SETH-reduction would apply.

It also seems particularly interesting that we can compute in time $O(|w|\cdot |r|)$ a longest subsequence and a shortest supersequence of $w$ that match $r$, since this gives rise to a measure of how much $w$ does not match $r$. More precisely, we could define $\psi(r, w)$ as $(|w| - |u|) + (|v| - |w|) = |v| - |u|$, where $u$ and $v$ are the longest subsequence and shortest supersequence of $w$ that match $r$, respectively (and $\psi(r, w) = \bot$ if $u$ or $v$ does not exist). Note that $\psi(r, w) = 0 \iff w \in \LL(r)$. With this measure in mind, any regular expression extends from a mere predicate (i.e., a formal language) to a partial function $\Sigma^* \to \mathbb{N}$. On the one hand, it might be interesting to investigate how fast this measure $\psi(r, \cdot)$ can be computed for other (practically motivated) classes of regular expressions, e.g., deterministic regular expressions or regular expressions with counters (or even strictly more powerful language classes). On the other hand, it might be interesting to investigate whether the measure $\psi(r, w)$ can be employed within other string analysis tasks that go beyond classical regex matching.

Finally, our tractability results on sub- and supersequence matching can
also be seen as tractability results for the matching problem on automata
that satisfy certain conditions, namely, their language
is upward- or downward-closed. Our results imply that, on
such automata, we can perform matching in time
$O(|w|+m)$ instead of $O(|w|m)$. One natural question is which other
restrictions on automata make it possible to achieve such improved bounds. Examples in this direction are the regex classes studied
in~\cite{backurs2016regular,BringmannEtAl2024,BringmannEtAl2017}
or automata classes where determinisation can be done in polynomial time, e.g.,
NFAs of bounded co-lex width~\cite{cotumaccio2023co}.

\section*{Acknowledgements} 

The first author was partially supported by the ANR project EQUUS ANR-19-CE48-0019 and by the Deutsche Forschungsgemeinschaft (DFG, German Research Foundation) – 431183758.
The second author was partially supported by the Polish National Science Centre grant number 2023/51/B/ST6/01505.
The third and fourth authors are supported by the German Research Foundation (Deutsche Forschungsgemeinschaft, DFG) through the Heisenberg project 466789228 and the research project 562495345. 
The fifth author is supported by the German Research Foundation (Deutsche Forschungsgemeinschaft, DFG) – project number 522576760 (gef\"{o}rdert durch die Deutsche Forschungsgemeinschaft (DFG) – Projektnummer 522576760).

\bibliographystyle{alphaurl}
\bibliography{main}

@inproceedings{bachmeier2015finite,
  title={Finite automata for the sub- and superword closure of {CFLs}: {D}escriptional and computational complexity},
  author={Bachmeier, Georg and Luttenberger, Michael and Schlund, Maximilian},
  booktitle={{LATA}},
  -pages={473--485},
  year={2015},
  doi= {10.1007/978-3-319-15579-1\_37},
  -organization={Springer}
}

@book{crochemore,
  author    = {Maxime Crochemore and
               Christophe Hancart and
               Thierry Lecroq},
  title     = {Algorithms on strings},
  publisher = {Cambridge University Press},
  year      = {2007},
  isbn      = {978-0-521-84899-2},
}

@Article{CrochemoreMT03,
  author  = {Maxime Crochemore and Borivoj Melichar and Zdenek Tron{\'{\i}}cek},
  journal = {J. Discrete Algorithms},
  title   = {Directed acyclic subsequence graph --- Overview},
  year    = {2003},
  number  = {3-4},
  -pages   = {255--280},
  volume  = {1},
  doi     = {10.1016/S1570-8667(03)00029-7},
  -address = {$ $},
}

@article{FredmanW93,
  author       = {Michael L. Fredman and
                  Dan E. Willard},
  title        = {Surpassing the Information Theoretic Bound with Fusion Trees},
  journal      = {J. Comput. Syst. Sci.},
  volume       = {47},
  number       = {3},
  -pages        = {424--436},
  year         = {1993},
  -url          = {https://doi.org/10.1016/0022-0000(93)90040-4},
  doi          = {10.1016/0022-0000(93)90040-4},
}

@article{okhotin2010state,
  title={On the state complexity of scattered substrings and superstrings},
  author={Okhotin, Alexander},
  journal={Fundamenta Informaticae},
  volume={99},
  number={3},
  -pages={325--338},
  year={2010},
  doi = {10.3233/FI-2010-252},
  publisher={IOS Press}
}

@inproceedings{backurs2016regular,
  title={Which regular expression patterns are hard to match?},
  author={Backurs, Arturs and Indyk, Piotr},
  booktitle={FOCS},
  -pages={457--466},
  year={2016},
  doi= {10.1109/FOCS.2016.56},
  -organization={IEEE}
}

@inproceedings{FazekasKMMS24,
  author       = {Szil{\'{a}}rd Zsolt Fazekas and
                  Tore Ko{\ss} and
                  Florin Manea and
                  Robert Mercas and
                  Timo Specht},
  title        = {Subsequence Matching and Analysis Problems for Formal Languages},
  booktitle    = {ISAAC},
  -series       = {LIPIcs},
  volume       = {322},
  -pages        = {28:1--28:23},
  -publisher    = {Schloss Dagstuhl - Leibniz-Zentrum f{\"{u}}r Informatik},
  year         = {2024},
  -url          = {https://doi.org/10.4230/LIPIcs.ISAAC.2024.28},
  doi          = {10.4230/LIPICS.ISAAC.2024.28},
}

@inproceedings{GawrychowskiKKM21,
  author       = {Pawel Gawrychowski and
                  Maria Kosche and
                  Tore Ko{\ss} and
                  Florin Manea and
                  Stefan Siemer},
  title        = {Efficiently Testing {S}imon's Congruence},
  booktitle    = {STACS},
  -series       = {LIPIcs},
  volume       = {187},
  -pages        = {34:1--34:18},
  -publisher    = {Schloss Dagstuhl - Leibniz-Zentrum f{\"{u}}r Informatik},
  year         = {2021},
  -url          = {https://doi.org/10.4230/LIPIcs.STACS.2021.34},
  doi          = {10.4230/LIPICS.STACS.2021.34},
}

@inproceedings{SimonWords,
    author  = {Imre Simon},
    title   = {Words distinguished by their subwords ({E}xtended Abstract)},
  booktitle     = {WORDS},
  -pages     = {6--13},
  series    = {TUCS General Publication},
  volume = {27},
  year = {2003},
  -address = {$ $},
 }

@article{AdamsonFHKMN25,
  author       = {Duncan Adamson and
                  Pamela Fleischmann and
                  Annika Huch and
                  Tore Ko{\ss} and
                  Florin Manea and
                  Dirk Nowotka},
  title        = {k-Universality of Regular Languages},
  journal      = {Inf. Comput.},
  volume       = {307},
  -pages        = {105357},
  year         = {2025},
  -url          = {https://doi.org/10.1016/j.ic.2025.105357},
  doi          = {10.1016/J.IC.2025.105357},
  }

@inproceedings{abs-2208-14722,
  author       = {Maria Kosche and
                  Tore Ko{\ss} and
                  Florin Manea and
                  Stefan Siemer},
  title        = {Combinatorial Algorithms for Subsequence Matching: {A} Survey},
  booktitle    = {NCMA},
  series       = {{EPTCS}},
  volume       = {367},
  -pages        = {11--27},
  year         = {2022},
  -url          = {https://doi.org/10.4204/EPTCS.367.2},
  doi          = {10.4204/EPTCS.367.2},
}

@misc{grandjean2023which,
    author = {Grandjean, {\'E}tienne and Jachiet, Louis},
    note = {Preprint: \texttt{\href{https://arxiv.org/abs/2206.13851}{arXiv:2206.13851}}},
    title = {\href{https://arxiv.org/abs/2206.13851}{Which arithmetic operations can be performed in constant time in the {RAM} model with addition?}},
    year = {2023}
}

@book{hopcroft,
  author = {Hopcroft, John E. and Motwani, Rajeev and Ullman, Jeffrey D.},
  title = {Introduction to Automata Theory, Languages, and Computation},
  publisher = {Addison-Wesley},
  isbn = {0201441241},
  year={2001}
}

@incollection{Kleene1956,
AUTHOR = {S.C. Kleene},
-EDITOR = {C.E. Shannon and J. McCarthy},
TITLE = {Representation of events in nerve nets and finite automata},
BOOKTITLE = {Automata Studies},
PUBLISHER = {Princeton University Press},
YEAR = 1956,
SERIES = {Annals of Mathematics Studies},
VOLUME = 34,
-PAGES = {3--41}}

@article{Thompson1968,
AUTHOR = {K. Thompson},
TITLE = {Programming Techniques: Regular expression search algorithm},
JOURNAL = {Communications of the ACM},
YEAR = 1968 ,
VOLUME = 11,
doi = {10.1145/363347.363387},}

@book{Friedl2006,
  author    = {Jeffrey E. F. Friedl},
  title     = {Mastering regular expressions - Understand your data and be more productive:
               for Perl, PHP, Java, .NET, Ruby, and more},
  publisher = {O'Reilly},
  year      = {2006},
  url       = {http://www.oreilly.de/catalog/regex3/index.html},
  isbn      = {978-0-596-52812-6},
}

@article{BringmannEtAl2024,
  author       = {Karl Bringmann and
                  Allan Gr{\o}nlund and
                  Marvin K{\"{u}}nnemann and
                  Kasper Green Larsen},
  title        = {The {NFA} Acceptance Hypothesis: Non-Combinatorial and Dynamic Lower
                  Bounds},
  journal      = {TheoretiCS},
  volume       = {3},
  year         = {2024},
  -url          = {https://doi.org/10.46298/theoretics.24.22},
  doi          = {10.46298/THEORETICS.24.22},
  timestamp    = {Wed, 11 Mar 2026 13:25:33 +0100},
  biburl       = {https://dblp.org/rec/journals/theoretics/BringmannGKL24.bib},
  bibsource    = {dblp computer science bibliography, https://dblp.org}
}

@article{AnglesEtAl2017,
  author       = {Renzo Angles and
                  Marcelo Arenas and
                  Pablo Barcel{\'{o}} and
                  Aidan Hogan and
                  Juan L. Reutter and
                  Domagoj Vrgoc},
  title        = {Foundations of Modern Query Languages for Graph Databases},
  journal      = {{ACM} Comput. Surv.},
  volume       = {50},
  number       = {5},
  -pages        = {68:1--68:40},
  year         = {2017},
  -url          = {https://doi.org/10.1145/3104031},
  doi          = {10.1145/3104031},
}

@article{FaginEtAl2015,
  author       = {Ronald Fagin and
                  Benny Kimelfeld and
                  Frederick Reiss and
                  Stijn Vansummeren},
  title        = {Document Spanners: {A} Formal Approach to Information Extraction},
  journal      = {J. {ACM}},
  volume       = {62},
  number       = {2},
  -pages        = {12:1--12:51},
  year         = {2015},
  -url          = {https://doi.org/10.1145/2699442},
  doi          = {10.1145/2699442}
}

@article{GrezEtAl2021,
  author       = {Alejandro Grez and
                  Cristian Riveros and
                  Mart{\'{\i}}n Ugarte and
                  Stijn Vansummeren},
  title        = {A Formal Framework for Complex Event Recognition},
  journal      = {{ACM} Trans. Database Syst.},
  volume       = {46},
  number       = {4},
  -pages        = {16:1--16:49},
  year         = {2021},
  -url          = {https://doi.org/10.1145/3485463},
  doi          = {10.1145/3485463}
}

@article{LiuNorige2019,
  author       = {Alex X. Liu and
                  Eric Norige},
  title        = {A De-Compositional Approach to Regular Expression Matching for Network
                  Security},
  journal      = {{IEEE/ACM} Trans. Netw.},
  volume       = {27},
  number       = {6},
  -pages        = {2179--2191},
  year         = {2019},
  -url          = {https://doi.org/10.1109/TNET.2019.2941920},
  doi          = {10.1109/TNET.2019.2941920}
}

@inproceedings{SnyderKanich2015,
  author       = {Peter Snyder and
                  Chris Kanich},
  title        = {No Please, After You: Detecting Fraud in Affiliate Marketing Networks},
  booktitle    = {WEIS},
  year         = {2015},
  url          = {http://www.econinfosec.org/archive/weis2015/papers/WEIS\_2015\_snyder.pdf}
}

@inproceedings{KumarEtAl2007,
  author       = {Sailesh Kumar and
                  Balakrishnan Chandrasekaran and
                  Jonathan S. Turner and
                  George Varghese},
  title        = {Curing regular expressions matching algorithms from insomnia, amnesia,
                  and acalculia},
  booktitle    = {ANCS},
  -pages        = {155--164},
  year         = {2007},
  -url          = {https://doi.org/10.1145/1323548.1323574},
  doi          = {10.1145/1323548.1323574},
}

@phdthesis{simonPhD,
  author    = {Imre Simon},
  title     = {Hierarchies of events with dot-depth one},
  school = { University of Waterloo},
  year      = {1972}
  }

@InProceedings{Simon72,
  author =	"Imre Simon",
  title =	"Piecewise testable events",
  year = 	"1975",
  -pages =	"214--222",
  booktitle = {Automata Theory and Formal Languages, 2nd GI Conference},
  doi = {10.1007/3-540-07407-4\_23},
  volume =	"33",
  series =	"LNCS",
}

@Article{journals/lmcs/KarandikarS19,
  title =	"The height of piecewise-testable languages and the
		 complexity of the logic of subwords",
  author =	"Prateek Karandikar and Philippe Schnoebelen",
  journal =	"Log. Methods Comput. Sci.",
  year = 	"2019",
  number =	"2",
  volume =	"15",
  doi    = {10.23638/LMCS-15(2:6)2019},
}

@inproceedings{PraveenEtAl2024,
  author       = {M. Praveen and Philippe Schnoebelen and Julien Veron and Isa Vialard},
  title        = {On the Piecewise Complexity of Words and Periodic Words},
  booktitle    = {SOFSEM},
  -pages        = {456--470},
  year         = {2024},
  -url          = {https://doi.org/10.1007/978-3-031-52113-3\_32},
  doi          = {10.1007/978-3-031-52113-3\_32}
}

@inproceedings{HalfonSZ17,
  author    = {Simon Halfon and
               Philippe Schnoebelen and
               Georg Zetzsche},
  title     = {Decidability, complexity, and expressiveness of first-order logic
               over the subword ordering},
  -pages     = {1--12},
  year      = {2017},
  booktitle     = {LICS},
  doi          = {10.1109/LICS.2017.8005141},
}

@inproceedings{KuskeZ19,
  author    = {Dietrich Kuske and
               Georg Zetzsche},
  title     = {Languages Ordered by the Subword Order},
  booktitle = {FOSSACS},
  -pages     = {348--364},
  year      = {2019},
  series    = {LNCS},
  volume    = {11425},
  doi       = {10.1007/978-3-030-17127-8\_20},
}

@inproceedings{Kuske20,
  author    = {Dietrich Kuske},
  title     = {The Subtrace Order and Counting First-Order Logic},
  booktitle = {CSR},
  series    = {LNCS},
  volume    = {12159},
  -pages     = {289--302},
  doi       = {10.1007/978-3-030-50026-9\_21},
  year      = {2020},
}

@inproceedings{Zetzsche16,
  author    = {Georg Zetzsche},
  title     = {The Complexity of Downward Closure Comparisons},
  -pages     = {123:1--123:14},
  year      = {2016},
  booktitle     = {ICALP},
  -series    = {LIPIcs},
  volume    = {55},
  doi        = {10.4230/LIPICS.ICALP.2016.123},
}

@article{RigoS15,
  author    = {Michel Rigo and
               Pavel Salimov},
  title     = {Another generalization of abelian equivalence: Binomial complexity
               of infinite words},
  journal   = {Theor. Comput. Sci.},
  volume    = {601},
  -pages     = {47--57},
  year      = {2015},
  doi       = {10.1016/J.TCS.2015.07.025}
}

@inproceedings{FreydenbergerGK15,
  author    = {Dominik D. Freydenberger and
               Pawel Gawrychowski and
               Juhani Karhum{\"{a}}ki and
               Florin Manea and
               Wojciech Rytter},
  title     = {Testing $k$-binomial equivalence},
  booktitle = {{\em Multidisciplinary Creativity}, a collection of papers dedicated to G. P\u aun 65th birthday},
  -pages     = { 239--248},
  year      = {2015},
note   = {Available in CoRR abs/1509.00622},
}

@article{DayFKKMS25,
  author       = {Joel D. Day and
                  Pamela Fleischmann and
                  Maria Kosche and
                  Tore Ko{\ss} and
                  Florin Manea and
                  Stefan Siemer},
  title        = {The edit distance to k-subsequence universality},
  journal      = {J. Comput. Syst. Sci.},
  volume       = {154},
  -pages        = {103681},
  year         = {2025},
  -url          = {https://doi.org/10.1016/j.jcss.2025.103681},
  doi          = {10.1016/J.JCSS.2025.103681},
}

@article{Maier:1978,
 author = {Maier, David},
 title = {The Complexity of Some Problems on Subsequences and Supersequences},
 journal = {J. ACM},
 issue_date = {April 1978},
 volume = {25},
 number = {2},
 -month = apr,
 year = {1978},
 issn = {0004-5411},
 -pages = {322--336},
 numpages = {15},
  -address = {$ $},
  doi  = {10.1145/322063.322075},
}

@article{HuntS77,
  author    = {James W. Hunt and
               Thomas G. Szymanski},
  title     = {A Fast Algorithm for Computing Longest Subsequences},
  journal   = {Communications of the {ACM}},
  volume    = {20},
  number    = {5},
  -pages     = {350--353},
  year      = {1977},
  -url       = {https://doi.org/10.1145/359581.359603},
  doi       = {10.1145/359581.359603},
  timestamp = {Wed, 14 Nov 2018 10:22:34 +0100},
  biburl    = {https://dblp.org/rec/journals/cacm/HuntS77.bib},
  bibsource = {dblp computer science bibliography, https://dblp.org}, 
  -address = {$ $},
}

@inproceedings{BergrothHR00,
  author    = {Lasse Bergroth and
               Harri Hakonen and
               Timo Raita},
  title     = {A Survey of Longest Common Subsequence Algorithms},
  booktitle = {SPIRE},
  -pages     = {39--48},
  -publisher = {{IEEE} Computer Society},
  year      = {2000},
  -url       = {https://doi.org/10.1109/SPIRE.2000.878178},
  doi       = {10.1109/SPIRE.2000.878178},
  timestamp = {Wed, 16 Oct 2019 14:14:57 +0200},
  biburl    = {https://dblp.org/rec/conf/spire/BergrothHR00.bib},
  bibsource = {dblp computer science bibliography, https://dblp.org},
  -address = {$ $},
}

@article{Hirschberg77,
  author       = {Daniel S. Hirschberg},
  title        = {Algorithms for the Longest Common Subsequence Problem},
  journal      = {J. {ACM}},
  volume       = {24},
  number       = {4},
  -pages        = {664--675},
  year         = {1977},
  -url          = {https://doi.org/10.1145/322033.322044},
  doi          = {10.1145/322033.322044},
  timestamp    = {Wed, 14 Nov 2018 10:35:25 +0100},
  biburl       = {https://dblp.org/rec/journals/jacm/Hirschberg77.bib},
  bibsource    = {dblp computer science bibliography, https://dblp.org},
  -address = {$ $},
}

@article{Seki12,
  author    = {Shinnosuke Seki},
  title     = {Absoluteness of subword inequality is undecidable},
  journal   = {Theor. Comput. Sci.},
  volume    = {418},
  -pages     = {116--120},
  year      = {2012},
  doi       = {10.1016/J.TCS.2011.10.017},
  timestamp = {Sun, 28 May 2017 13:20:06 +0200},
  biburl    = {https://dblp.org/rec/bib/journals/tcs/Seki12},
  bibsource = {dblp computer science bibliography, https://dblp.org}
}

@Article{Mat04,
author = "Alexandru Mateescu and Arto Salomaa and Sheng Yu",
title = "Subword Histories and {P}arikh Matrices",
journal = "J. Comput. Syst. Sci.",
volume = "68",
number = "1",
year = "2004",
p-ages = "1--21",
doi = {10.1016/J.JCSS.2003.04.001},
}

@Article{Salomaa05,
author = "Arto Salomaa",
title = "Connections Between Subwords and Certain Matrix Mappings",
journal = "Theoret. Comput. Sci.",
volume = "340",
number = "2",
year = "2005",
doi = {10.1016/J.TCS.2005.03.024},
p-ages = "188--203"}

@inproceedings{SchnoebelenVeron2023,
  author       = {Philippe Schnoebelen and
                  Julien Veron},
  title        = {On Arch Factorization and Subword Universality for Words and Compressed
                  Words},
  booktitle    = {WORDS},
  -pages        = {274--287},
  year         = {2023},
  -url          = {https://doi.org/10.1007/978-3-031-33180-0\_21},
  doi          = {10.1007/978-3-031-33180-0\_21}
}

@article{Riddle1979a,
  author    = {William E. Riddle},
  title     = {An Approach to Software System Modelling and Analysis},
  journal   = {Comput. Lang.},
  volume    = {4},
  number    = {1},
  -pages     = {49--66},
  year      = {1979},
  -url       = {https://doi.org/10.1016/0096-0551(79)90009-2},
  doi       = {10.1016/0096-0551(79)90009-2}
}

@article{Shaw1978,
  author    = {Alan C. Shaw},
  title     = {Software Descriptions with Flow Expressions},
  journal   = {{IEEE} Trans. Software Eng.},
  volume    = {4},
  number    = {3},
  -pages     = {242--254},
  year      = {1978},
  -url       = {https://doi.org/10.1109/TSE.1978.231501},
  doi       = {10.1109/TSE.1978.231501}
}

@article{BussSoltys2014,
  author    = {Sam Buss and
               Michael Soltys},
  title     = {Unshuffling a square is {NP}-hard},
  journal   = {J. Comput. Syst. Sci.},
  volume    = {80},
  number    = {4},
  -pages     = {766--776},
  year      = {2014},
  -url       = {https://doi.org/10.1016/j.jcss.2013.11.002},
  doi        = {10.1016/J.JCSS.2013.11.002}
}

@inproceedings{ArtikisEtAl2017,
  author    = {Alexander Artikis and
               Alessandro Margara and
               Mart{\'{\i}}n Ugarte and
               Stijn Vansummeren and
               Matthias Weidlich},
  title     = {Complex Event Recognition Languages: Tutorial},
  booktitle = {DEBS},
  -pages     = {7--10},
  year      = {2017},
  -url       = {https://doi.org/10.1145/3093742.3095106},
  doi       = {10.1145/3093742.3095106}
}

@article{GiatrakosEtAl2020,
  author    = {Nikos Giatrakos and
               Elias Alevizos and
               Alexander Artikis and
               Antonios Deligiannakis and
               Minos N. Garofalakis},
  title     = {Complex event recognition in the Big Data era: {A} survey},
  journal   = {{VLDB} J.},
  volume    = {29},
  number    = {1},
  -pages     = {313--352},
  year      = {2020},
  -url       = {https://doi.org/10.1007/s00778-019-00557-w},
  doi       = {10.1007/S00778-019-00557-W},
}

@inproceedings{ZhangEtAl2014,
  author    = {Haopeng Zhang and
               Yanlei Diao and
               Neil Immerman},
  title     = {On complexity and optimization of expensive queries in complex event
               processing},
  booktitle = {SIGMOD},
  -pages     = {217--228},
  year      = {2014},
  -url       = {https://doi.org/10.1145/2588555.2593671},
  doi       = {10.1145/2588555.2593671},
}

@inproceedings{KleestMeissnerEtAl2021,
  author    = {Sarah Kleest{-}Mei{\ss}ner and Rebecca Sattler and Markus L. Schmid and Nicole Schweikardt and Matthias Weidlich},
  title     = {Discovering Event Queries from Traces: Laying Foundations for Subsequence-Queries with Wildcards and
Gap-Size Constraints},
  booktitle = {ICDT},
  year      = {2022},
  doi       = {10.4230/LIPICS.ICDT.2022.18}
}

@inproceedings{Kleest-MeissnerEtAl23,
  author       = {Sarah Kleest{-}Mei{\ss}ner and
                  Rebecca Sattler and
                  Markus L. Schmid and
                  Nicole Schweikardt and
                  Matthias Weidlich},
  title        = {Discovering Multi-Dimensional Subsequence Queries from Traces - From
                  Theory to Practice},
  -booktitle    = {Datenbanksysteme f{\"{u}}r Business, Technologie und Web {(BTW}
                  2023), 20. Fachtagung des GI-Fachbereichs ,,Datenbanken und Informationssysteme"
                  (DBIS)},
  booktitle = {BTW},
  -pages        = {511--533},
  year         = {2023},
  -url          = {https://doi.org/10.18420/BTW2023-24},
  doi          = {10.18420/BTW2023-24},
}

@inproceedings{FrochauxKleestMeissner2023,
  author       = {Andr{\'{e}} Frochaux and
                  Sarah Kleest{-}Mei{\ss}ner},
  title        = {Puzzling over Subsequence-Query Extensions: Disjunction and Generalised
                  Gaps},
  booktitle    = {AMW},
  year         = {2023},
  url          = {https://ceur-ws.org/Vol-3409/paper3.pdf}
}

@inproceedings{DBLP:conf/fsttcs/BringmannC18,
  author    = {Karl Bringmann and
               Bhaskar Ray Chaudhury},
  title     = {Sketching, Streaming, and Fine-Grained Complexity of (Weighted) {LCS}},
  booktitle = {FSTTCS},
  -pages     = {40:1--40:16},
  year      = {2018},
  -series    = {LIPIcs},
  -volume    = {122},
  doi       = {10.4230/LIPICS.FSTTCS.2018.40},
}

@inproceedings{BringmannK18,
  author    = {Karl Bringmann and
               Marvin K{\"{u}}nnemann},
  title     = {Multivariate Fine-Grained Complexity of Longest Common Subsequence},
  booktitle = {SODA},
  -pages     = {1216--1235},
  year      = {2018},
  doi       = {10.1137/1.9781611975031.79},
}

@inproceedings{AbboudEtAl2015,
  author    = {Amir Abboud and
               Arturs Backurs and
               Virginia Vassilevska Williams},
  title     = {Tight Hardness Results for {LCS} and Other Sequence Similarity Measures},
  booktitle = {FOCS},
  -pages     = {59--78},
  year      = {2015},
  -url       = {https://doi.org/10.1109/FOCS.2015.14},
  doi       = {10.1109/FOCS.2015.14}
}

@inproceedings{AbboudEtAl2014,
  author    = {Amir Abboud and
               Virginia Vassilevska Williams and
               Oren Weimann},
  title     = {Consequences of Faster Alignment of Sequences},
  booktitle = {ICALP},
  -pages     = {39--51},
  year      = {2014},
  -url       = {https://doi.org/10.1007/978-3-662-43948-7\_4},
  doi       = {10.1007/978-3-662-43948-7\_4}
}

@article{PurtzelWeidlich2025,
  author       = {Steven Purtzel and Matthias Weidlich},
  title        = {SuSe: Summary Selection for Regular Expression Subsequence Aggregation
                  over Streams},
  journal      = {Proc. {ACM} Manag. Data},
  volume       = {3},
  number       = {3},
  -pages        = {222:1--222:27},
  year         = {2025},
  doi          = {10.1145/3725359}
}

@article{cotumaccio2023co,
  title={Co-lexicographically ordering automata and regular languages - {P}art {I}},
  author={Cotumaccio, Nicola and D’Agostino, Giovanna and Policriti, Alberto and Prezza, Nicola},
  journal={J. ACM},
  volume={70},
  number={4},
  -pages={1--73},
  year={2023},
  doi = {10.1145/3607471},
  -publisher={ACM New York, NY}
}

@inproceedings{ganardi2024directed,
  title={Directed Regular and Context-Free Languages},
  author={Ganardi, Moses and Sa{\u{g}}lam, Irmak and Zetzsche, Georg},
  booktitle={STACS},
  -pages={1--20},
  year={2024},
  -organization={Schloss Dagstuhl},
  doi = {10.4230/LIPICS.STACS.2024.36},
}

@inproceedings{abdulla1998fly,
  title={On-the-fly analysis of systems with unbounded, lossy {FIFO} channels},
  author={Abdulla, Parosh Aziz and Bouajjani, Ahmed and Jonsson, Bengt},
  booktitle={CAV},
  -pages={305--318},
  year={1998},
  -organization={Springer},
  doi  = {10.1007/BFB0028754},
}

@article{goubault2020ideal,
  title={The ideal approach to computing closed subsets in well-quasi-orderings},
  author={Goubault-Larrecq, Jean and Halfon, Simon and Karandikar, Prateek and Narayan Kumar, K and Schnoebelen, Philippe},
  journal={Well-Quasi Orders in Computation, Logic, Language and Reasoning: A Unifying Concept of Proof Theory, Automata Theory, Formal Languages and Descriptive Set Theory},
  -pages={55--105},
  year={2020},
  publisher={Springer},
  doi={10.1007/978-3-030-30229-0_3}
}

@inproceedings{BringmannKunnemann2015,
  author       = {Karl Bringmann and
                  Marvin K{\"{u}}nnemann},
  title        = {Quadratic Conditional Lower Bounds for String Problems and Dynamic
                  Time Warping},
  booktitle    = {FOCS},
  -pages        = {79--97},
  year         = {2015},
  -url          = {https://doi.org/10.1109/FOCS.2015.15},
  doi          = {10.1109/FOCS.2015.15}
}

@article{kOVBase,
    author = {Ryan Williams},
    title = {A new algorithm for optimal 2-constraint satisfaction and its
             implications},
    journal = {Theor. Comput. Sci.},
    volume = {348},
    number = {2-3},
    -pages = {357--365},
    year = {2005},
    -url = {https://doi.org/10.1016/j.tcs.2005.09.023},
    doi = {10.1016/j.tcs.2005.09.023},
    timestamp = {Wed, 17 Feb 2021 21:59:05 +0100},
    biburl = {https://dblp.org/rec/journals/tcs/Williams05.bib},
    bibsource = {dblp computer science bibliography, https://dblp.org},
}

@inproceedings{AmarilliEtAl2025,
  author       = {Antoine Amarilli and
                  Florin Manea and
                  Tina Ringleb and
                  Markus L. Schmid},
  editor       = {Pawel Gawrychowski and
                  Filip Mazowiecki and
                  Michal Skrzypczak},
  title        = {Linear Time Subsequence and Supersequence Regex Matching},
  booktitle    = {MFCS},
  series       = {LIPIcs},
  -pages        = {9:1--9:19},
  publisher    = {Schloss Dagstuhl - Leibniz-Zentrum f{\"{u}}r Informatik},
  year         = {2025},
  url          = {https://doi.org/10.4230/LIPIcs.MFCS.2025.9},
  doi          = {10.4230/LIPICS.MFCS.2025.9}
}

@inproceedings{BringmannEtAl2017,
  author       = {Karl Bringmann and
                  Allan Gr{\o}nlund and
                  Kasper Green Larsen},
  -editor       = {Chris Umans},
  title        = {A Dichotomy for Regular Expression Membership Testing},
  booktitle    = {FOCS},
  -pages        = {307--318},
  -publisher    = {{IEEE} Computer Society},
  year         = {2017},
  -url          = {https://doi.org/10.1109/FOCS.2017.36},
  doi          = {10.1109/FOCS.2017.36}
}

@MISC {cstheoryinfix,
    TITLE = {Efficiently finding an infix of a word that is rejected by an {NFA}},
    AUTHOR = {Antoine Amarilli},
    HOWPUBLISHED = {Theoretical Computer Science Stack Exchange},
    -NOTE = {URL:https://cstheory.stackexchange.com/q/55773 (version: 2025-10-13)},
    -EPRINT = {https://cstheory.stackexchange.com/q/55773},
    URL = {https://cstheory.stackexchange.com/q/55773}
}

@inproceedings{abboud2018more,
  title={More consequences of falsifying {SETH} and the orthogonal vectors conjecture},
  author={Abboud, Amir and Bringmann, Karl and Dell, Holger and Nederlof, Jesper},
  booktitle={STOC},
  -pages={253--266},
  year={2018},
  doi= {10.1145/3188745.3188938},
}

@article{BringmannNusser21,
  author       = {Karl Bringmann and
                  Andr{\'{e}} Nusser},
  title        = {Translating {H}ausdorff is hard: fine-grained lower bounds for {H}ausdorff
                  distance under translation},
  journal      = {J. Comput. Geom.},
  volume       = {13},
  number       = {2},
  -pages        = {30--50},
  year         = {2021},
  doi          = {10.20382/JOCG.V13I2A3}
}

\vfill
\doclicenseThis

\end{document}